\newtheorem{theorem}{Theorem}[section]
\newtheorem{lemma}[theorem]{Lemma}
\newtheorem{remark}[theorem]{Remark}
\newtheorem{example}[theorem]{Example}
\newcommand{\longthmtitle}[1]{\mbox{}{\bf \textit{(#1).}}}
\newcommand{\real}{\ensuremath{\mathbb{R}}}
\newcommand{\realpos}{\ensuremath{\mathbb{R}_{>0}}}
\newcommand{\realnonneg}{\ensuremath{\mathbb{R}_{\ge 0}}}
\newcommand{\realnonpos}{\ensuremath{\mathbb{R}_{\le 0}}}
\newcommand{\setdef}[2]{\{#1 \; | \; #2\}}
\newcommand{\setdefb}[2]{\big\{#1 \; | \; #2\big\}}
\newcommand{\matlab}[1]{{\fontfamily{phv}\selectfont \small{#1}}}
\newcommand{\Nc}{\mathcal{N}}
\newcommand\abf{\mathbf{a}}
\newcommand\bbf{\mathbf{b}}
\newcommand\cbf{\mathbf{c}}
\newcommand\dbf{\mathbf{d}}
\newcommand\fbf{\mathbf{f}}
\newcommand\gbf{\mathbf{g}}
\newcommand\mbf{\mathbf{m}}
\newcommand\nbf{\mathbf{n}}
\newcommand\pbf{\mathbf{p}}
\newcommand\ubf{\mathbf{u}}
\newcommand\vbf{\mathbf{v}}
\newcommand\wbf{\mathbf{w}}
\newcommand\xbf{\mathbf{x}}
\newcommand\Abf{\mathbf{A}}
\newcommand\Bbf{\mathbf{B}}
\newcommand\Fbf{\mathbf{F}}
\newcommand\Gbf{\mathbf{G}}
\newcommand\Ibf{\mathbf{I}}
\newcommand\Kbf{\mathbf{K}}
\newcommand\Mbf{\mathbf{M}}
\newcommand\Nbf{\mathbf{N}}
\newcommand\Wbf{\mathbf{W}}
\newcommand\alphab{{\boldsymbol{\alpha}}}
\newcommand\sigmab{{\boldsymbol{\sigma}}}
\newcommand\Sigmab{\boldsymbol{\Sigma}}
\newcommand\xib{\boldsymbol{\xi}}
\newcommand\nub{\boldsymbol{\nu}}
\newcommand\epsilonb{\boldsymbol{\epsilon}}
\newcommand{\ones}{\mathbf{1}}
\newcommand{\zeros}{\mathbf{0}}
\newcommand{\until}[1]{\{1,\dots,#1\}}
\newcommand\s{{\rm s}}
\newcommand\zls{\{0, \ell, \s\}}
\newcommand\ssm{{\raisebox{2pt}{\scriptsize
      $\mathfrak{0}$}}}
\newcommand\ssp{{\raisebox{2pt}{\scriptsize
      $\mathfrak{1}$}}}
\newcommand\ssmp{{\raisebox{1pt}{\scriptsize $\mathfrak{\text{all}}$}}}
\renewcommand{\footnoterule}{%
  \hspace{3pt} \hrule width 0.4\textwidth height 0.5pt
  \kern 2pt
}
\renewcommand*\env@matrix[1][\arraystretch]{%
  \edef\arraystretch{#1}%
  \hskip -\arraycolsep
  \let\@ifnextchar\new@ifnextchar
  \array{*\c@MaxMatrixCols c}}
\newcommand{\oprocendsymbol}{\hbox{$\square$}}
\newcommand{\oprocend}{\relax\ifmmode\else\unskip\hfill\fi\oprocendsymbol}
\tikzset{->-/.style={decoration={
  markings,
  mark=at position #1 with {\arrow{>}}},postaction={decorate}}}
\title{\Large \bf Hierarchical Selective Recruitment in
  Linear-Threshold Brain Networks
  \\
  Part II: Multi-Layer Dynamics and Top-Down Recruitment%
  \thanks{A preliminary version appeared as~\cite{EN-JC:18-cdc} at the
    IEEE Conference on Decision and Control}}
\author{Erfan Nozari \quad Jorge Cort\'es%
  \thanks{Erfan Nozari is with the Department of Electrical and Systems Engineering, University of Pennsylvania, enozari@seas.upenn.edu. Jorge Cort\'es is with the Department of
    Mechanical and Aerospace Engineering, University of California, San Diego,
    cortes@ucsd.edu.}}
\begin{document}

\maketitle
\thispagestyle{empty}
\pagestyle{empty}

\begin{abstract}
  Goal-driven selective attention (GDSA) is a remarkable function that
  allows the complex dynamical networks of the brain to support
  coherent perception and cognition. Part I of this two-part paper
  proposes a new control-theoretic framework, termed hierarchical
  selective recruitment (HSR), to rigorously explain the emergence of
  GDSA from the brain's network structure and dynamics.  This part
  completes the development of HSR by deriving conditions on the joint
  structure of the hierarchical subnetworks that guarantee top-down
  recruitment of the task-relevant part of each subnetwork by the
  subnetwork at the layer immediately above, while inhibiting the
  activity of task-irrelevant subnetworks at all the hierarchical
  layers. To further verify the merit and applicability of this
  framework, we carry out a comprehensive case study of selective
  listening in rodents and show that a small network with HSR-based
  structure and minimal size can explain the data with remarkable
  accuracy while satisfying the theoretical requirements of HSR. Our
  technical approach relies on the theory of switched systems and
  provides a novel converse Lyapunov theorem for state-dependent
  switched affine systems that is of independent interest.
\end{abstract}

\section{Introduction}\label{sec:intro}

Our ability to construct a dynamic yet coherent perception of the
world, despite the numerous parallel sources of information that
affect our senses, is to a great extent reliant on the brain's
capability to prioritize the processing of task-relevant information
over task-irrelevant distractions according to one's goals and
desires. This capability, commonly known as goal-driven selective
attention (GDSA), has been the subject of extensive research over the
past decades. Despite major advances, a fundamental understanding of
GDSA and, in particular, how it emerges from the dynamics of the
underlying neuronal networks, is still missing. The aim of this work
is to reduce this gap by bringing tools and insights from systems and
control theory into these questions from neuroscience.

In this two-part paper, we propose the novel theoretical framework of
Hierarchical Selective Recruitment (HSR) for GDSA. As stated in Part
I, HSR consists of a novel hierarchical model of brain organization, a
set of analytical results regarding the multi-timescale dynamics of
this model, and a careful translation between the properties of these
dynamics and well known experimental observations about GDSA. Inspired
and supported by extensive experimental
research~\cite{ECC:53,AMT:69,JM-RD:85,BCM:93,RD-JD:95,SK-PD-RD-LGU:98,LI-CK:01,MAP-GMD-SK:04,NL:05,JJF-ACS:11,AG-ACN:12,CCR-MRD:14,MG-KH-EN:16}, %
HSR relies on four major assumptions about the neuronal mechanisms
underlying GDSA. These include (i) the brain's hierarchical
organization, so that (cognitively-)higher areas provide control
inputs to the activity of lower level
ones~\cite{RD-JD:95,LI-CK:01,MAP-GMD-SK:04,NL:05,AG-ACN:12,CCR-MRD:14,MG-KH-EN:16}, %
(ii) its sparse coding, so that task-relevant and task-irrelevant
stimuli is represented and processed by sufficiently distinct neuronal
populations (particularly if the two stimuli differ in major or
multiple properties, such as location, sensory modality,
etc.)~\cite{JM-RD:85,BCM:93,RD-JD:95,SK-PD-RD-LGU:98,LI-CK:01,MAP-GMD-SK:04,AG-ACN:12,MG-KH-EN:16},
(iii) the distributed and graded nature of GDSA, so that selective
attention happens at multiple layers of the
hierarchy~\cite{AMT:69,BCM:93,RD-JD:95,SK-PD-RD-LGU:98,LI-CK:01,MAP-GMD-SK:04,JJF-ACS:11,AG-ACN:12,MG-KH-EN:16},
and (iv) the concurrence of the suppression and enhancement of
task-irrelevant and task-relevant activity,
resp.~\cite{ECC:53,AMT:69,JM-RD:85,BCM:93,RD-JD:95,SK-PD-RD-LGU:98,MAP-GMD-SK:04,NL:05,JJF-ACS:11,AG-ACN:12,CCR-MRD:14,MG-KH-EN:16} %
(formulated as \emph{selective inhibition} and \emph{top-down
  recruitment} in HSR, resp.).

The hierarchical structure of the brain plays a key role in both
selective inhibition and top-down recruitment. The position of brain
areas along this hierarchy is determined based on the direction in
which sensory information and decisions flow, but also by the
separation of timescales between the areas.  As expected, the
timescale of the internal dynamics of the neuronal populations
increases (becomes slower) as one moves up the
hierarchy~\cite{UH-EY-IV-DJH-NR:08,CJH-TT-THD-LJS-CEC-OD-WKD-NR-DJH-UH:12,BG-EE-GH-AG-AK:12,UH-JC-CJH:15,MGM-DAK-SLT-GKA:16,JDM-AB-DJF-RR-JDW-XC-CP-TP-HS-DL-XW:14,RC-KK-MG-HK-XW:15}.
Although this hierarchy of timescales is well known in neuroscience,
its role in GDSA has remained, to the best of our knowledge,
uncharacterized. Using tools from singular perturbation theory, we
here reveal the critical role played by this separation of timescales
in the top-down recruitment of the task-relevant subnetworks and
provide rigorous conditions on the joint structure of all layers that
guarantee such recruitment.

\subsubsection*{Literature Review}

The hierarchical organization of the brain has been recognized for
decades~\cite{NT:50,ARL:70,DJF-DCEV:91} and applies to multiple
aspects of brain structure and function. These aspects include (i)
network
topology~\cite{DJF-DCEV:91,AK-ATR-EW-GB-RK:10,GZL-CZ-JK:10,NTM-JV-PC-AF-RQ-CH-CL-PM-PG-SU-PB-CD-KK-HK:14}
(where nodes are assigned to layers based on their position on
bottom-up and top-down pathways), (ii) encoding
properties~\cite{PL:98,DB-MD:09} (where nodes that have larger
receptive fields and/or encode more abstract stimulus properties
constitute higher layers), (iii) dynamical
timescale~\cite{UH-EY-IV-DJH-NR:08,CJH-TT-THD-LJS-CEC-OD-WKD-NR-DJH-UH:12,BG-EE-GH-AG-AK:12,UH-JC-CJH:15,MGM-DAK-SLT-GKA:16,JDM-AB-DJF-RR-JDW-XC-CP-TP-HS-DL-XW:14,RC-KK-MG-HK-XW:15,AK-ATR-EW-GB-RK:10,NTM-JV-PC-AF-RQ-CH-CL-PM-PG-SU-PB-CD-KK-HK:14,JPG-MM-JIL:16,CC-HA-DB-YB-SM:14,CAR-EP-SP-CDH:17,SJK-JD-KJF:08,YY-JT:08}
(where nodes are grouped into layers according to the timescale of
their dynamics), (iv) nodal
clustering~\cite{DSB-ETB-BAV-VSM-DRW-AML:08,DM-RL-AF-KE-ETB:09,DM-RL-ETB:10,ZZ-HF-JL:13}
(where nodes only constitute the leafs of a clustering tree), and (v)
oscillatory activity~\cite{PL-ASS-KHK-IU-GK-CES:05} (where layers
correspond to nested oscillatory frequency bands). Note that while
hierarchical layers are composed of brain regions in (i)-(iii), this
is not the case for (iv) and (v). The hierarchies (i)-(iii) are
remarkably similar (in terms of the assignment of brain regions to the
layers of the hierarchy), and here we particularly focus on (iii) (the
timescale separation between hierarchical layers) as it plays a
pivotal role in HSR.

Studies of timescale separation between cortical regions are more
recent. Several experimental works have demonstrated a clear increase
in intrinsic timescales as one moves up the hierarchy using indirect
measures such as the length of stimulus presentation that elicits a
response~\cite{UH-EY-IV-DJH-NR:08,CJH-TT-THD-LJS-CEC-OD-WKD-NR-DJH-UH:12},
resonance frequency~\cite{BG-EE-GH-AG-AK:12}, the length of the
largest time window over which the responses to successive stimuli
interfere~\cite{UH-JC-CJH:15}, and how quickly the activation level of
any brain region can track changes in sensory
stimuli~\cite{MGM-DAK-SLT-GKA:16}. Direct evidence for this
hierarchical separation of timescales was indeed provided
in~\cite{JDM-AB-DJF-RR-JDW-XC-CP-TP-HS-DL-XW:14} using the decay rate
of spike-count autocorrelation functions. This was shown even more
comprehensively in~\cite{RC-KK-MG-HK-XW:15} using linear-threshold
rate models and the concept of \emph{continuous
  hierarchies}~\cite{AK-ATR-EW-GB-RK:10,NTM-JV-PC-AF-RQ-CH-CL-PM-PG-SU-PB-CD-KK-HK:14}
(whereby the layer of each node can vary continuously according to its
intrinsic timescale, therefore removing the rigidity and arbitrariness
of node assignment in classical hierarchical
structures). Interestingly, recent studies show that this timescale
variability may have roots not only in synaptic dynamics of individual
neurons~\cite{JPG-MM-JIL:16}, but also in sub-neuronal genetic
factors~\cite{CC-HA-DB-YB-SM:14} as well as supra-neuronal network
structures~\cite{CAR-EP-SP-CDH:17}. In terms of applications,
computational models of motor control were perhaps the first to
exploit this cortical hierarchy of
timescales~\cite{SJK-JD-KJF:08,YY-JT:08}. Despite the vastness of the
literature on its roots and applications, we are not aware of any
theoretical analysis of the effects of this separation of timescales
on the hierarchical dynamics of neuronal networks.

The accompanying Part I~\cite{EN-JC:18-tacI} proposes the HSR
framework, which is strongly rooted in this separation of timescales.
Part I analyzes the internal dynamics of the subnetworks at each layer
of the hierarchy. Using the class of linear-threshold network models,
it characterizes the networks that have a unique equilibrium, are
locally/globally asymptotically stable, and have bounded
trajectories. In Part I, we also provide a detailed account of
feedforward and feedback mechanisms for selective inhibition between
any two layers of the hierarchy and show that the internal dynamical
properties of the task-relevant subnetwork at each layer is the sole
determiner of the dynamical properties achievable under selective
inhibition.

In this paper, we complete the development of the HSR framework for
GDSA by analyzing the mechanisms for top-down recruitment of the
task-relevant subnetwork, combining it with the feedforward and
feedback mechanisms of selective inhibition, and generalizing the
combination to arbitrary number of layers.  Top-down recruitment is
one of the most experimentally well-documented phenomena in selective
attention, see,
e.g.,~\cite{JM-RD:85,BCM:93,RD-JD:95,SK-PD-RD-LGU:98,LI-CK:01,MAP-GMD-SK:04,AG-ACN:12,CCR-MRD:14,MG-KH-EN:16}. While
the enhancement (a.k.a. \emph{modulation}) of activity in the
task-relevant populations is the simplest form of recruitment, our
model is general and thus also allows for more complex observed forms
of recruitment, such as changes in the receptive fields%
\footnote{The receptive field of each neuron is the area in the
  stimulus space where the neuron is responsive to the presence of
  stimuli.\label{ft:rf}}%
~\cite{YY-MC:98,JBF-ME-SVD-SAS:07,KA-VMS-ST:09}.

In the analysis of top-down recruitment, we use tools from singular
perturbation theory to rigorously leverage this separation of
timescales. The classical result on singularly perturbed ODEs goes
back to Tikhonov~\cite{ANT:52}, \cite[Thm 11.1]{HKK:02} and has since
inspired an extensive literature, see,
e.g.~\cite{ABV:94,DN:02,JKK-JDC:12,REO:12}. Tikhonov's result,
however, requires smoothness of the vector fields, which is not
satisfied by linear-threshold dynamics. Fortunately, several works
have sought extensions to non-smooth differential equations and even
differential
inclusions~\cite{ALD-VMV:83,ALD-IIS:90,MQ:95,AD-TD-IIS:96},
culminating in the work~\cite{VV:97} which we use here. Similar to
Tikhonov's original work, \cite{VV:97} only applies to finite
intervals. Extensions to infinite intervals exist~\cite{FW:05,GG:05}
but, as expected, they require asymptotic stability of the
reduced-order model (ROM) which we do not in general have.%
\footnote{Recall that in two-timescale dynamics, ROM results from
  replacing the fast variable with its equilibrium (reducing order to
  that of the slow variable).}

\subsubsection*{Statement of Contributions}

The paper has four main contributions. First, we use the timescale
separation in hierarchical brain networks and the theory of singular
perturbations to provide an analytic account of top-down recruitment
in terms of conditions on the network structure. These conditions
guarantee the stability of the task-relevant part of a (fast)
linear-threshold subnetwork towards a reference trajectory set by a
slower subnetwork. This, in particular, subsumes the most classical
enhancement (strengthening) of the activity of task-relevant nodes but
is more general and can account for recent, complex observations such
as changes in neuronal receptive fields under
GDSA~$^{\!\!\ref{ft:rf}}$. We further combine these results with the
results of Part I to allow for simultaneous selective inhibition and
top-down recruitment, as observed in GDSA.  Second, we extend this
combination to hierarchical structures with an arbitrary number of
layers, as observed in nature, to yield a fully developed HSR
framework.  Here, we also derive an extension of the stability results
in Part I that guarantees GES of a multi-layer multiple timescale
linear-threshold network. Third, to validate the proposed HSR
framework, we provide a detailed case study of GDSA in real brain
networks. Using single-unit recordings from two brain regions of
rodents performing a selective listening task, we provide an in-depth
analysis of appropriate choices of neuronal populations in each brain
region as well as the timescales of their dynamics. We propose a novel
hierarchical structure for these populations, tune the parameters of
the resulting network using a novel objective function, and show that
the resulting structure conforms to the theoretical results and
requirements of HSR while explaining more than $90\%$ of variability
in the data. As part of our technical approach, our fourth and final
contribution is a novel converse Lyapunov theorem that extends the
state of the art on GES for state-dependent switched affine
systems. This result only requires continuity of the vector field and
guarantees the existence of an infinitely smooth quadratically-growing
Lyapunov function if the dynamics is GES. Because of independent
interest, we formulate and prove the result for general
state-dependent switched affine systems.%
\footnote{Throughout the paper, we use the following notation.
  $\real$, $\realnonneg$, $\realnonpos$, and $\realpos$ denote the set
  of reals, nonnegative reals, nonpositive reals, and positive reals,
  resp.  $\ones_n$, $\zeros_n$, $\zeros_{p \times n}$, and $\Ibf_n$
  stand for the $n$-vector of all ones, the $n$-vector of all zeros,
  the $p$-by-$n$ zero matrix, and the identity $n$-by-$n$ matrix,
  resp. The subscripts are omitted when clear from the context.  When
  a vector $\xbf$ or matrix $\Abf$ are block-partitioned, $\xbf_i$ and
  $\Abf_{ij}$ refer to the $i$th block of $\xbf$ and $(i, j)$th block
  of $\Abf$, resp. Given $\Abf \in \real^{n \times n}$, its
  element-wise absolute value and spectral radius are $|\Abf|$ and
  $\rho(\Abf)$, resp.  $\|\cdot\|$ denotes vector $2$-norm.  For $x
  \in \real$ and $m \in \realpos \cup \{\infty\}$, %
  $[x]_0^m = \min\{\max\{x, 0\}, m\}$, which is the projection of $x$
  onto $[0, m]$. When $\xbf \in \real^n$ and $\mbf \in \realpos^n \cup
  \{\infty\}^n$, we similarly define $[\xbf]_\zeros^\mbf =
  [[x_1]_0^{m_1} \quad \cdots \quad [x_n]_0^{m_n}]^T$. For $\sigmab
  \in \zls^n$, $\Sigmab^\ell = \Sigmab^\ell(\sigmab)$ is a diagonal
  matrix with $\Sigma^\ell_{ii} = 1$ if $\sigma_i = \ell$ and
  $\Sigma^\ell_{ii} = 0$ otherwise. $\Sigmab^\s$ is defined
  similarly. We set the convention that $\Sigmab^\s \mbf = \zeros$ if
  $\Sigmab^\s = \zeros$ and $\mbf = \infty \ones_n$. For $D \subseteq
  \real^n$ and $\Abf \in \real^{p \times n}, \bbf \in \real^p$, we let
  $\Abf D + \bbf = \setdef{\Abf \xbf + \bbf}{\xbf \in D}$.}

\begin{figure}[tbh]
  \begin{center} 
    \begin{tikzpicture} 
          \node[circle, draw, line width=0.6pt, inner sep=-0.4pt] (1) {\includegraphics[width=16pt]{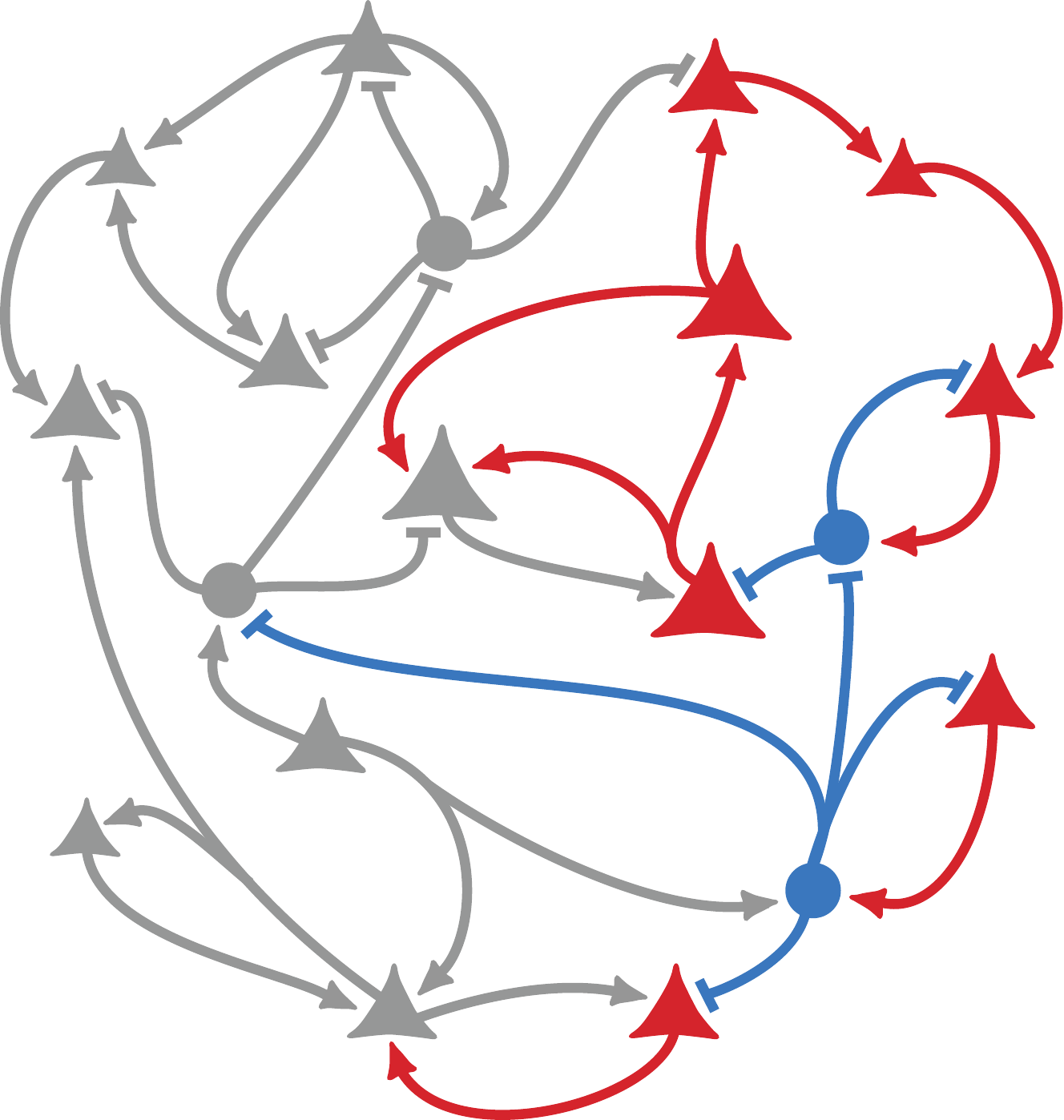}};
          \draw[-, line width=0.3pt] (1.north) to (1.south);
          \node[below of=1, yshift=-10pt, circle, draw, line width=0.6pt, inner sep=-0.4pt] (2) {\includegraphics[width=16pt]{EI_main_half_gray}};
          \draw[-, line width=0.3pt] (2.north) to (2.south);
          \node[below of=2, yshift=-10pt, circle, draw, line width=0.6pt, inner sep=-0.4pt] (3) {\includegraphics[width=16pt]{EI_main_half_gray}};
          \draw[-, line width=0.3pt] (3.north) to (3.south);
          \draw[latex-latex, line width=0.3pt, shorten <=1pt, shorten >=1pt] (1.270) to (2.90);
          \draw[latex-latex, line width=0.3pt, shorten <=1pt, shorten >=1pt] (2.270) to (3.90);
          \node[below of=3, yshift=10pt] {$\vdots$};
          \node[above of=1, yshift=-5pt] {$\vdots$};
      \node[left of=2, xshift=-20pt, yshift=40pt, scale=0.8] (i-1) {Subnetwork $i - 1$};
      \node[below of=i-1, yshift=-11pt, scale=0.8] (i) {Subnetwork $i$};
      \node[below of=i, yshift=-12pt, scale=0.8] (i+1) {Subnetwork $i + 1$};
      \node[right of=2, xshift=50pt, circle, draw, line width=1pt, inner sep=-2pt] (i-big) {\includegraphics[width=40pt]{EI_main_half_gray}};
      \draw[-] (i-big.north) to (i-big.south);
      \draw[shorten <=5pt, shorten >=25pt] (2.75) to (i-big.85);
      \draw[shorten <=5pt, shorten >=25pt] (2.285) to (i-big.275);
      \node[below of=i-big, xshift=-30pt, yshift=-25pt, scale=0.8] (i1) {\parbox{50pt}{\centering $\Nc_i^\ssm$ \\ (inhibited, state = $\xbf_i^\ssm$)}};
      \node[below of=i-big, xshift=30pt, yshift=-25pt, scale=0.8] (i2) {\parbox{50pt}{\centering $\Nc_i^\ssp$ \\ (recruited, state = $\xbf_i^\ssp$)}};
      \draw[-latex, bend right=0, shorten <=-8pt, shorten >=3pt] (i1.45) to (i-big.250);
      \draw[-latex, bend right=0, shorten <=-8pt, shorten >=3pt] (i2.135) to (i-big.290);
    \end{tikzpicture} 
  \end{center} 
  \caption{Hierarchical network structure considered in this
    work.}\label{fig:multi}
\vspace*{-1.5ex}
\end{figure}

\section{Problem Statement}\label{sec:prob-state}

The problem formulation is the same as in Part
I~\cite{EN-JC:18-tacI}. We include here a streamlined description for
a self-contained exposition.  We consider a hierarchical neuronal
network $\Nc$, cf. Figure~\ref{fig:multi}, whereby the nodes in each
layer $\Nc_i$ are further decomposed into a task-irrelevant part
$\Nc_i^\ssm$ and a task-relevant part $\Nc_i^\ssp$. The state
evolution of each layer $\Nc_i$ is modeled with linear-threshold
network dynamics of the form
\begin{align}\label{eq:dyn}
  \notag \tau_i \dot \xbf_i(t) = -\xbf_i(t) + [\Wbf_{i, i} \xbf_i(t) +
  \dbf_i&(t)]_\zeros^{\mbf_i}, \ \zeros \le \xbf_i(0) \le \mbf_i,
  \\
  &\mbf_i \in \realpos^n \cup \{\infty\}^n, \!\!
\end{align}
where $\xbf_i \in \real^{n_i}$, $\Wbf_{i, i} \in \real^{n_i \times
  n_i}$, $\dbf_i \in \real^{n_i}$, and $\mbf \in \realpos^n$ denote
the state, internal synaptic connectivity, external inputs, and state
upper bounds of $\Nc_i$, resp.%
\footnote{We note that this is a standard and widely used model in
  computational neuroscience, as mentioned in Part
  I~\cite{EN-JC:18-tacI}. Please see therein for a detailed literature
  review of its origins and prior analysis.}

The development of HSR is structured in four thrusts:
\begin{enumerate}[wide]
\item the analysis of the relationship between structure ($\Wbf_{i,
    i}$) and dynamical behavior
  for each subnetwork when operating in isolation from the rest of the
  network ($\dbf_i(t) \equiv \dbf_i$);
\item the analysis of the conditions on the joint structure of each
  two successive layers $\Nc_i$ and $\Nc_{i+1}$ that allows for
  selective inhibition of $\Nc_{i+1}^\ssm$ by its input from $\Nc_i$,
  being equivalent to the stabilization of $\Nc_{i+1}^\ssm$ to the
  origin %
  (inactivity);
\item the analysis of the conditions on the joint structure of each
  two successive layers $\Nc_i$ and $\Nc_{i+1}$ that allows for
  top-down recruitment of $\Nc_{i+1}^\ssp$ by its input from $\Nc_i$,
  being equivalent to the stabilization of $\Nc_{i+1}^\ssp$ toward a
  desired trajectory set by $\Nc_i$ (activity);
\item the combination of (ii) and (iii) in a unified framework and its
  extension to the complete $N$-layer network of networks.
\end{enumerate} 
Problems (i) and (ii) are addressed in Part I~\cite{EN-JC:18-tacI},
while problems (iii) and (iv) are the subject of this work.

We let
\begin{align}
  \dbf_i(t) = \Bbf_i \ubf_i(t) + \tilde \dbf_i(t), \qquad \ubf_i \in
  \realnonneg^{p_i},
\end{align} 
where $\ubf_i$ is the top-down control used
for inhibition of $\Nc_i^\ssm$. While in Part I we assumed for
simplicity that $\tilde \dbf_i(t)$ is given and constant, we here
consider its complete form
\begin{align*}
  \tilde \dbf_i(t) = \Wbf_{i, i-1} \xbf_{i-1}(t) + \Wbf_{i, i+1}
  \xbf_{i+1}(t) + \cbf_i,
\end{align*}
where the inter-layer connectivity matrices $\Wbf_{i, i-1}$ and
$\Wbf_{i, i+1}$ have appropriate dimensions and $\cbf_i \in
\real^{n_i}$ captures un-modeled background activity and possibly
nonzero activation thresholds. Substituting these into~\eqref{eq:dyn},
the dynamics of each layer $\Nc_i$ is given by
{\interdisplaylinepenalty=10000
  \begin{align}\label{eq:dyn-multi} \tau_i \dot \xbf_i(t) =
    -\xbf_i(t) + [&\Wbf_{i,i} \xbf_i(t) + \Wbf_{i, i-1} \xbf_{i-1}(t)
    \\
    \notag &+ \Wbf_{i, i+1} \xbf_{i+1}(t) + \Bbf_i \ubf_i(t) +
    \cbf_i]_\zeros^{\mbf_i}.
  \end{align}
}
Also following Part I, we partition network variables as %
\begin{align}\label{eq:wbi} 
  \notag \xbf_i = \begin{bmatrix}[1.3] \xbf_i^\ssm \\ \xbf_i^\ssp \end{bmatrix}, \quad
  \Wbf_{i, j} &= \begin{bmatrix}[1.3]
    \Wbf_{i, j}^{\ssm\ssm} & \Wbf_{i, j}^{\ssm\ssp}
    \\
    \Wbf_{i, j}^{\ssp\ssm} & \Wbf_{i, j}^{\ssp\ssp}
  \end{bmatrix} , \quad
  \Bbf_i = \begin{bmatrix}[1.3] \Bbf_i^\ssm \\ \zeros \end{bmatrix}, 
  \\
  \cbf_i &= \begin{bmatrix}[1.3] \cbf_i^\ssm \\ \cbf_i^\ssp \end{bmatrix}, \quad 
  \mbf_i = \begin{bmatrix}[1.3] \mbf_i^\ssm \\ \mbf_i^\ssp,\end{bmatrix}
\end{align}
where $\xbf_i^\ssm \in \real^{r_i}$ for all $i, j \in \until{N}$. By
convention, $\Wbf_{1, 0} = \zeros$, $\Wbf_{N, N+1} = \zeros$, and $r_1
= 0$ (so $\Bbf_1 = \zeros$ and the first subnetwork has no inhibited
part).  We assume that the hierarchical layers have sufficient
timescale separation, i.e.,
\begin{align}\label{eq:tau}
  \tau_1 \gg \tau_2 \gg \dots \gg \tau_N.
\end{align} 
Finally, let
$\epsilonb = (\epsilon_1, \dots, \epsilon_{N-1})$, with
\begin{align}\label{eq:eps}
  \epsilon_i = \tau_{i+1}/\tau_i, \qquad i = \until{N-1}.
\end{align} 
Next, we first develop the main concepts and results for
the case of bilayer networks (Section~\ref{sec:bilayer}) and then
extend them to the setup with $N$ layers
(Section~\ref{sec:multilayer}).
 
\section{Selective Recruitment in Bilayer Networks}\label{sec:bilayer}

In this section we tackle the analysis of simultaneous selective
inhibition and top-down recruitment in a two-layer network. We
consider the same dynamics as in~\eqref{eq:dyn-multi} for the
lower-level subnetwork $\Nc_2$, but temporarily allow the dynamics of
$\Nc_1$ to be arbitrary.  This setup allows us to study the key
ingredients of selective recruitment without the extra complications
that arise from the multilayer interconnections of linear-threshold
subnetworks and is the basis for our later developments.  Further, by
keeping the higher-level dynamics arbitrary, the results presented
here are also of independent interest beyond HSR, as they allow for a
broader range of external inputs $\dbf(t)$ than those generated by
linear-threshold dynamics. This can be of interest in, for example,
direct brain stimulation applications where $\xbf_1(t)$ is generated
and applied by a computer in order to control the activity $\xbf_2(t)$
of certain areas of the brain. In this view, appropriate stimulation
signals $\xbf_1(t)$ may be considered as an augmentation of the
natural hierarchy of the brain if they vary slow enough to
satisfy~\eqref{eq:tau}.  Section~\ref{sec:multilayer} builds on the
insights obtained here to generalize this framework to the multilayer
case described in Section~\ref{sec:prob-state}.

For any $\Wbf \in \real^{n \times n}$, define $h: \real^n
\rightrightarrows \realnonneg^n$ by
\begin{align}\label{eq:h}
  h(\dbf) = h_{\Wbf, \mbf}(\dbf) \triangleq \setdef{\xbf \in
    \realnonneg^n}{\xbf = [\Wbf \xbf + \dbf]_\zeros^\mbf},
\end{align}
which maps any constant input~$\dbf \in \real^n$ to the corresponding
set of the equilibria of~\eqref{eq:dyn}.  Due to the switched-affine
form of the dynamics, $h$ has the piecewise-affine form
\begin{align}\label{eq:h-pa}
  h(\dbf) &= \setdef{\Fbf_\sigmab \dbf + \fbf_\sigmab}{\Gbf_\sigmab
    \dbf + \gbf_\sigmab \ge \zeros, \, \sigmab \in \zls^n},
  \\
  \notag \Fbf_\sigmab &= (\Ibf - \Sigmab^\ell \Wbf)^{-1} \Sigmab^\ell,
  \ \ \fbf_\sigmab = (\Ibf - \Sigmab^\ell \Wbf)^{-1} \Sigmab^\s \mbf,
  \\
  \notag \Gbf_\sigmab &= \begin{bmatrix} \Sigmab^\ell + \Sigmab^\s -
    \Ibf & \Sigmab^\ell & -\Sigmab^\ell & \Sigmab^\s \end{bmatrix}^T
  \Fbf_\sigmab,
  \\
  \notag \gbf_\sigmab &= \begin{bmatrix} \fbf_\sigmab^T(\Sigmab^\ell
    \!\!+\!\! \Sigmab^\s \!\!-\!\! \Ibf) & \!\!\fbf_\sigmab^T
    \Sigmab^\ell & \!\!(\mbf \!\!-\!\! \fbf_\sigmab)^T \Sigmab^\ell &
    \!\!(\fbf_\sigmab \!\!-\!\! \mbf)^T \Sigmab^\s \end{bmatrix}^T.
\end{align}
The existence and uniqueness of equilibria of~\eqref{eq:dyn} precisely
corresponds to $h$ being single-valued on $\real^n$, in which case we
let $h: \real^n \to \realnonneg^n$ be an ordinary function.  For our
subsequent analysis we need $h$ to be Lipschitz, as stated next. The
proof of this result is a special case of Lemma~\ref{lem:h-lip-gen}
and thus omitted.

\begin{lemma}\longthmtitle{Lipschitzness of $h$}\label{lem:h-lip}
  Let $h$ be as in~\eqref{eq:h} and single-valued%
  \footnote{It is possible to show, using the same proof technique,
    that $h$ is Lipschitz in the Hausdorff metric even when it is
    multiple-valued (recall that the Hausdorff distance between two
    sets $S_1, S_2 \in \real^n$ is defined as $\max\{\sup_{\abf \in
      S_1} \inf_{\bbf \in S_2} \|\abf - \bbf\|, \sup_{\bbf \in S_2}
    \inf_{\abf \in S_1} \|\abf - \bbf\|\}$).}%
  on $\real^n$. Then, it is globally Lipschitz on $\real^n$.
\end{lemma}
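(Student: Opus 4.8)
The plan is to exploit the piecewise-affine representation~\eqref{eq:h-pa}. Since $h$ is single-valued on $\real^n$, the finitely many affine pieces $\dbf \mapsto \Fbf_\sigmab \dbf + \fbf_\sigmab$ (indexed by $\sigmab \in \zls^n$, of which there are at most $3^n$) agree on the overlaps of the polyhedral regions $R_\sigmab = \setdef{\dbf}{\Gbf_\sigmab \dbf + \gbf_\sigmab \ge \zeros}$ whose union is all of $\real^n$. A finite, continuous, piecewise-affine function on $\real^n$ is globally Lipschitz, with constant $L = \max_{\sigmab} \|\Fbf_\sigmab\|$ (operator $2$-norm), so the essential content of the lemma is (a) continuity of $h$ across adjacent regions, and (b) the claim that the relevant constant is finite. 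Both follow from the structure of the problem, but (a) requires an argument.

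First I would establish well-definedness and continuity. For any fixed $\dbf$, the set $h(\dbf)$ is nonempty (a linear-threshold map $\xbf \mapsto [\Wbf\xbf+\dbf]_\zeros^\mbf$ on the compact convex box $\prod_i [0,m_i]$ — or, in the unbounded coordinates, one restricts to an invariant compact box after noting trajectories are bounded — has a fixed point by Brouwer), and by hypothesis it is a singleton, so $h:\real^n \to \realnonneg^n$ is a genuine function. Continuity: take $\dbf^{(k)} \to \dbf$ and $\xbf^{(k)} = h(\dbf^{(k)})$; these lie in a bounded set, so along a subsequence $\xbf^{(k)} \to \xbf^\star$, and passing to the limit in $\xbf^{(k)} = [\Wbf\xbf^{(k)} + \dbf^{(k)}]_\zeros^\mbf$ using continuity of the projection $[\cdot]_\zeros^\mbf$ gives $\xbf^\star = [\Wbf\xbf^\star+\dbf]_\zeros^\mbf$, i.e. $\xbf^\star \in h(\dbf) = \{h(\dbf)\}$; since every subsequential limit equals $h(\dbf)$, the whole sequence converges. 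Hence $h$ is continuous on $\real^n$, and in particular the affine pieces $\Fbf_\sigmab\dbf+\fbf_\sigmab$ and $\Fbf_{\sigmab'}\dbf+\fbf_{\sigmab'}$ coincide on $R_\sigmab \cap R_{\sigmab'}$ whenever this intersection has nonempty interior — this is exactly what makes the global Lipschitz bound work across region boundaries.

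Next I would assemble the Lipschitz estimate. Given arbitrary $\dbf_1, \dbf_2 \in \real^n$, connect them by the segment $\dbf(t) = (1-t)\dbf_1 + t\dbf_2$, $t \in [0,1]$. The segment meets only finitely many of the polyhedra $R_\sigmab$, and is partitioned into finitely many closed subsegments $[t_{j},t_{j+1}]$ on each of which $h(\dbf(t))$ agrees with a single affine piece $\Fbf_{\sigmab_j}\dbf(t) + \fbf_{\sigmab_j}$ (using continuity to glue at the breakpoints $t_j$). Then
\begin{align*}
  \|h(\dbf_1) - h(\dbf_2)\|
  &\le \sum_j \|\Fbf_{\sigmab_j}(\dbf(t_{j+1}) - \dbf(t_j))\|
  \le \Big(\max_{\sigmab} \|\Fbf_\sigmab\|\Big) \sum_j \|\dbf(t_{j+1}) - \dbf(t_j)\|
  = \Big(\max_{\sigmab} \|\Fbf_\sigmab\|\Big) \|\dbf_1 - \dbf_2\|,
\end{align*}
where the last equality is because the breakpoints are collinear and ordered along the segment. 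Since $\zls^n$ is finite, $L \triangleq \max_{\sigmab}\|(\Ibf-\Sigmab^\ell\Wbf)^{-1}\Sigmab^\ell\|$ is a finite constant depending only on $\Wbf$, and $h$ is globally $L$-Lipschitz.

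The main obstacle is a subtle well-posedness point rather than the Lipschitz estimate itself: one must be careful that each matrix inverse $(\Ibf - \Sigmab^\ell \Wbf)^{-1}$ appearing in~\eqref{eq:h-pa} actually exists for every $\sigmab$ whose region $R_\sigmab$ has nonempty interior. If $\Ibf - \Sigmab^\ell\Wbf$ is singular for some active $\sigmab$, the affine formula degenerates and single-valuedness would typically fail; so part of the argument is to observe that single-valuedness of $h$ on $\real^n$ forces invertibility on the regions that matter (on a region with empty interior, or where the corresponding inverse does not exist, that piece simply contributes no points to $h$, and can be discarded). Since the paper states this is a special case of the later Lemma~\ref{lem:h-lip-gen}, I expect the actual proof to handle this invertibility bookkeeping in the general (switched-affine) setting; here I would either invoke that or remark that it follows from the single-valuedness hypothesis together with the explicit region descriptions $\Gbf_\sigmab\dbf+\gbf_\sigmab\ge\zeros$.
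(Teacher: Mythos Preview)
Your proposal is correct and takes essentially the same approach as the paper's proof of Lemma~\ref{lem:h-lip-gen} (of which this lemma is declared a special case): connect $\dbf_1,\dbf_2$ by a segment, partition it into finitely many subsegments each lying in a single polyhedral region $R_{\sigmab}$, apply the affine formula on each piece, and sum via the triangle inequality together with collinearity to obtain the global constant $\max_{\sigmab}\|\Fbf_\sigmab\|$. Your preliminary continuity argument and the invertibility bookkeeping are more explicit than what the paper writes, but the core Lipschitz estimate is identical.
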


The main result of this section is as follows. 

\begin{theorem}\longthmtitle{Selective recruitment in bilayer %
     networks}\label{thm:sp-inhib}
   Consider the multilayer dynamics~\eqref{eq:dyn-multi} where $N =
   2$, $\Wbf_{2, 1}^{\ssm\ssp} = \zeros$, and $\cbf_2^\ssm = \zeros$
   but $\xbf_1(t)$ is generated by the dynamics
   \begin{align}\label{eq:dtilde}
     \tau_1 \dot \xbf_1(t) = \gamma(\xbf_1(t), \xbf_2(t), t).
   \end{align} 
   Let $h_2^\ssp = h_{\Wbf_{2, 2}^{\ssp\ssp}, \mbf_2^\ssp}$ as
   in~\eqref{eq:h}. If
   \begin{enumerate}
   \item $\gamma$ is measurable in $t$, locally bounded, and locally
     Lipschitz in $(\xbf_1, \xbf_2)$ uniformly in $t$;
   \item \eqref{eq:dtilde} has bounded solutions uniformly in
     $\xbf_2(t)$;
   \item $p_2 \ge r_2$;
   \item $\Wbf_{2, 2}^{\ssp\ssp}$ is such that $\tau \dot \xbf_2^\ssp
     = -\xbf_2^\ssp + [\Wbf_{2, 2}^{\ssp\ssp} \xbf_2^\ssp +
     \dbf_1^\ssp]_\zeros^{\mbf_2^\ssp}$ is GES towards a unique
     equilibrium for any constant $\dbf_1^\ssp$;
  \end{enumerate}
  then there exists $\Kbf_2 \in \real^{p_2 \times n_2}$ such that by
  using the feedback control $\ubf_2(t) = \Kbf_2 \xbf_2(t)$, one has
  \begin{align}\label{eq:tik}
    \lim\limits_{\epsilon_1 \to 0} \ \sup\limits_{t \in [\underline t
      , \bar t]} \Big\|\xbf_2(t) - \Big(\zeros_{r_2},
    h_2^\ssp\big(\Wbf_{21}^{\ssp\ssp} \xbf_1(t) +
    \cbf_2^\ssp\big)\Big)\Big\| = 0, 
  \end{align}
  for any $0 < \underline t < \bar t < \infty$, with $\epsilon_1$
  given in~\eqref{eq:eps}. Further, if the dynamics of $\xbf_2$ is
  monotonically bounded%
  \footnote{See~\cite[Def V.1]{EN-JC:18-tacI}.}, there also exists a
  feedforward control $\ubf_2(t) \equiv \bar \ubf_2$ such
  that~\eqref{eq:tik} holds for any $0 < \underline t < \bar t <
  \infty$ and arbitrary $\Wbf_{2, 1}^{\ssm\ssp}$ and~$\cbf_2^\ssm$.
\end{theorem}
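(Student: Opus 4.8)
The plan is to recast the closed-loop bilayer system in standard singularly-perturbed form, with $\epsilon_1$ of~\eqref{eq:eps} as the singular parameter, and to invoke the non-smooth Tikhonov-type theorem of~\cite{VV:97}. Rescaling time by $\tau_1$, and recalling that $r_1 = 0$ (so $\xbf_1 = \xbf_1^\ssp$), the system~\eqref{eq:dtilde}--\eqref{eq:dyn-multi} with $N = 2$ and $\ubf_2(t) = \Kbf_2 \xbf_2(t)$ becomes $\xbf_1' = \gamma(\xbf_1, \xbf_2, t)$, $\;\epsilon_1 \xbf_2' = -\xbf_2 + [(\Wbf_{2,2} + \Bbf_2 \Kbf_2)\xbf_2 + \Wbf_{2,1}\xbf_1 + \cbf_2]_\zeros^{\mbf_2}$, so $\xbf_1$ is the slow and $\xbf_2$ the fast variable. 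First I would design $\Kbf_2 = [\Kbf_2^\ssm \ \ \Kbf_2^\ssp]$: using $\Bbf_2 = (\Bbf_2^\ssm, \zeros)$, pick $\Kbf_2^\ssp$ with $\Wbf_{2,2}^{\ssm\ssp} + \Bbf_2^\ssm\Kbf_2^\ssp = \zeros$ (feasible under hypothesis~(iii) and the standing rank assumption on $\Bbf_2^\ssm$), and pick $\Kbf_2^\ssm$, via the feedback selective-inhibition construction of Part~I~\cite{EN-JC:18-tacI}, so that the resulting \emph{autonomous} subsystem $\tau_2 \dot\xbf_2^\ssm = -\xbf_2^\ssm + [(\Wbf_{2,2}^{\ssm\ssm} + \Bbf_2^\ssm\Kbf_2^\ssm)\xbf_2^\ssm]_\zeros^{\mbf_2^\ssm}$ is GES to the origin. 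Since $\Wbf_{2,1}^{\ssm\ssp} = \zeros$ and $\cbf_2^\ssm = \zeros$, the fast dynamics is then a \emph{cascade}: a block in $\xbf_2^\ssm$ that is GES to $\zeros$ independently of everything, driving a block in $\xbf_2^\ssp$ which, once $\xbf_2^\ssm \equiv \zeros$ and with $\xbf_1$ frozen, is exactly the GES system of hypothesis~(iv) with equilibrium $h_2^\ssp(\Wbf_{2,1}^{\ssp\ssp}\xbf_1 + \cbf_2^\ssp)$.

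Next I would check the hypotheses of~\cite{VV:97}. \emph{Regularity:} the fast field is globally Lipschitz in $(\xbf_1, \xbf_2)$, being affine composed with the $1$-Lipschitz map $[\,\cdot\,]_\zeros^{\mbf_2}$, and $\gamma$ is measurable in $t$, locally bounded, and locally Lipschitz in $(\xbf_1,\xbf_2)$ uniformly in $t$ by~(i). \emph{Isolated quasi-steady state:} for frozen $\xbf_1$ the fast equilibrium set is $h_{\Wbf_{2,2} + \Bbf_2\Kbf_2,\, \mbf_2}(\Wbf_{2,1}\xbf_1 + \cbf_2)$, which by the cascade structure and the uniqueness in~(iv) equals the single point $\phi(\xbf_1) \triangleq \big(\zeros_{r_2},\, h_2^\ssp(\Wbf_{2,1}^{\ssp\ssp}\xbf_1 + \cbf_2^\ssp)\big)$, with $\phi$ globally Lipschitz by Lemma~\ref{lem:h-lip}. \emph{Boundary-layer stability, uniformly in $\xbf_1$:} the $\xbf_2^\ssm$-block is GES to $\zeros$ with rate independent of $\xbf_1$, the $\xbf_2^\ssp$-block driven by $\xbf_2^\ssm \equiv \zeros$ is GES to $h_2^\ssp(\Wbf_{2,1}^{\ssp\ssp}\xbf_1 + \cbf_2^\ssp)$ by~(iv), and the interconnection is globally Lipschitz, so a standard GES-cascade estimate makes the full boundary layer GES to $\phi(\xbf_1)$ with constants not depending on $\xbf_1$; hypothesis~(ii) confines $\xbf_1(\cdot)$ to a fixed compact set, so this uniformity holds where it is used. \emph{Well-posed reduced model:} $\tau_1 \bar\xbf_1' = \gamma(\bar\xbf_1, \phi(\bar\xbf_1), t)$, $\bar\xbf_1(0) = \xbf_1(0)$, has a unique solution on $[0, \bar t]$ by regularity and Lipschitzness of $\phi$, bounded by~(ii). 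Applying~\cite{VV:97} then yields $\sup_{[0,\bar t]}\|\xbf_1(t) - \bar\xbf_1(t)\| \to 0$ and $\sup_{[\underline t, \bar t]}\|\xbf_2(t) - \phi(\bar\xbf_1(t))\| \to 0$ as $\epsilon_1 \to 0$; Lipschitzness of $\phi$ upgrades the latter to $\sup_{[\underline t, \bar t]}\|\xbf_2(t) - \phi(\xbf_1(t))\| \to 0$, which is~\eqref{eq:tik}.

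For the feedforward statement I would repeat the argument with $\ubf_2(t) \equiv \bar\ubf_2$. The $\xbf_2^\ssm$-block can no longer be decoupled by feedback, but monotone boundedness of the $\xbf_2$-dynamics (cf.~\cite[Def V.1]{EN-JC:18-tacI}) traps $\xbf_2$ in a fixed box, so---together with boundedness of $\xbf_1$ from~(ii)---the terms $\Wbf_{2,1}^{\ssm\ssp}\xbf_1 + \Wbf_{2,2}^{\ssm\ssp}\xbf_2^\ssp + \cbf_2^\ssm$ are uniformly bounded; by the feedforward selective-inhibition results of Part~I~\cite{EN-JC:18-tacI}, a sufficiently inhibitory constant $\Bbf_2^\ssm\bar\ubf_2$ makes $[\,\cdot\,]_\zeros^{\mbf_2^\ssm}$ vanish on the reachable set, so again $\epsilon_1(\xbf_2^\ssm)' = -\xbf_2^\ssm$ and the $\xbf_2^\ssm$-block is GES to $\zeros$---now for \emph{arbitrary} $\Wbf_{2,1}^{\ssm\ssp}$ and $\cbf_2^\ssm$. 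The cascade, the quasi-steady state $\phi$, the uniform boundary-layer GES, and the reduced model are then as before, and~\cite{VV:97} again gives~\eqref{eq:tik}. The main obstacle I anticipate is the uniform boundary-layer estimate: turning the cascade of two GES blocks into a single GES bound for $\xbf_2$ that is genuinely uniform over the compact range of the frozen slow variable, which relies on the global Lipschitz bounds afforded by the linear-threshold structure; a secondary one is matching the precise regularity hypotheses of~\cite{VV:97} and, in the feedforward case, invoking Part~I's monotone-boundedness machinery to dominate the uncancelled disturbances.
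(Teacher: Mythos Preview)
Your proposal is correct and follows essentially the same strategy as the paper: design $\Kbf_2$ (resp.\ $\bar\ubf_2$) so that the $\xbf_2^\ssm$-dynamics decouples and is GES to the origin, obtain a cascade whose fast equilibrium map is $\phi(\xbf_1)=(\zeros_{r_2},h_2^\ssp(\Wbf_{2,1}^{\ssp\ssp}\xbf_1+\cbf_2^\ssp))$, verify Lipschitzness via Lemma~\ref{lem:h-lip}, and invoke~\cite[Prop~1]{VV:97}.

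The only noteworthy difference is in the feedback design. You zero the $\ssm\ssp$ block and separately stabilize the $\ssm\ssm$ block, leaving $\tau_2\dot\xbf_2^\ssm=-\xbf_2^\ssm+[(\Wbf_{2,2}^{\ssm\ssm}+\Bbf_2^\ssm\Kbf_2^\ssm)\xbf_2^\ssm]_\zeros^{\mbf_2^\ssm}$ GES to the origin; the paper more simply uses $p_2\ge r_2$ to zero \emph{both} top blocks, i.e.\ $\Bbf_2^\ssm\Kbf_2=-[\Wbf_{2,2}^{\ssm\ssm}\ \Wbf_{2,2}^{\ssm\ssp}]$, yielding $\tau_2\dot\xbf_2^\ssm=-\xbf_2^\ssm$ outright. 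This makes the cascade GES estimate you flag as the ``main obstacle'' trivial (the paper packages it as~\cite[Lemma~A.2]{EN-JC:18-tacI}), and is worth adopting since hypothesis~(iii) already gives enough control authority for it. Your feedforward argument matches the paper's almost exactly.
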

\begin{proof}
  First we prove the result for feedback control. By~\emph{(iii)},
  there exists $\Kbf_2 \in \real^{p_2 \times n_2}$ almost always
  (i.e., for almost all $(\Wbf_{2, 2}^{\ssm\ssm}, \Wbf_{2,
    2}^{\ssm\ssp}, \Bbf_2^\ssm)$) such that
  \begin{align}\label{eq:w_bk}
    \Wbf_{2, 2} + \Bbf_2 \Kbf_2 = \begin{bmatrix} \zeros & \zeros \\ \Wbf_{2, 2}^{\ssp\ssm} &
      \Wbf_{2, 2}^{\ssp\ssp} \end{bmatrix}.
  \end{align}
  Further, by~\cite[Thm IV.7(ii) \& Thm V.3(ii)]{EN-JC:18-tacI}, all
  the principal submatrices of $-\Ibf + (\Wbf_{2, 2} + \Bbf_2 \Kbf_2)$
  are Hurwitz. Therefore, by~\cite[Thm IV.3 \&
  Assump.~1]{EN-JC:18-tacI}, $h_2^\ssp$ is singleton-valued almost
  always (i.e., for almost all $\Wbf_{2, 2}$). Thus, the
  $\xbf_2$-dynamics is
  \begin{align}\label{eq:dyn-d-sim}
    \tau_2 \dot \xbf_2^\ssm &= -\xbf_2^\ssm,
    \\
    \notag \tau_2 \dot \xbf_2^\ssp &= -\xbf_2^\ssp + [\Wbf_{2, 2}^{\ssp\ssm}
    \xbf_2^\ssm + \Wbf_{2, 2}^{\ssp\ssp} \xbf_2^\ssp + \Wbf_{2,
      1}^{\ssp\ssp} \xbf_1 + \cbf_2^\ssp]_\zeros^{\mbf_2^\ssp}, 
  \end{align}
  and has a unique equilibrium for any \emph{fixed} $\xbf_1$.
  Assumption~\emph{(iv)} and~\cite[Lemma~A.2]{EN-JC:18-tacI} then
  ensure that~\eqref{eq:dyn-d-sim} is GES relative to $(\zeros_{r_2},
  h_2^\ssp(\Wbf_{21}^{\ssp\ssp} \xbf_1(t) + \cbf_2^\ssp))$ for any
  fixed~$\xbf_1$.
  
  Based on assumption~\emph{(ii)}, let $D \subset \real^n$ be a
  compact set that contains the trajectory of the reduced-order model
  $\tau_1 \dot \xbf_1 = \gamma(\xbf_1, (\zeros_{r_2},
  h_2^\ssp(\Wbf_{21}^{\ssp\ssp} \xbf_1(t) + \cbf_2^\ssp)), t)$.  By
  assumption~\emph{(i)}, $\gamma$ is Lipschitz in $(\xbf_1, \xbf_2)$
  on compacts uniformly in $t$. Let $L_\gamma$ be its associated
  Lipschitz constant on $D \times \{\zeros_{r_2}\} \times
  h_2^\ssp(\Wbf_{2, 1}^{\ssp\ssp} D +
  \cbf_2^\ssp)$. Using~\eqref{eq:h-pa} and Lemma~\ref{lem:h-lip}, for
  all $\xbf_1, \hat \xbf_1 \in D$,
  \begin{align*}
    &\|\gamma(\xbf_1, h_2^\ssp(\Wbf_{2, 1}^{\ssp\ssp} \xbf_1 +
    \cbf_2^\ssp), t) - \gamma(\hat \xbf_1, h_2^\ssp(\Wbf_{2,
      1}^{\ssp\ssp} \hat \xbf_1 + \cbf_2^\ssp), t)\|
    \\
    &\le L_\gamma \|(\xbf_1 - \hat \xbf_1, h_2^\ssp(\Wbf_{2,
      1}^{\ssp\ssp} \xbf_1 + \cbf_2^\ssp) - h_2^\ssp(\Wbf_{2,
      1}^{\ssp\ssp} \hat \xbf_1 + \cbf_2^\ssp))\|
    \\
    &\le L_\gamma(\|\xbf_1 - \hat \xbf_1\| \!+\! \|h_2^\ssp(\Wbf_{2,
      1}^{\ssp\ssp} \xbf_1 + \cbf_2^\ssp) \!-\! h_2^\ssp(\Wbf_{2,
      1}^{\ssp\ssp} \hat \xbf_1 + \cbf_2^\ssp)\|)
    \\
    &\le L_\gamma(1 + L_h \|\Wbf_{2, 1}^{\ssp\ssp}\|) \|\xbf_1 - \hat
    \xbf_1\|,
  \end{align*}
  so $\gamma(\cdot, h_2^\ssp(\Wbf_{2, 1}^{\ssp\ssp} \cdot +
  \cbf_2^\ssp), t): \real^{n_1} \to \real^{n_1}$ is $L_\gamma(1 + L_h
  \|\Wbf_{2, 1}^{\ssp\ssp}\|)$-Lipschitz on~$D$. Using this,
  Lemma~\ref{lem:h-lip-gen} again, and the change of variables $t'
  \triangleq t/\tau_1$, the claim follows from~\cite[Prop 1]{VV:97}.%
  \footnote{\cite[Prop 1]{VV:97} is applicable to singularly perturbed
    differential inclusions and thus technically involved, but for
    non-smooth ODEs such as~\eqref{eq:dyn-multi}, its assumptions can
    be simplified to: 1. Lipschitzness of dynamics uniformly in $t$,
    2. Existence, uniqueness, and Lipschitzness of the equilibrium map
    of fast dynamics, 3. Lipschitzness and boundedness of the
    reduced-order model, 4. asymptotic stability of the fast dynamics
    uniformly in $t$ and the slow variable, and 5. global attractivity
    of fast dynamics for any fixed slow variable.}
 
  Next, we prove the result for constant feedforward control $\ubf_2(t)
  \equiv \bar \ubf_2$. Based on assumption~\emph{(ii)}, let $\bar
  \xbf_1 \in \realpos^{n_1}$ be the bound on the trajectories
  of~\eqref{eq:dtilde} and $\bar \ubf_2$ be a solution of
  \begin{align*}
    \Bbf_2^\ssm \bar \ubf_2 = -[[\Wbf_{2, 2}^{\ssm\ssm} \ \Wbf_{2, 2}^{\ssm\ssp}]]_\zeros^\infty
    \nub(\bar \xbf_1) - [\Wbf_{2, 1}^{\ssm\ssp}]_\zeros^\infty \bar \xbf_1 + [\cbf_2^\ssm]_\zeros^\infty, %
  \end{align*}
  where $\nub$ comes from the monotone boundedness of the dynamics of
  $\xbf_2$. This solution almost always exists by
  assumption~\emph{(ii)}. Then, the dynamics of $\xbf_2$ simplifies
  to~\eqref{eq:dyn-d-sim}, and~\cite[Lemma~A.2]{EN-JC:18-tacI}
  guarantees that it is GES relative to $(\zeros_{r_2},
  h_2^\ssp(\Wbf_{2, 1}^{\ssp\ssp} \xbf_1 + \cbf_2^\ssp))$ for any
  \emph{fixed} $\xbf_1$. The claim then follows, similar to the
  feedback case, from~\cite[Prop 1]{VV:97}.
\end{proof}

\begin{remark}\longthmtitle{Validity of the assumptions of
    Theorem~\ref{thm:sp-inhib}.}\label{re:assumptions-reasonable}
  {\rm Assumption~\emph{(i)} is merely technical and satisfied by all
    well-known models of neuronal rate dynamics, including the
    linear-threshold model itself.  Likewise, assumption~\emph{(ii)}
    is always satisfied in reality, as the firing rates of all
    biological neuronal networks are bounded by the inverse of the
    refractory period of their neurons.  In theory, the verification
    of this assumption depends clearly on $\gamma$. If a
    linear-threshold model is used, we can instead use
    Theorem~\ref{thm:multi} and relax assumption~\emph{(ii)} to a less
    restrictive one (assumption~\emph{(i)} of
    Theorem~\ref{thm:multi}), which can in turn be verified using the
    sufficient condition in Theorem~\ref{thm:multi-EUE-GES}.
    Assumption~\emph{(iii)} requires the existence of at least as many
    inhibitory control channels as the number of nodes in $\Nc_2$ that
    are to be inhibited.  Indeed, selective inhibition is still
    possible without this assumption (cf. Theorem~\ref{thm:multi}),
    but may require excessive inhibitory resources. The most critical
    requirement is assumption~\emph{(iv)}, but is not only sufficient
    but also necessary for inhibitory stabilization (cf.~\cite[Thm
    IV.8]{EN-JC:18-tacI} for conditions on $\Wbf_{2, 2}^{\ssp\ssp}$
    that ensure this assumption as well as~\cite[Thm V.2 \&
    V.3]{EN-JC:18-tacI} for its necessity for feedforward and feedback
    inhibitory stabilization)}.  \oprocend
\end{remark}

The main conclusion of Theorem~\ref{thm:sp-inhib} is the Tikhonov-type
singular perturbation statement in~\eqref{eq:tik}. According to this
statement, the tracking error can be made arbitrarily small, i.e., for
any $\theta > 0$,
\begin{align}\label{eq:tikkk}
  |\xbf_2(t) - (\zeros_{r_2}, h_2^\ssp(\xbf_1(t)))| &\le \theta
  \ones_{n_2}, \qquad \forall t \in [\underline t, \bar t],
\end{align}
provided that $\tau_2 / \tau_1$ is sufficiently small.  As discussed
in Section~\ref{sec:intro}, this timescale separation is
characteristic of biological neuronal networks.  In general, the
smaller the time constant ratio $\tau_2/\tau_1$, the smaller the
tracking error $\theta$. As shown
in~\cite{JDM-AB-DJF-RR-JDW-XC-CP-TP-HS-DL-XW:14}, several pairs of
regions along the sensory-frontal pathways have successive time
constant ratios between $1/1.5$ and $1/2.5$, which is often (more
than) enough in simulations for~\eqref{eq:tikkk} to hold with small
enough $\theta$, as shown in Example~\ref{ex:osc} below.

An important observation regarding~\eqref{eq:tikkk} is that the
equilibrium map $h_2^\ssp$ does not have a closed-form expression, so
the reference trajectory $h_2^\ssp(\xbf_1(t))$ of the lower-level
network is only implicitly known for any given~$\xbf_1(t)$. However,
if a desired trajectory $\xib_2^\ssp(t) \in \prod_{j = r_2+1}^{n_2}
[0, m_{2, j}]$ %
for $\xbf_2^\ssp$ is known a priori, one can specify the appropriate
$\gamma$ such that $h_2^\ssp(\xbf_1(t)) = \xib_2^\ssp(t)$. To show
this, let the dynamics of $\xib_2^\ssp(t)$ be
\begin{align*}
  \tau_1 \dot \xib_2^\ssp(t) = \gamma_\xi(\xib_2^\ssp(t), t).
\end{align*}
Then, choosing $\xbf_1(t) = (\Wbf_{2, 1}^{\ssp\ssp})^{-1} \big((\Ibf -
\Wbf_{2, 2}^{\ssp\ssp}) \xib_2^\ssp(t) - \cbf_2^\ssp\big)$, %
\begin{align*}
  [\Wbf_{2, 2}^{\ssp\ssp} \xib_2^\ssp(t) + \Wbf_{2, 1}^{\ssp\ssp}
  \xbf_1(t) + \cbf_2^\ssp]_\zeros^{\mbf_2^\ssp} =
  [\xib_2^\ssp(t)]_\zeros^{\mbf_2^\ssp} = \xib_2^\ssp(t) ,
\end{align*}
which, according to~\eqref{eq:h}, implies $\xib_2^\ssp(t) =
h_2^\ssp(\xbf_1(t))$.

We use this result to illustrate the core concepts of the bilayer HSR
in a synthetic but biologically-inspired example, where a inhibitory
subnetwork generates oscillations which are selectively induced on a
lower-level excitatory subnetwork.

\begin{example}\longthmtitle{HSR of an excitatory subnetwork by
    inhibitory oscillations}\label{ex:osc}
  {\rm Consider the dynamics~\eqref{eq:dyn-multi} with $N = 2$, a
    $3$-dimensional excitatory subnetwork at the lower level, a
    $3$-dimensional inhibitory subnetwork at the higher level, and
    $\mbf_1 = \mbf_2 = \infty \ones_3$
    (Figure~\ref{fig:sim-hier}). Let
  \begin{align}\label{eq:ex-osc}
    \notag &\Wbf_{1, 1} =
    \begin{bmatrix}
      0 & -0.8 & -1.7
      \\
      -1 & 0 & -0.5
      \\
      -0.7 & -1.8 & 0
    \end{bmatrix},
    \quad \cbf_1 =
    \begin{bmatrix}
      11
      \\
      10
      \\
      10
    \end{bmatrix},
    \\
    \notag &\Wbf_{2, 2} =
    \begin{bmatrix}
      0 & 0.9 & 1.2
      \\
      0.7 & 0 & 1
      \\
      0.8 & 0.2 &
      0
    \end{bmatrix},
    \quad \Bbf_2 = 
    \begin{bmatrix}
      -1
      \\
      0
      \\
      0 
    \end{bmatrix}, \quad \cbf_2
    =
    \begin{bmatrix}
      2
      \\
      3.5
      \\
      2.5
    \end{bmatrix},
    \\
    &\Wbf_{1, 2} = \zeros, \quad \Wbf_{2, 1} = -\Ibf, \quad u_2 = 5.
  \end{align}
  This example satisfies all the assumptions of
  Theorem~\ref{thm:sp-inhib}, so we expect the actual
  $\xbf_2$-trajectory to be close to the \emph{desired}
  $\xbf_2$-trajectory $(0, h_2^\ssp(\xbf_1(t))$ provided that
  $\epsilon_1 \ll 1$. As shown in Figure~\ref{fig:sim-hier},
  $\xbf_2(t)$ and $(0, h_2^\ssp(\xbf_1(t))$ are remarkably close even
  with a mild separation of timescales, $\epsilon_1 = 0.5$.

  \begin{figure}
    \centering
    \begin{tikzpicture}
      \node(g) {\includegraphics[width=0.13\textwidth]{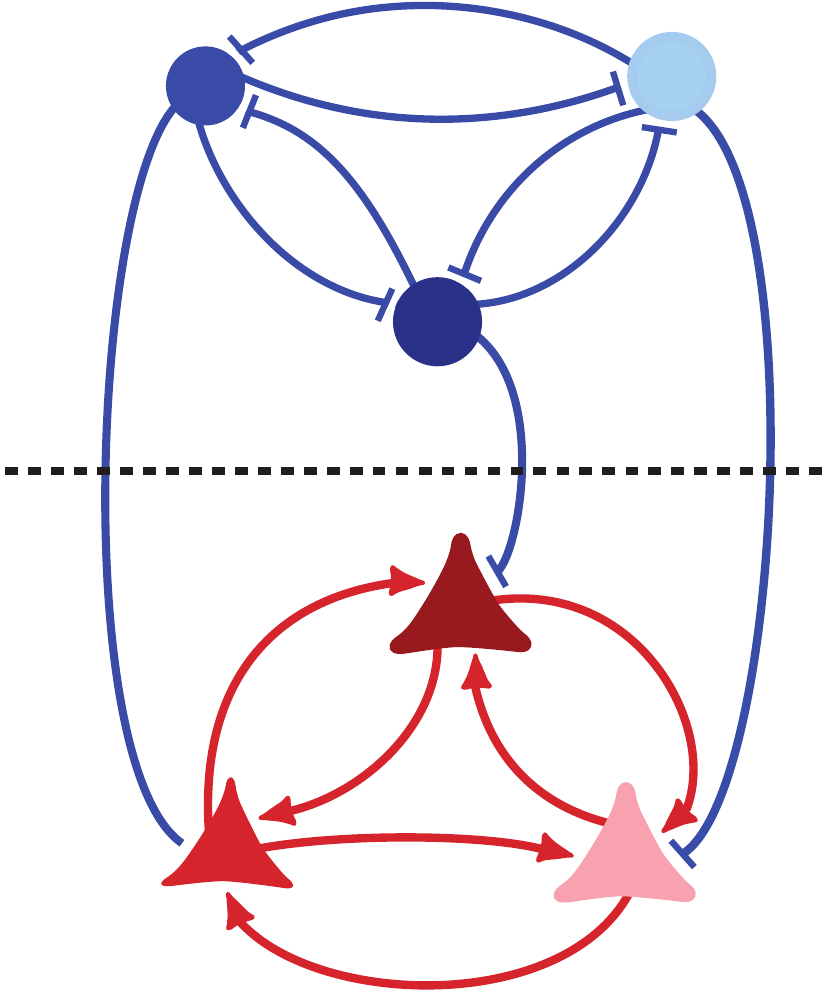}};
      \node[left of=g, xshift=-80pt, yshift=-8pt] (gt) {\includegraphics[width=0.24\textwidth]{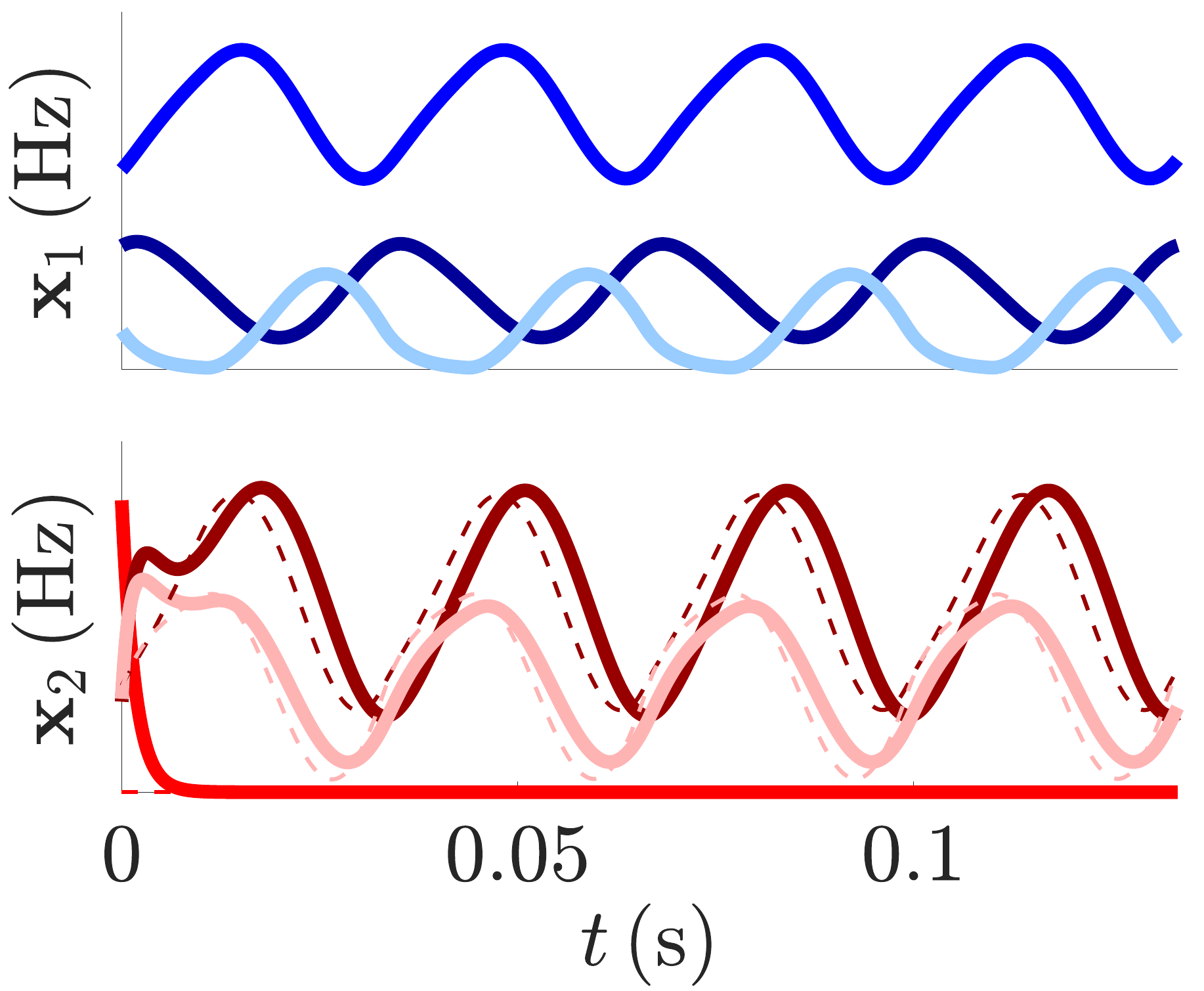}}; 
      \node[above right of=g, xshift=35pt, yshift=3pt, rotate=0, scale=0.8] (i) {\parbox{40pt}{\centering Inhibitory \\ $\tau_1$}};
      \node[below right of=g, xshift=35pt, yshift=2pt, rotate=0, scale=0.8] (e) {\parbox{45pt}{\centering Excitatory \\ $\tau_2 = 0.5 \tau_1$}};
    \end{tikzpicture}
    \caption{The network structure (right) and trajectories (left) of
      the two-timescale network in~\eqref{eq:ex-osc} for $\tau_1 =
      3.3^\text{ms}$. The red pyramids and blue circles depict
      excitatory and inhibitory nodes, resp., and the trajectory
      colors on the left correspond to node colors on the right. The
      dashed lines show the desired reference trajectories $\big(0,
      h_2^\protect\ssp(\Wbf_{2, 1}^{\protect\ssp\protect\ssp}
      \xbf_1(t) + \cbf_2^\protect\ssp)\big)$.} %
    \label{fig:sim-hier}
    \vspace{-1.5ex}
  \end{figure}
  
  This example further illustrates the complementary roles of
  selective inhibition and selective recruitment. The complete
  $\xbf_2$-subsystem is unstable by itself, but any two-dimensional
  subnetwork of it is stable. Therefore, $\Nc_1$ can selectively
  inhibit any single node of $\Nc_2$ while simultaneously recruiting
  (e.g., by inducing oscillations in) the remaining two.  Thus, as
  suggested earlier in~\cite[Rem V.7]{EN-JC:18-tacI}, different
  ``tasks'' can be accomplished at different times by varying the
  selectively recruited subnetwork over time.  Generalizing this to
  more complex networks allows for more flexible selective recruitment
  schemes of larger neuronal subnetworks, as observed in nature.}
\oprocend
\end{example}

\begin{remark}\longthmtitle{Biological relevance of
    Example~\ref{ex:osc}} {\rm In addition to providing a simple
    illustration of the HSR framework developed here,
    Example~\ref{ex:osc} has interesting similarities with well-known
    aspects of selective attention in the brain.  Extensive studies
    have demonstrated a robust correlation between oscillatory
    activity, particularly in the gamma range ($\sim
    30-100^\text{Hz}$), and selective
    attention~\cite{PF-JHR-AER-RD:01,AS-MP-WL-NB:04,SR-EN-SSH-AS-NEC:08,NK-MB-ESM-AS:12}.
    Furthermore, gamma oscillations in the cortex are shown to be
    primarily generated by networks of inhibitory neurons, which then
    recruit the excitatory populations
    (see~\cite{JAC-MC-KM-UK-FZ-KD-LT-CIM:09} and the references
    therein), as captured by the network structure of
    Figure~\ref{fig:sim-hier}. Interestingly, the oscillations
    generated by the higher-level inhibitory subnetwork fall within
    the gamma band by setting $\tau_1 \sim 3^\text{ms}$ which lies
    within the decay time constant range of $\text{GABA}_\text{A}$
    inhibitory receptors%
    \footnote{See, e.g., the Neurotransmitter Time Constants database
      of the CNRGlab at the University of Waterloo,
      \url{http://compneuro.uwaterloo.ca/research/constants-constraints/neurotransmitter-time-constants-pscs.html}.}
    (the major type of inhibitory synapse in the central nervous
    system).  \oprocend}
\end{remark}

\section{Selective \hspace*{-1pt}Recruitment \hspace*{-1pt}in
  \hspace*{-1pt}Multilayer \hspace*{-1pt}Networks}\label{sec:multilayer}

We tackle here the problem of Section~\ref{sec:prob-state} in its
general form and consider an $N$-layer hierarchical structure of
subnetworks with linear-threshold dynamics.
Given~\eqref{eq:dyn-multi}, let
\begin{align*}
 &h_i^\ssp:
    \cbf_i^\ssp \rightrightarrows \setdef{\xbf_i^\ssp}{\xbf_i^\ssp =
      [\Wbf_{i, i+1}^{\ssp\ssp} h_{i+1}^\ssp(\Wbf_{i+1, i}^{\ssp\ssp}
      \xbf_i^\ssp + \cbf_{i+1}^\ssp)
      \\
      &\hspace{92pt}+ \Wbf_{i, i}^{\ssp\ssp} \xbf_i^\ssp +
      \cbf_i^\ssp]_\zeros^{\mbf_i^\ssp}}, \ i \!=\! 2, \dots, N\!-\!1,
\end{align*}
with $h_N^\ssp = h_{\Wbf_{N, N}^{\ssp\ssp}, \mbf_N^\ssp}$, be the
recursive definition of the (set-valued) equilibrium maps of the
task-relevant parts of the layers $\{2, \dots, N\}$.  These maps play
a central role in the multiple-timescale dynamics
of~\eqref{eq:dyn-multi}. Therefore, we begin by characterizing their
piecewise-affine nature. The proof of the following result can be
found in~\ref{app:pf}.

\begin{lemma}\longthmtitle{Piecewise affinity of equilibrium maps is
    preserved along layers of hierarchical linear-threshold
    network}\label{lem:affinity}
  Let $h:\real^n \to \real^n$ be a piecewise affine function,
  \begin{align*}
    h(\cbf) = \Fbf_\lambda \cbf + \fbf_\lambda, \qquad &\forall \cbf
    \in \Psi_\lambda \triangleq \setdef{\cbf}{\Gbf_\lambda \cbf +
      \gbf_\lambda \ge \zeros},
    \\
    &\forall \lambda \in \Lambda,
  \end{align*}
  where $\Lambda$ is a finite index set and $\bigcup_{\lambda \in
    \Lambda} \Psi_\lambda = \real^n$. Given matrices $\Wbf_\ell, \ell
  = 1, 2, 3$ and a vector $\bar \cbf$, assume %
  \begin{align}\label{eq:gen-eq}
    \xbf = [\Wbf_1 \xbf + \Wbf_2 h(\Wbf_3 \xbf + \bar \cbf) + \cbf']_\zeros^\mbf,
  \end{align}
  is known to have a unique solution $\xbf \in \real^{n'}$ for all
  $\cbf' \in \real^{n'}$ and let $h'(\cbf')$ be this unique
  solution. Then, there exists a finite index set $\Lambda'$ and
  $\{(\Fbf'_{\lambda'}, \fbf'_{\lambda'}, \Gbf'_{\lambda'},
  \gbf'_{\lambda'})\}_{\lambda' \in \Lambda'}$ such that
  \begin{align}\label{eq:h'}
    \notag    h'(\cbf') = \Fbf'_{\lambda'} \cbf' + \fbf'_{\lambda'}, \quad
    &\forall \cbf' \in \Psi'_{\lambda'} \triangleq
    \setdef{\cbf'}{\Gbf'_{\lambda'} \cbf' + \gbf'_{\lambda'} \ge
      \zeros},
    \\
    &\forall \lambda' \in \Lambda',
  \end{align}
  and $\bigcup_{\lambda' \in \Lambda'} \Psi'_{\lambda'} = \real^{n'}$. \oprocend
\end{lemma}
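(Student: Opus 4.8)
The plan is to reduce everything to elementary facts about finite unions of polyhedra. First I would show that the set of pairs $(\cbf',\xbf)$ satisfying~\eqref{eq:gen-eq} is a finite union of polyhedra. Since $h$ is affine on each polyhedron $\Psi_\lambda$ and $\xbf\mapsto\Wbf_3\xbf+\bar\cbf$ is affine, the map $\xbf\mapsto h(\Wbf_3\xbf+\bar\cbf)$ is affine on each of the finitely many polyhedra $\{\xbf:\Gbf_\lambda(\Wbf_3\xbf+\bar\cbf)+\gbf_\lambda\ge\zeros\}$, which still cover $\real^{n'}$ (preimages of a cover under an affine map). Adding $\Wbf_1\xbf+\cbf'$ and left-multiplying by $\Wbf_2$ leaves the argument of $[\cdot]_\zeros^\mbf$ affine in $(\xbf,\cbf')$ on each such piece; and $[\cdot]_\zeros^\mbf$ is itself piecewise affine, equal to $\Sigmab^\ell(\cdot)+\Sigmab^\s\mbf$ on the polyhedron associated with each sign pattern $\sigmab\in\zls^{n'}$ (indicating componentwise whether the argument is below $0$, in $[0,m_i]$, or above $m_i$; patterns with $\sigma_i=\s$ when $m_i=\infty$ yield empty regions and are discarded). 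Intersecting the $\lambda$-pieces with the preimages of the $\sigmab$-pieces gives finitely many polyhedra $R_k\subseteq\real^{n'}\times\real^{n'}$ on each of which the right-hand side of~\eqref{eq:gen-eq} equals some affine $\Abf_k\xbf+\Bbf_k\cbf'+\bbf_k$. On $R_k$, \eqref{eq:gen-eq} reads as the affine constraint $(\Ibf-\Abf_k)\xbf=\Bbf_k\cbf'+\bbf_k$, so $S_k:=R_k\cap\{(\cbf',\xbf):(\Ibf-\Abf_k)\xbf=\Bbf_k\cbf'+\bbf_k\}$ is a polyhedron and the full solution set is $S=\bigcup_k S_k$.

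Next I would use the hypothesis that~\eqref{eq:gen-eq} has a unique solution $\xbf=h'(\cbf')$ for every $\cbf'$: this says precisely that $S$ is the graph $\{(\cbf',h'(\cbf'))\}$ of $h'$, so the coordinate projection $\pi:(\cbf',\xbf)\mapsto\cbf'$ is injective on $S$, hence on each $S_k$. Using the elementary fact that a nonempty polyhedron is full-dimensional in its affine hull $\mathrm{aff}(S_k)$: if $\mathrm{aff}(S_k)$ contained a nonzero direction of the form $(\zeros,\dbf)$, then $S_k$ would contain a nondegenerate segment in that direction, contradicting injectivity of $\pi|_{S_k}$; hence $\pi$ is injective on $\mathrm{aff}(S_k)$, and $\mathrm{aff}(S_k)$ is therefore the graph of an affine map defined on the affine subspace $\pi(\mathrm{aff}(S_k))$. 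Extending that map arbitrarily to all of $\real^{n'}$ yields an affine $\cbf'\mapsto\Fbf'_k\cbf'+\fbf'_k$ with $h'(\cbf')=\Fbf'_k\cbf'+\fbf'_k$ on $\Psi'_k:=\pi(S_k)$. Each $\Psi'_k$ is a polyhedron (coordinate projection of a polyhedron, via Fourier--Motzkin elimination), hence writable as $\{\cbf':\Gbf'_k\cbf'+\gbf'_k\ge\zeros\}$, and $\bigcup_k\Psi'_k=\pi(S)=\real^{n'}$ because $h'$ is defined everywhere. Discarding empty $S_k$ and relabeling gives the finite index set $\Lambda'$ together with the data $\{(\Fbf'_{\lambda'},\fbf'_{\lambda'},\Gbf'_{\lambda'},\gbf'_{\lambda'})\}$.

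The main obstacle I anticipate is the case in which $\Ibf-\Abf_k$ is singular, so that the affine constraint on $R_k$ does not by itself express $\xbf$ as an affine function of $\cbf'$ and one cannot simply invert a matrix. This is exactly where the global existence-and-uniqueness hypothesis does the essential work: it forces the polyhedron $S_k$ to project injectively onto $\cbf'$-space, and the fact that a polyhedron is full-dimensional in its affine hull then upgrades that injectivity into genuine affine dependence. The rest is the routine bookkeeping that every region constructed along the way (the preimages of the $\Psi_\lambda$, the $\sigmab$-regions, their intersections, and the projections $\pi(S_k)$) remains polyhedral and finite in number, which follows from closure of the class of polyhedra under intersection, affine preimage, and coordinate projection.
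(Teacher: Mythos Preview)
Your argument is correct and in fact more careful than the paper's own proof. The paper proceeds more directly: for each $\cbf'$, it locates the unique solution $\xbf^*$, picks $\lambda$ with $\Wbf_3\xbf^*+\bar\cbf\in\Psi_\lambda$, rewrites~\eqref{eq:gen-eq} as the standard fixed-point equation $\xbf^*=[\Wbf\xbf^*+\cdots]_\zeros^\mbf$ with $\Wbf=\Wbf_1+\Wbf_2\Fbf_\lambda\Wbf_3$, and then simply quotes the explicit piecewise-affine representation~\eqref{eq:h-pa}, taking $\Lambda'=\Lambda\times\zls^{n'}$ and reading off $\Fbf'_{\lambda'},\fbf'_{\lambda'},\Gbf'_{\lambda'},\gbf'_{\lambda'}$ from the formulas there. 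This is shorter and has the payoff of producing closed-form expressions for the pieces (which the paper later exploits, e.g.\ in defining $\bar\Fbf$ in Theorem~\ref{thm:multi-EUE-GES}), but it tacitly assumes the inverses $(\Ibf-\Sigmab^\ell\Wbf)^{-1}$ appearing in~\eqref{eq:h-pa} exist for the relevant $(\lambda,\sigmab)$. Your route, by contrast, works at the level of the graph $S\subset\real^{n'}\times\real^{n'}$: you never invert anything, and the potentially delicate singular case---where $\Ibf-\Abf_k$ fails to be invertible on some piece---is handled cleanly by the relative-interior argument showing that injectivity of $\pi$ on the polyhedron $S_k$ propagates to its affine hull. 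The price is that you obtain $\Psi'_k=\pi(S_k)$ only abstractly via Fourier--Motzkin rather than with explicit inequality data, and your affine extensions $\Fbf'_k\cbf'+\fbf'_k$ are not canonical when $\pi(\mathrm{aff}(S_k))\subsetneq\real^{n'}$; both are harmless for the statement as written.
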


A special case of Lemma~\ref{lem:affinity} is when $\Wbf_2 = \zeros$,
where $h'$ becomes, like $h_N^\ssp$, the standard equilibrium
map~\eqref{eq:h}. %
Next, we characterize the global Lipschitzness property of the
equilibrium maps. The proof is in~\ref{app:pf}.

\begin{lemma}\longthmtitle{Piecewise affine equilibrium maps are
    globally Lipschitz}\label{lem:h-lip-gen}
  Let $h:\real^n \to \real^n$ be a piecewise affine function of the
  form
  \begin{align*}
    h(\cbf) = \Fbf_\lambda \cbf + \fbf_\lambda, \qquad &\forall \cbf
    \in \Psi_\lambda \triangleq \setdef{\cbf}{\Gbf_\lambda \cbf +
      \gbf_\lambda \ge \zeros},
    \\
    &\forall \lambda \in \Lambda,
  \end{align*}
  where $\Lambda$ is a finite index set and $\bigcup_{\lambda \in
    \Lambda} \Psi_\lambda = \real^n$. Then, $h$ is globally Lipschitz. \oprocend
\end{lemma}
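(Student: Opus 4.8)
The plan is to show that $h$ is globally Lipschitz with the explicit constant $L = \max_{\lambda \in \Lambda} \|\Fbf_\lambda\|$, which is finite because $\Lambda$ is finite. Fix two points $\cbf, \cbf' \in \real^n$ and consider the line segment between them, parametrized as $\zbf(s) = (1-s)\cbf + s\cbf'$ for $s \in [0,1]$. For each $\lambda \in \Lambda$, set $I_\lambda = \setdef{s \in [0,1]}{\zbf(s) \in \Psi_\lambda}$. Since $\Psi_\lambda = \setdef{\cbf}{\Gbf_\lambda \cbf + \gbf_\lambda \ge \zeros}$ is a closed convex polyhedron and $s \mapsto \zbf(s)$ is affine, each $I_\lambda$ is the preimage of a closed convex set under an affine map, hence a closed subinterval of $[0,1]$ (possibly empty or a single point). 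Moreover the $I_\lambda$ cover $[0,1]$ because the $\Psi_\lambda$ cover $\real^n$.

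First I would extract from this finite closed-interval cover an ordered subdivision $0 = s_0 < s_1 < \dots < s_k = 1$ such that each subinterval $[s_{j-1}, s_j]$ lies inside a single $I_{\lambda_j}$. To do this, take the finite set consisting of $0$, $1$, and the two endpoints of every nonempty $I_\lambda$, and sort it to obtain $s_0 < \dots < s_k$. For each $j$, the midpoint of $[s_{j-1}, s_j]$ belongs to some $I_{\lambda_j} = [a_{\lambda_j}, b_{\lambda_j}]$; since $a_{\lambda_j}$ and $b_{\lambda_j}$ are among the chosen endpoints and the midpoint lies strictly between the consecutive endpoints $s_{j-1}$ and $s_j$, we must have $a_{\lambda_j} \le s_{j-1}$ and $b_{\lambda_j} \ge s_j$, so $[s_{j-1}, s_j] \subseteq I_{\lambda_j}$, i.e. $\zbf([s_{j-1}, s_j]) \subseteq \Psi_{\lambda_j}$.

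The proof then concludes with a telescoping estimate along the segment. On $\Psi_{\lambda_j}$ the function $h$ agrees with $\cbf \mapsto \Fbf_{\lambda_j}\cbf + \fbf_{\lambda_j}$; in particular, since $\zbf(s_{j-1}), \zbf(s_j) \in \Psi_{\lambda_j}$ and $h$ is single-valued, $h(\zbf(s_j)) - h(\zbf(s_{j-1})) = \Fbf_{\lambda_j}\big(\zbf(s_j) - \zbf(s_{j-1})\big)$, whence $\|h(\zbf(s_j)) - h(\zbf(s_{j-1}))\| \le \|\Fbf_{\lambda_j}\|\,(s_j - s_{j-1})\,\|\cbf' - \cbf\| \le L\,(s_j - s_{j-1})\,\|\cbf' - \cbf\|$. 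Summing over $j = 1, \dots, k$, using the triangle inequality and $\sum_{j=1}^k (s_j - s_{j-1}) = 1$, gives $\|h(\cbf') - h(\cbf)\| \le L\,\|\cbf' - \cbf\|$, as desired; since $\cbf, \cbf'$ were arbitrary, $h$ is globally Lipschitz.

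I expect the only step needing care to be the extraction of the ordered subdivision, i.e.\ verifying that refining $[0,1]$ by all the endpoints of the cover elements produces pieces each contained in a single $I_\lambda$; the midpoint argument above handles this cleanly. The rest — convexity/closedness of $I_\lambda$, and the segment telescoping — is routine. (As a side remark, this also reproves the continuity of $h$, and the special case $\Wbf_2 = \zeros$ of Lemma~\ref{lem:affinity} together with this lemma yields Lemma~\ref{lem:h-lip}.)
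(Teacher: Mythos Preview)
Your proof is correct and follows essentially the same approach as the paper: both break the line segment between $\cbf$ and $\cbf'$ into finitely many pieces, each contained in a single $\Psi_\lambda$, apply the affine formula on each piece, and telescope to obtain the Lipschitz constant $\max_{\lambda \in \Lambda}\|\Fbf_\lambda\|$. The only difference is that you spell out the subdivision construction carefully via the midpoint argument, whereas the paper simply asserts that convexity of the $\Psi_\lambda$ yields such a decomposition into at most $|\Lambda|$ pieces.
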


We are now ready to generalize Theorem~\ref{thm:sp-inhib} to an
$N$-layer architecture while at the same time relaxing several of its
simplifying assumptions in favor of generality.

\begin{theorem}\longthmtitle{Selective recruitment in multilayer %
    networks}\label{thm:multi}
  Consider the dynamics~\eqref{eq:dyn-multi}. If
  \begin{enumerate}
  \item The reduced-order model (ROM)
    \begin{align*}
      \hspace{-20pt} \tau_1 \dot {\bar \xbf}_1^\ssp = -\bar
      \xbf_1^\ssp + [\Wbf_{1, 1}^{\ssp\ssp} \bar \xbf_1^\ssp +
      \Wbf_{1, 2}^{\ssp\ssp} h_2^\ssp(\Wbf_{2, 1}^{\ssp\ssp} \bar
      \xbf_1^\ssp + \cbf_2^\ssp) +
      \cbf_1^\ssp]_\zeros^{\mbf_1^\ssp},
    \end{align*}
    of the first subnetwork has bounded solutions (recall $\xbf_1
    \equiv \xbf_1^\ssp$ since $r_1 = 0$);
  \item For all $i = 2, \dots, N$, 
    \begin{align*}
      \hspace{-20pt} \tau_i \dot \xbf_i^\ssp(t) = &-\xbf_i^\ssp(t) +
      [\Wbf_{i, i}^{\ssp\ssp} \xbf_i^\ssp(t)
      \\
      &+ \Wbf_{i, i+1}^{\ssp\ssp} h_{i+1}^\ssp(\Wbf_{i+1,
        i}^{\ssp\ssp} \xbf_i^\ssp(t) + \cbf_{i+1}^\ssp) +
      \cbf_i^\ssp]_\zeros^{\mbf_i^\ssp},
    \end{align*}
    is GES towards a unique equilibrium for any $\cbf_{i+1}^\ssp$
    and~$\cbf_i^\ssp$;
  \end{enumerate}
  then there exists $\Kbf_i \in \real^{p_i \times n_i}$ and $\bar
  \ubf_i: \realnonneg \to \realnonneg^{p_i}, i \in \{2, \dots, N\}$
  such that using the feedback-feedforward control
  \begin{align}\label{eq:u-gen}
    \ubf_i(t) = \Kbf_i \xbf_i(t) + \bar \ubf_i(t), \qquad i \in \{2,
    \dots, N\},
  \end{align}
  we have, for any $0 < \underline t < \bar t < \infty$,
  \begin{subequations}\label{eq:multi-tik}
    \begin{align}
      \lim_{\epsilonb \to \zeros} \sup_{t \in [\underline t, \bar t]}
      \|\xbf_i^\ssm(t)\| = \zeros, \qquad \forall i \in \{2, \dots, N\},
    \end{align}
    and
    \begin{align}
      &\!\!\!\!\lim_{\epsilonb \to \zeros} \sup_{t \in [0, \bar t]}
      \|\xbf_1^\ssp(t) - \bar \xbf_1^\ssp(t)\| = 0,
      \\
      &\!\!\!\!\lim_{\epsilonb \to \zeros} \sup_{t \in [\underline t, \bar t]}
      \|\xbf_2^\ssp(t) - h_2^\ssp(\Wbf_{2, 1}^{\ssp\ssp} \xbf_1^\ssp(t) + \cbf_2^\ssp)\|
      = 0,
      \\
      \notag &\!\!\!\!\qquad \vdots
      \\
      &\!\!\!\lim_{\epsilonb \to \zeros} \sup_{t \in [\underline t, \bar t]}
      \!\!\|\xbf_N^\ssp(t) \!-\! h_N^\ssp(\Wbf_{N, N-1}^{\ssp\ssp} \xbf_{N-1}^\ssp(t) \!+\!
      \cbf_N^\ssp)\| \!=\! 0.
    \end{align}
  \end{subequations}
\end{theorem}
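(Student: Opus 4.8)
The plan is to collapse layers $2,\dots,N$ into a single ``fast block'' slaved to layer~$1$ and then to run the same Tikhonov-type argument as in the proof of Theorem~\ref{thm:sp-inhib}, with the extra burden of proving that the fast block is itself GES across its nested timescales. First I would design the controls. For each $i\in\{2,\dots,N\}$, choose the feedforward term $\bar\ubf_i(t)\ge\zeros$ large enough (possibly together with a feedback gain $\Kbf_i$ that cancels the $\ssm\ssm$ and $\ssm\ssp$ blocks of $\Wbf_{i,i}$ when $\Bbf_i^\ssm$ has enough authority, generically when $p_i\ge r_i$) so that the argument of the projection in the $\xbf_i^\ssm$-block of~\eqref{eq:dyn-multi} is rendered componentwise nonpositive for all $t$; this is admissible because the neighboring states are bounded, which must be propagated down the hierarchy starting from the bounded ROM of assumption~\emph{(i)}, and — as in the feedforward case of Theorem~\ref{thm:sp-inhib} — the required magnitude is obtained by solving $\Bbf_i^\ssm\bar\ubf_i=-[[\Wbf_{i,i}^{\ssm\ssm}\ \Wbf_{i,i}^{\ssm\ssp}]]_\zeros^\infty\nub_i(\cdot)-\cdots$ against the monotone-boundedness data. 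Then $\tau_i\dot\xbf_i^\ssm=-\xbf_i^\ssm\to\zeros$, and, using the partitioning~\eqref{eq:wbi}, $\Bbf_i=[\Bbf_i^\ssm;\zeros]$, and the conventions $\Wbf_{1,0}=\zeros,\ \Wbf_{N,N+1}=\zeros$, the $\xbf_i^\ssp$-dynamics is perturbed only by exponentially vanishing terms $\Wbf_{i,i}^{\ssp\ssm}\xbf_i^\ssm,\ \Wbf_{i,i\pm1}^{\ssp\ssm}\xbf_{i\pm1}^\ssm$; invoking \cite[Lemma~A.2]{EN-JC:18-tacI} (as in the bilayer proof) reduces the asymptotic analysis to the clean recursion $\tau_i\dot\xbf_i^\ssp=-\xbf_i^\ssp+[\Wbf_{i,i}^{\ssp\ssp}\xbf_i^\ssp+\Wbf_{i,i-1}^{\ssp\ssp}\xbf_{i-1}^\ssp+\Wbf_{i,i+1}^{\ssp\ssp}\xbf_{i+1}^\ssp+\cbf_i^\ssp]_\zeros^{\mbf_i^\ssp}$, for which the $h_i^\ssp$ are exactly the nested quasi-steady states.

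Next I would record regularity of the equilibrium maps: by assumption~\emph{(ii)} each $h_i^\ssp$ is single-valued, and by Lemmas~\ref{lem:affinity} and~\ref{lem:h-lip-gen}, applied recursively from $i=N$ down to $i=2$, each $h_i^\ssp$ is piecewise affine and globally Lipschitz. Hence the composite fast-block equilibrium map $\xbf_1\mapsto\big(\zeros,\,h_2^\ssp(\Wbf_{2,1}^{\ssp\ssp}\xbf_1+\cbf_2^\ssp),\,h_3^\ssp(\cdots),\dots\big)$ is globally Lipschitz, and the right-hand side of the layer-$1$ ROM of assumption~\emph{(i)} — this composite map fed through one more linear-threshold layer — is globally Lipschitz with bounded solutions.

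The heart of the argument is to show that the decoupled fast block (layers $2,\dots,N$ with $\xbf_1$ frozen) is GES towards $\big(\zeros,h_2^\ssp(\cdots),\dots,h_N^\ssp(\cdots)\big)$, uniformly in the frozen $\xbf_1$ and in $t$, once $\epsilon_2,\dots,\epsilon_{N-1}$ in~\eqref{eq:eps} are small enough. I would prove this by downward induction on the layer index. The base case $i=N$ is precisely assumption~\emph{(ii)} for $i=N$, where $h_N^\ssp=h_{\Wbf_{N,N}^{\ssp\ssp},\mbf_N^\ssp}$ and the dynamics is a standard linear-threshold network. For the inductive step, given GES of layers $\{i+1,\dots,N\}$, I would apply the paper's converse Lyapunov theorem to obtain smooth, quadratically-bounded Lyapunov functions both for this block and for layer~$i$'s reduced fast dynamics (GES by assumption~\emph{(ii)}), form a composite Lyapunov function $V=V_i+\mu V_{i+1,\dots,N}$, and run the classical two-timescale estimate (cf.~\cite[Ch.~11]{HKK:02}), bounding the interconnection cross-terms through the global Lipschitzness of the $h_j^\ssp$; this yields GES of layers $\{i,\dots,N\}$ for $\epsilon_i$ sufficiently small. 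The smoothness provided by the converse Lyapunov theorem is essential here, since linear-threshold dynamics only supplies a non-smooth, piecewise-quadratic Lyapunov function a priori, which is incompatible with the composite-Lyapunov construction — this is why that theorem is a needed ingredient and the main obstacle of the proof. (Equivalently, this step can be packaged as the auxiliary multi-layer GES result combining assumption~\emph{(ii)} with the timescale separation~\eqref{eq:tau}.)

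Finally, with the fast block shown GES uniformly in $\xbf_1$ and $t$, its equilibrium map Lipschitz, and the layer-$1$ ROM Lipschitz with bounded solutions, I would invoke \cite[Prop~1]{VV:97} exactly as in the proof of Theorem~\ref{thm:sp-inhib} — after rescaling time by $\tau_1$ and verifying the five simplified hypotheses in the footnote there — with slow variable $\xbf_1$ and small parameter $\epsilon_1$; the joint limit $\epsilonb\to\zeros$ then follows by taking $\epsilon_1\to0$ along sequences of sufficiently small $(\epsilon_2,\dots,\epsilon_{N-1})$. Unpacking the resulting uniform convergence on $[\underline t,\bar t]$ gives $\|\xbf_i^\ssm(t)\|\to0$ and $\|\xbf_i^\ssp(t)-h_i^\ssp(\Wbf_{i,i-1}^{\ssp\ssp}\xbf_{i-1}^\ssp(t)+\cbf_i^\ssp)\|\to0$ for $i\in\{2,\dots,N\}$, while $\|\xbf_1^\ssp(t)-\bar\xbf_1^\ssp(t)\|\to0$ holds on the full interval $[0,\bar t]$ since $\xbf_1$ is the slow variable and shares the initial condition of the ROM (no boundary layer in the slow variable), which is~\eqref{eq:multi-tik}. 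Besides the third step, the only remaining nuisance is the circular dependence between admissibility of the dominating controls $\bar\ubf_i(t)$ and the boundedness of the closed-loop states, resolved by the top-down boundedness propagation noted above.
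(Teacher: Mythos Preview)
Your route is genuinely different from the paper's. You treat layers $2,\dots,N$ as a single fast block, establish its GES via an inductive composite-Lyapunov argument (leaning on the converse Lyapunov theorem for smoothness), and then invoke \cite[Prop~1]{VV:97} once between $\xbf_1$ and that block. The paper instead applies \cite[Prop~1]{VV:97} iteratively, $N-1$ times: first let $\epsilon_{N-1}\to 0$ to replace $\xbf_N$ by $h_N^\ssp$ in an $(N-1)$-layer ROM, then let $\epsilon_{N-2}\to 0$ in that ROM, and so on, stitching the successive estimates with the triangle inequality and the Lipschitzness of the $h_i^\ssp$ from Lemmas~\ref{lem:affinity} and~\ref{lem:h-lip-gen}. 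The advantage of the paper's sequential collapse is that each invocation of \cite[Prop~1]{VV:97} needs only the GES of a \emph{single} layer's reduced dynamics, which is exactly assumption~\emph{(ii)}; no composite Lyapunov function, no converse theorem, and no thresholds on $\epsilon_2,\dots,\epsilon_{N-1}$ that cascade through the induction are required. Your approach can be made to work, but the uniformity of the composite Lyapunov constants in $\xbf_1$ and in the inner $\epsilon_j$'s needs care that you only gesture at.

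Your control design also diverges from the paper's and carries an extra assumption. You invoke monotone-boundedness data $\nub_i$ to size a constant (or bound-based) $\bar\ubf_i$, but monotone boundedness is not hypothesized in Theorem~\ref{thm:multi}. The paper avoids this by letting $\bar\ubf_i(t)$ be \emph{time-varying and set by layer $i-1$}, namely $\bar\ubf_i(t)\le -\Wbf_{i,i-1}^{\ssm,\ssmp}\xbf_{i-1}(t)-\cbf_i^\ssm$, which dissolves the circularity you flag at the end (layer $i-1$ knows its own state). Moreover, for $i<N$ the paper's feedback gain is sized as $\Bbf_i^\ssm\Kbf_i\le -|\Wbf_{i,i}^{\ssm,\ssmp}|-|\Wbf_{i,i+1}^{\ssm\ssp}|\bar\Fbf_{i+1}|\Wbf_{i+1,i}^{\ssp,\ssmp}|$, explicitly dominating the coupling through $h_{i+1}^\ssp$ that appears once $\xbf_{i+1}$ is replaced by its equilibrium in the $(N-i)$th-step ROM; your gain choice (cancelling only the $\ssm\ssm$ and $\ssm\ssp$ blocks of $\Wbf_{i,i}$) does not cover that term.
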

\begin{proof}
  For any $2 \times 2$ block-partitioned matrix $\Wbf$, we introduce
  the convenient notation $\Wbf^{\ell,\ssmp} \triangleq [\Wbf^{\ell\ssm} \
  \Wbf^{\ell\ssp}]$
  and $\Wbf^{\ssmp,\ell} \triangleq [(\Wbf^{\ssm\ell})^T \
  (\Wbf^{\ssp\ell})^T]^T$ for $\ell = \ssm, \ssp$. Further, for any $i
  \in \{2, \dots, N\}$, let $\xbf_{1:i} = [\xbf_1^T \ \dots \
  \xbf_i^T]^T$.
  To begin with, let $\Kbf_N$ and $\bar \ubf_N$ be such that
  \begin{subequations}\label{eq:inhib-ineqs}
    \begin{align}
      \Bbf_N^\ssm \Kbf_N &\le -\Wbf_{N,
        N}^{\ssm,\ssmp}, \label{eq:inhib-ineqs1}
      \\
      \bar \ubf_N(t) &\le -\Wbf_{N, N-1}^{\ssm,\ssmp} \xbf_{N-1}(t) -
      \cbf_N^\ssm, \qquad \forall t, \label{eq:inhib-ineqs2}
    \end{align}
  \end{subequations}
  Note that, if $p_N \ge r_N$, then~\eqref{eq:inhib-ineqs1} can be
  satisfied with equality. Otherwise, \eqref{eq:inhib-ineqs1} can
  still be satisfied since all the rows of $\Bbf_N^\ssm$ are nonzero,
  but may require excessive amounts of inhibition. Also, notice that
  $\bar \ubf_N$ is set by the subnetwork $N - 1$, which has access to
  $\xbf_{N-1}(t)$ and can thus fulfill~\eqref{eq:inhib-ineqs2}. As a
  result, the nodes in $\xbf_N^\ssm$ are fully inhibited and evolve
  according to
  $\tau_N \dot \xbf_N^\ssm = -\xbf_N^\ssm$.
  The overall dynamics become
  \begin{align*}
    \tau_1 \dot \xbf_1
      &\!=\! -\xbf_1 + [\Wbf_{1, 1} \xbf_1 + \Wbf_{1, 2} \xbf_2 +
      \cbf_1]_\zeros^{\mbf_1},
      \\
      &\ \ \vdots
      \\
      \tau_{N-1} \dot \xbf_{N-1} &\!=\! -\xbf_{N-1} + [\Wbf_{N-1, N-1}
      \xbf_{N-1} \!+\! \Bbf_{N-1} \ubf_{N-1}
      \\
      &\quad \!\!\!\!\!\!\!\!\!\!\!\!+\! \Wbf_{N-1, N} \xbf_N \!+\!
      \Wbf_{N-1, N-2} \xbf_{N-2} \!+\!
      \cbf_{N-1}]_\zeros^{\mbf_{N-1}}\!\!,
      \\
      \epsilon_{N-1} \tau_{N-1} \dot \xbf_N^\ssm &\!=\! -\xbf_N^\ssm,
      \\
      \epsilon_{N-1} \tau_{N-1} \dot \xbf_N^\ssp &\!=\! -\xbf_N^\ssp
      \!+\! [\Wbf_{N, N}^{\ssp,\ssmp} \xbf_N \!+\! \Wbf_{N,
        N-1}^{\ssp,\ssmp} \xbf_{N-1} \!+\!
      \cbf_N^\ssp]_\zeros^{\mbf_N^\ssp}.
  \end{align*}
  Letting $\epsilon_{N-1} \to 0$, we get our first separation of
  timescales between $\xbf_N$ and $\xbf_{1:N-1}$, as follows. For any
  constant $\xbf_{N-1}$, the $\xbf_N$ dynamics is GES by assumption
  \emph{(ii)} and \cite[Lemma~A.2]{EN-JC:18-tacI}. Further, the
  equilibrium map $h_N = (\zeros_{r_N}, h_N^\ssp)$
  of the $N$'th subnetwork is globally Lipschitz by
  Lemmas~\ref{lem:affinity} and~\ref{lem:h-lip-gen}, and the entire
  vector field of network dynamics is globally Lipschitz due to the
  Lipschitzness of $[\, \cdot\, ]_0^m$. Therefore, it follows from~\cite[Prop
  1]{VV:97} that for any $0 < \underline t < \bar t < \infty$,
  \begin{align*}
    &\lim_{\epsilon_{N-1} \to 0} \sup_{t \in [\underline t, \bar t]}
    \|\xbf_N^\ssm(t)\| = 0,
    \\
    &\lim_{\epsilon_{N-1} \to 0} \sup_{t \in [\underline t, \bar t]}
    \|\xbf_N^\ssp(t) - h_N^\ssp(\Wbf_{N, N-1}^{\ssp,\ssmp} \xbf_{N-1}(t) +
    \cbf_N^\ssp)\| = 0,
    \\
    &\lim_{\epsilon_{N-1} \to 0} \sup_{t \in [0, \bar t]}
    \|\xbf_{1:N-1}(t) - \xbf_{1:N-1}^{(1)}(t)\| = 0.
  \end{align*}
  Here, $\xbf_{1:N-1}^{(1)}$ is the solution of the ``first-step ROM"
  \begin{align*}
    &\tau_1 \dot \xbf_1^{(1)} = -\xbf_1^{(1)} + [\Wbf_{1, 1} \xbf_1^{(1)}
    + \Wbf_{1, 2} \xbf_2^{(1)} + \cbf_1]_\zeros^{\mbf_1},
    \\
    &\ \ \vdots
    \\
    &\tau_{N-1} \dot \xbf_{N-1}^{(1)} = -\xbf_{N-1}^{(1)} + [\Wbf_{N-1, N-1}
    \xbf_{N-1}^{(1)}
    \\
    &\qquad \qquad + \Wbf_{N-1, N}^{\ssmp,\ssp} h_N^\ssp(\Wbf_{N, N-1}^{\ssp,\ssmp}
    \xbf_{N-1}^{(1)}(t) + \cbf_N^\ssp)
    \\
    &\qquad \qquad + \Wbf_{N-1, N-2} \xbf_{N-2}^{(1)} + \Bbf_{N-1}
    \ubf_{N-1} + \cbf_{N-1}]_\zeros^{\mbf_{N-1}},
  \end{align*}
  which results from replacing $\xbf_N$ with its equilibrium
  value. Except for technical adjustments, the remainder of the proof
  essentially follows by repeating this process $N - 2$ times. In
  particular, for $i = N-1, \dots, 2$, let $\Kbf_i$ and $\bar \ubf_i$
  be such that
{\interdisplaylinepenalty=10000
  \begin{align*}
    \Bbf_i^\ssm \Kbf_i &\le -|\Wbf_{i, i}^{\ssm,\ssmp}| - |\Wbf_{i, i+1}^{\ssm\ssp}| \bar
    \Fbf_{i+1} |\Wbf_{i+1, i}^{\ssp,\ssmp}|,
    \\
    \bar \ubf_i(t) &\le -\Wbf_{i, i-1}^{\ssm:} \xbf_{i-1}(t) - \cbf_i^\ssm,
    \qquad \forall t,
  \end{align*}
}
  where $\bar \Fbf_i \in \real^{(n_i - r_i) \times (n_i - r_i)}$ is the entry-wise
  maximal gain of the map $h_i^\ssp$ over $\real^{n_i - r_i}$
  (cf. Theorem~\ref{thm:multi-EUE-GES}). This results in the ``$(N - i)$'th-step ROM"
  \begin{align*}
    &\tau_1 \dot \xbf_1^{(N-i)} \!=\! -\xbf_1^{(N-i)} \!\!+\!
    [\Wbf_{1, 1} \xbf_1^{(N-i)} \!\!+\! \Wbf_{1, 2} \xbf_2^{(N-i)}
    \!\!+\! \cbf_1]_\zeros^{\mbf_1},
    \\
    &\qquad \vdots
    \\
    &\tau_{i-1} \dot \xbf_{i-1}^{(N-i)} = -\xbf_{i-1}^{(N-i)} +
    [\Wbf_{i-1, i-1} \xbf_{i-1}^{(N-i)}
    \\
    &\qquad \qquad \qquad \ \ + \Wbf_{i-1, i} \xbf_i^{(N-i)} +
    \Wbf_{i-1, i-2} \xbf_{i-2}^{(N-i)}
    \\
    &\qquad \qquad \qquad \ \ + \Bbf_{i-1} \ubf_{i-1} + \cbf_{i-1}]_\zeros^{\mbf_{i-1}},
    \\
    &\epsilon_{i-1} \tau_{i-1} \dot \xbf_i^{(N-i)\ssm} =
    -\xbf_i^{(N-i)\ssm},
    \\
    &\epsilon_{i-1} \tau_{i-1} \dot \xbf_i^{(N-i)\ssp} =
    -\xbf_i^{(N-i)\ssp} + [\Wbf_{i, i}^{\ssp,\ssmp} \xbf_i^{(N-i)\ssp}
    \\
    &\qquad \qquad \qquad \quad \ + \Wbf_{i, i+1}^{\ssmp,\ssp}
    h_{i+1}^\ssp(\Wbf_{i+1, i}^{\ssp,\ssmp} \xbf_i^{(N-i)}(t) \!+\!
    \cbf_{i+1}^\ssp)
    \\
    &\qquad \qquad \qquad \quad \ + \Wbf_{i, i-1}^{\ssp,\ssmp}
    \xbf_{i-1}^{(N-i)} + \cbf_i^\ssp]_\zeros^{\mbf_i^\ssp}.
  \end{align*}
  Similarly to above, invoking~\cite[Prop 1]{VV:97} then ensures that
  \begin{align*}
    &\lim_{\epsilonb \to \zeros} \sup_{t \in [\underline t, \bar t]}
    \|\xbf_i^{(N-i)\ssm}(t)\| = 0,
    \\
    &\lim_{\epsilonb \to \zeros} \sup_{t \in [\underline t, \bar t]}
    \|\xbf_i^{(N-i)\ssp}(t) \!-\! h_i^\ssp(\Wbf_{i, i-1}^{\ssp,\ssmp}
    \xbf_{i-1}^{(N-i)}(t) \!+\!  \cbf_i^\ssp)\| = 0,
    \\
    &\lim_{\epsilonb \to 0} \sup_{t \in [0, \bar t]}
    \|\xbf_{1:i-1}^{(N-i)}(t) - \xbf_{1:i-1}^{(N-i+1)}(t)\| = 0.
  \end{align*}
  Note that, after every invocation of~\cite[Prop 1]{VV:97}, the
  super-index inside the parenthesis increases by $1$, showing one
  more replacement of a fast dynamics by its equilibrium state. In
  particular, after the $(N - 1)$'th invocation of~\cite[Prop
  1]{VV:97}, we reach $\xbf_1^{(N-1)\ssp}$, which is the same as $\bar
  \xbf_1^\ssp$ in the statement. Together, these results (and
  sufficiently many applications of the triangle inequality and
  Lemma~\ref{lem:h-lip-gen}) ensure~\eqref{eq:multi-tik}.
\end{proof}

An instructive difference, by design, between
Theorems~\ref{thm:sp-inhib} and~\ref{thm:multi} is the separate
treatment of feedforward and feedback inhibition in the former versus
the combination of the two in the latter. While the separate treatment
in Theorem~\ref{thm:sp-inhib} is conceptually simpler and highlights
the theoretical difference between the two inhibitory mechanisms, the
combination in Theorem~\ref{thm:multi} results in more flexibility and
less conservativeness: in pure feedforward inhibition, countering
local excitations requires monotone boundedness and a sufficiently
large $\bar \ubf$ providing inhibition under the worst-case scenario,
a goal that is achieved more efficiently using feedback.  On the other
hand, pure feedback inhibition needs to dynamically cancel local
excitations at all times and is also unable to counter the effects of
constant background excitation, limitations that are easily addressed
when combined with feedforward inhibition.

Similar to Theorem~\ref{thm:sp-inhib}
(cf. Remark~\ref{re:assumptions-reasonable}), assumption~\emph{(ii)}
of Theorem~\ref{thm:multi} is its only critical requirement which we
next relate to the joint structure of the subnetworks.

\begin{theorem}\longthmtitle{Sufficient condition for existence and
    uniqueness of equilibria and GES in multilayer linear-threshold
    networks}\label{thm:multi-EUE-GES} 
  Let $h:\real^n \to \real^n$ be a piecewise affine function of the
  form
  \begin{align}\label{eq:h-aff-form}
    \notag h(\cbf) = \Fbf_\lambda \cbf + \fbf_\lambda, \qquad &\forall \cbf
    \in \Psi_\lambda \triangleq \setdef{\cbf}{\Gbf_\lambda \cbf +
      \gbf_\lambda \ge \zeros},
    \\
    &\forall \lambda \in \Lambda,
  \end{align}
  where $\Lambda$ is a finite index set and $\bigcup_{\lambda \in
    \Lambda} \Psi_\lambda = \real^n$. Further, let $\bar \Fbf
  \triangleq \max_{\lambda \in \Lambda}|\Fbf_\lambda|$ be the matrix
  whose elements are the maximum of the corresponding elements from
  $\{|\Fbf_\lambda|\}_{\lambda \in \Lambda}$. For arbitrary matrices
  $\Wbf_\ell$, $ \ell = 1, 2, 3$, if $ \rho\big(|\Wbf_1| + |\Wbf_2|
  \bar \Fbf |\Wbf_3|\big) < 1$, then the linear-threshold dynamics
  \begin{align*}
    \tau \dot \xbf(t) = -\xbf(t) + [\Wbf_1 \xbf(t) + \Wbf_2 h(\Wbf_3
    \xbf(t) + \bar \cbf) + \cbf]_\zeros^\mbf,
  \end{align*}
  is GES towards a unique equilibrium
  for all $\bar \cbf$ and $\cbf$.
\end{theorem}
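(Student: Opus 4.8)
The plan is to treat this as a globally Lipschitz piecewise-affine system and reduce the statement to a contraction argument in a weighted $\ell^\infty$-norm governed by the nonnegative matrix $M \triangleq |\Wbf_1| + |\Wbf_2|\bar\Fbf|\Wbf_3|$. Writing $T(\xbf) \triangleq [\Wbf_1\xbf + \Wbf_2 h(\Wbf_3\xbf + \bar\cbf) + \cbf]_\zeros^\mbf$, the equilibria are exactly the fixed points of $T$ and the dynamics reads $\tau\dot\xbf = -\xbf + T(\xbf)$. Since the box projection and the affine maps are globally Lipschitz and $h$ is globally Lipschitz by Lemma~\ref{lem:h-lip-gen}, $T$ and hence the vector field are globally Lipschitz, so solutions exist, are unique, and are complete; it therefore suffices to prove that $T$ has a unique fixed point and that every solution converges to it exponentially with rate uniform in $\bar\cbf,\cbf$.

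The key estimate is the \emph{componentwise} bound $|h(\cbf) - h(\cbf')| \le \bar\Fbf\,|\cbf - \cbf'|$ for all $\cbf,\cbf' \in \real^n$, with $|\cdot|$ taken entrywise. To get it, observe that the segment $[\cbf',\cbf]$ meets only finitely many of the polyhedra $\{\Psi_\lambda\}$, hence splits into finitely many consecutive subsegments $[\pbf_{k-1},\pbf_k]$ on each of which $h$ is affine with slope $\Fbf_{\lambda_k}$ for some $\lambda_k\in\Lambda$ (these slopes, restricted to the segment direction, agree at the junctions because $h$, being a single-valued function whose pieces cover $\real^n$, is continuous). Telescoping, $h(\cbf)-h(\cbf') = \sum_k \Fbf_{\lambda_k}(\pbf_k-\pbf_{k-1})$; since $|\Fbf_{\lambda_k}| \le \bar\Fbf$ and each $\pbf_k-\pbf_{k-1}$ is a nonnegative multiple of $\cbf-\cbf'$ with $\sum_k|\pbf_k-\pbf_{k-1}| = |\cbf-\cbf'|$, the bound follows. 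Combining it with the entrywise nonexpansiveness of the box projection, $|[\abf]_\zeros^\mbf-[\bbf]_\zeros^\mbf|\le|\abf-\bbf|$, yields
\[
  |T(\xbf)-T(\ybf)| \;\le\; |\Wbf_1|\,|\xbf-\ybf| + |\Wbf_2|\,\bar\Fbf\,|\Wbf_3|\,|\xbf-\ybf| \;=\; M\,|\xbf-\ybf|, \qquad \forall\, \xbf,\ybf .
\]

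Now exploit $\rho(M)<1$. Since $M\ge\zeros$, the vector $\vbf \triangleq (\Ibf-M)^{-1}\ones$ is well defined with $\vbf\ge\ones>\zeros$, and $M\vbf = \vbf-\ones<\vbf$, so $c \triangleq \max_i (M\vbf)_i/v_i \in(0,1)$ and $M\vbf\le c\,\vbf$. Equip $\real^n$ with $\|\xbf\|_\vbf \triangleq \max_i |x_i|/v_i$, which is equivalent to $\|\cdot\|$. From the displayed bound, $|\xbf-\ybf|\le\|\xbf-\ybf\|_\vbf\,\vbf$ gives $|T(\xbf)-T(\ybf)|\le\|\xbf-\ybf\|_\vbf\,M\vbf\le c\,\|\xbf-\ybf\|_\vbf\,\vbf$, i.e.\ $\|T(\xbf)-T(\ybf)\|_\vbf\le c\,\|\xbf-\ybf\|_\vbf$; thus $T$ is a contraction and has a unique fixed point $\xbf^\ast$, the unique equilibrium, for every $\bar\cbf,\cbf$. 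For GES, set $\zbf(t)\triangleq\xbf(t)-\xbf^\ast$ and $V(t)\triangleq\|\zbf(t)\|_\vbf$. Then $\tau\dot\zbf = -\zbf + (T(\xbf)-T(\xbf^\ast))$ with $|T(\xbf)-T(\xbf^\ast)|\le M|\zbf|\le V\,M\vbf\le cV\,\vbf$ entrywise. By the standard Dini-derivative/Danskin computation, at any time an index $i$ attaining $V$ satisfies $\tau\,\tfrac{d}{dt}(|z_i|/v_i)\le(-|z_i|+cVv_i)/v_i = -(1-c)V$, so $D^+V\le-\tfrac{1-c}{\tau}V$ and $V(t)\le V(0)e^{-(1-c)t/\tau}$; norm equivalence then gives GES toward $\xbf^\ast$ with constants independent of $\bar\cbf,\cbf$.

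The main obstacle is the componentwise Lipschitz estimate on $h$: Lemma~\ref{lem:h-lip-gen} only gives plain global Lipschitzness, whereas the entrywise domination by $\bar\Fbf$ must be argued even when the connecting segment runs along lower-dimensional faces shared by several pieces, which is precisely where continuity of $h$ and consistency of the active affine labels enter. Once this is in place, the remainder is the familiar contraction-plus-weighted-$\ell^\infty$ argument used for linear-threshold stability (cf.\ Part~I~\cite{EN-JC:18-tacI}), with the only additional care being the nonsmoothness of $V$ and the possibility of $\xbf(t)$ lying on a region boundary, neither of which affects the one-sided estimates above.
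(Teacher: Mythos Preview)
Your proof is correct and follows essentially the same strategy as the paper: establish the componentwise Lipschitz estimate $|h(\cbf)-h(\cbf')|\le\bar\Fbf\,|\cbf-\cbf'|$ via segment decomposition, then use it to show $T$ is a contraction in a weighted norm adapted to the nonnegative matrix $M=|\Wbf_1|+|\Wbf_2|\bar\Fbf|\Wbf_3|$, and finally convert the contraction into exponential stability.

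The differences are minor but worth noting. The paper works in the weighted $\ell^1$-norm $\|\vbf\|_\alphab=\alphab^T|\vbf|$ built from the \emph{left} Perron--Frobenius eigenvector of $M$ (after a small irreducibility perturbation), whereas you work in the dual weighted $\ell^\infty$-norm built from the \emph{right} subinvariant vector $(\Ibf-M)^{-1}\ones$; both yield contraction factor tied to $\rho(M)$. For GES, the paper uses the change of variables $\xib=(\xbf-\xbf^\ast)e^t$ and invokes a Gronwall-type lemma from~\cite{KPH-DK:87}, while your Dini-derivative computation on $V=\|\zbf\|_\vbf$ is more self-contained and avoids the external reference. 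Neither approach is more general than the other here; yours has the small advantage of not needing the irreducibility perturbation, while the paper's $\ell^1$ route makes the contraction constant exactly $\rho(M)$ rather than the slightly looser $c=\max_i(M\vbf)_i/v_i$.
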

\begin{proof}
  We use the same proof technique as in~\cite[Prop. 3]{JF-KPH:96}.
  For simplicity, assume that $|\Wbf_1| + |\Wbf_2| \bar \Fbf |\Wbf_3|$
  is irreducible (i.e., the network topology induced by it is strongly
  connected)%
  \footnote{If $|\Wbf_1| + |\Wbf_2| \bar \Fbf |\Wbf_3|$ is not
    irreducible, it can be ``upper-bounded'' by the irreducible matrix
    $|\Wbf_1| + |\Wbf_2| \bar \Fbf |\Wbf_3| + \mu \ones_n \ones_n^T$,
    with $\mu > 0$ sufficiently small such that $\rho(|\Wbf_1| +
    |\Wbf_2| \bar \Fbf |\Wbf_3| + \mu \ones_n \ones_n^T) < 1$. The
    same argument can then be employed for this upper bound.}.
  Then, the left Perron-Frobenius eigenvector $\alphab$ of $|\Wbf_1| +
  |\Wbf_2| \bar \Fbf |\Wbf_3|$ has positive entries~\cite[Fact
  4.11.4]{DSB:09}, making the map $\|\cdot\|_\alphab: \vbf \to
  \|\vbf\|_\alphab \triangleq \alphab^T |\vbf|$ a norm on $\real^n$.
  Further, it can be shown, similar to the proof of
  Lemma~\ref{lem:h-lip-gen}, that for all $\cbf_1, \cbf_2 \in
  \real^n$,
    $|h(\cbf_1) - h(\cbf_2)| \le \bar \Fbf |\cbf_1 - \cbf_2|$,
  where the inequality is entrywise. Thus, for any $\xbf, \hat \xbf
  \in \real^n$, 
  \begin{align*}
      &\big\|[\Wbf_1 \xbf + \Wbf_2 h(\Wbf_3 \xbf + \wbf) + \cbf]_\zeros^\mbf
      \\
      &\qquad \qquad \qquad - [\Wbf_1 \hat \xbf + \Wbf_2 h(\Wbf_3 \hat
      \xbf + \wbf) + \cbf]_\zeros^\mbf\big\|_\alphab
      \\
      &= \alphab^T \big|[\Wbf_1 \xbf + \Wbf_2 h(\Wbf_3 \xbf + \wbf) +
      \cbf]_\zeros^\mbf
      \\
      &\qquad \qquad \qquad - [\Wbf_1 \hat \xbf + \Wbf_2 h(\Wbf_3 \hat
      \xbf + \wbf) + \cbf]_\zeros^\mbf\big|
      \\
      &\le \alphab^T \big|\Wbf_1(\xbf - \hat \xbf) + \Wbf_2(h(\Wbf_3
      \xbf + \wbf) - h(\Wbf_3 \hat \xbf + \wbf))\big|
      \\
      &\le \alphab^T \big(|\Wbf_1| + |\Wbf_2| \bar \Fbf |\Wbf_3|\big)
      |\xbf - \hat \xbf|
      \\
      &= \rho\big(|\Wbf_1| + |\Wbf_2| \bar \Fbf |\Wbf_3|\big)
      \alphab^T |\xbf - \hat \xbf|
      \\
      &= \rho\big(|\Wbf_1| + |\Wbf_2| \bar \Fbf |\Wbf_3|\big) \|\xbf -
      \hat \xbf\|_\alphab.
    \end{align*}
    This proves that $\xbf \mapsto [\Wbf_1 \xbf + \Wbf_2 h(\Wbf_3 \xbf
    + \wbf) + \cbf]_\zeros^\mbf$ is a contraction (on $\realnonneg^n$
    if $\mbf = \infty \ones_n$ or on $\prod_i [0, m_i]$ if $\mbf <
    \infty \ones_n$) and has a unique fixed point, denoted $\xbf^*$,
    by the Banach Fixed-Point Theorem~\cite[Thm 9.23]{WR:76}.
  
  To show GES, let $\xib(t) \triangleq (\xbf(t) - \xbf^*) e^t$,
  satisfying
  \begin{align}\label{eq:xi-dot-2}
    \tau \dot \xib(t) = \Mbf(t) \Wbf \xib(t),
  \end{align}
  where $\Mbf(t)$ is a diagonal matrix with diagonal entries
\begin{align*}
    M_{ii}(t) \triangleq
      \frac{\big([\Wbf_1 \xbf(t) + \Wbf_2 h(\Wbf_3 \xbf(t) + \wbf) +
        \cbf]_\zeros^\mbf - \xbf^*)_i}{\xi_i(t)} &
  \end{align*}
  if $\xi_i(t) \neq 0$ and $M_{ii}(t) \triangleq 0$ otherwise. Then
  \begin{align*}
  |\Mbf(t)| \le |\Wbf_1| + |\Wbf_2| \bar \Fbf |\Wbf_3|, \qquad \forall t \ge 0,
  \end{align*} 
  where the inequality is entry-wise. Then, by using~\cite[Lemma]{KPH-DK:87}
  (which is essentially a careful application of Gronwall-Bellman's
  Inequality~\cite[Lemma A.1]{HKK:02} to~\eqref{eq:xi-dot-2}),
  \begin{align*}
    &\|\xib(t)\|_\alphab \le \|\xib(0)\|_\alphab e^{\rho(|\Wbf_1| + |\Wbf_2| \bar \Fbf |\Wbf_3|) t}
    \\
    &\Rightarrow \|\xbf(t) \!-\! \xbf^*\|_\alphab \!\!\le \|\xbf(0) \!-\!
    \xbf^*\|_\alphab e^{-(1 - \rho(|\Wbf_1| + |\Wbf_2| \bar \Fbf |\Wbf_3|)) t},
  \end{align*}
  establishing GES by the equivalence of norms on $\real^n$.
\end{proof}

Note that Theorem~\ref{thm:multi-EUE-GES} applies to each layer
of~\eqref{eq:dyn-multi} separately. When put together,
Theorem~\ref{thm:multi}\emph{(ii)} is satisfied if
\begin{align}\label{eq:rec-cond-sim}
  \notag &\rho\big(|\Wbf_{2, 2}^{\ssp\ssp}| + |\Wbf_{2, 3}^{\ssp\ssp}|
  \bar \Fbf_3^\ssp |\Wbf_{3, 2}^{\ssp\ssp}|\big) < 1,
  \\
  \notag &\qquad \vdots
  \\
  \notag &\rho\big(|\Wbf_{N-1, N-1}^{\ssp\ssp}| + |\Wbf_{N-1,
    N}^{\ssp\ssp}| \bar \Fbf_N^\ssp |\Wbf_{N, N-1}^{\ssp\ssp}|\big) <
  1,
  \\
  &\rho\big(|\Wbf_{N, N}^{\ssp\ssp}|\big) < 1,
\end{align}
where $\bar \Fbf_i^\ssp, i = 2, \dots, N$ is the matrix described in
Theorem~\ref{thm:multi-EUE-GES} corresponding to $h_i^\ssp$, and the
affine form~\eqref{eq:h-aff-form} of $h_i^\ssp$ is computed
recursively using Lemma~\ref{lem:affinity}.  Moreover, if $\mbf^\ssp_1
= \infty \ones_{r_1}$, then $\rho\big(|\Wbf_{1, 1}^{\ssp\ssp}| +
|\Wbf_{1, 2}^{\ssp\ssp}| \bar \Fbf_2^\ssp |\Wbf_{2,
  1}^{\ssp\ssp}|\big) < 1$ serves as a sufficient condition for
Theorem~\ref{thm:multi}\emph{(i)} (which is trivial if $\mbf^\ssp_1 <
\infty \ones_{r_1}$).

\section{Case Study: Selective Listening in Rodents}

We present an application of our framework to a specific real-world
example of goal-driven selective attention using measurements of
single-neuron activity in the brain.  Beyond the conceptual
illustration of our results in Example~\ref{ex:osc} above, we argue
that the cross-validation of theoretical results with real data
performed here is a necessary step to make a credible case for
neuroscience research and significantly enhances the relevance of the
developed analysis.  We have been fortunate to have access to data
from a novel and carefully designed experimental
paradigm~\cite{CCR-MRD:14,CCR-MRD:14-crcns} that involves goal-driven
selective listening in rodents and displays the key features of
hierarchical selective recruitment noted~here.

\subsection{Description of Experiment and Data}

A long standing question in neuroscience involves our capability to
selectively listen to specific sounds in a crowded
environment~\cite{ECC:53,AWB:15}.  To understand the neuronal basis of
this phenomena, the work~\cite{CCR-MRD:14} has rats simultaneously
presented with two sounds and trains them to selectively respond to
one sound while actively suppressing the distraction from the other. %
In each trial, the animal simultaneously hears a white noise burst and
a narrow-band warble. The noise burst may come from the left or the
right while the warble may have low or high pitch, both chosen at
random. Which of the two sounds (noise burst or warble) is relevant
and which is a distraction depends on the ``rule'' of the trial: in
``localization'' (LC) and ``pitch discrimination'' (PD) trials, the
animal has to make a motor choice based on the location of the noise
burst (left/right) or the pitch of the warble (low/high), resp., to
receive a reward. Each rat performs several blocks of LC and PD trials
during each session (with each block switching randomly between the 4
possible stimulus pairs), requiring it to quickly switch its response
following the rule changes.

While the rats perform the task, spiking activity of single neurons is
recorded in two brain areas: the primary auditory cortex (A1) and the
medial prefrontal cortex (PFC). A1 is the first region in the cortex
that receives auditory information (from subcortical areas and ears),
thus forming a (relatively) low level of the hierarchy. PFC is
composed of multiple regions that form the top of the hierarchy, and
serve functions such as imagination, planning, decision-making, and
attention~\cite{JF:15}. Spike times of 211 well-isolated and
reliable neurons are recorded in $5$ rats, $112$ in PFC and $99$ in
A1, see~\cite{CCR-MRD:14-crcns}.

Using statistical analysis, it was shown in~\cite{CCR-MRD:14} that (i)
the rule of the trial and the stimulus sounds are more strongly
encoded by PFC and A1 neurons, resp., (ii) electrical disruption of
PFC significantly impairs task performance, and (iii) PFC activity
temporally precedes A1 activity. These findings are all consistent
with a model where PFC controls the activity of A1 based on the trial
rule in order to achieve GDSA. We next build on these observations to
define an appropriate network structure and rigorously analyze it
using HSR.
 
\subsection{Choice of Neuronal Populations}

To form meaningful populations among the recorded neurons, we perform
three classifications of them:
\begin{enumerate}[wide]
\item first, we classify the neurons into excitatory and
  inhibitory. The procedure for this classification is based on~the
  neuron's spike waveform: excitatory neurons have slower and
  wider spikes while inhibitory neurons have faster and narrower
  ones~\cite{RMB-DJS:02}.  Using standard k-means clustering on the
  $24$-dimensional spike waveform time-series, we identify $174$
  excitatory and $36$ inhibitory neurons~\footnote{The type of one
    neuron could not be identified with confidence and was discarded
    from further analysis.}  (Figure~\ref{fig:ei}(a)).  These clusters
  conform with spike width difference of excitatory and inhibitory
  neurons (Figure~\ref{fig:ei}(b)) and the common expectation that
  about $80\%$ of mammalian cortical neurons are~excitatory.
  
  \begin{figure}
    \subfloat[]{
      \includegraphics[width=0.5\linewidth]{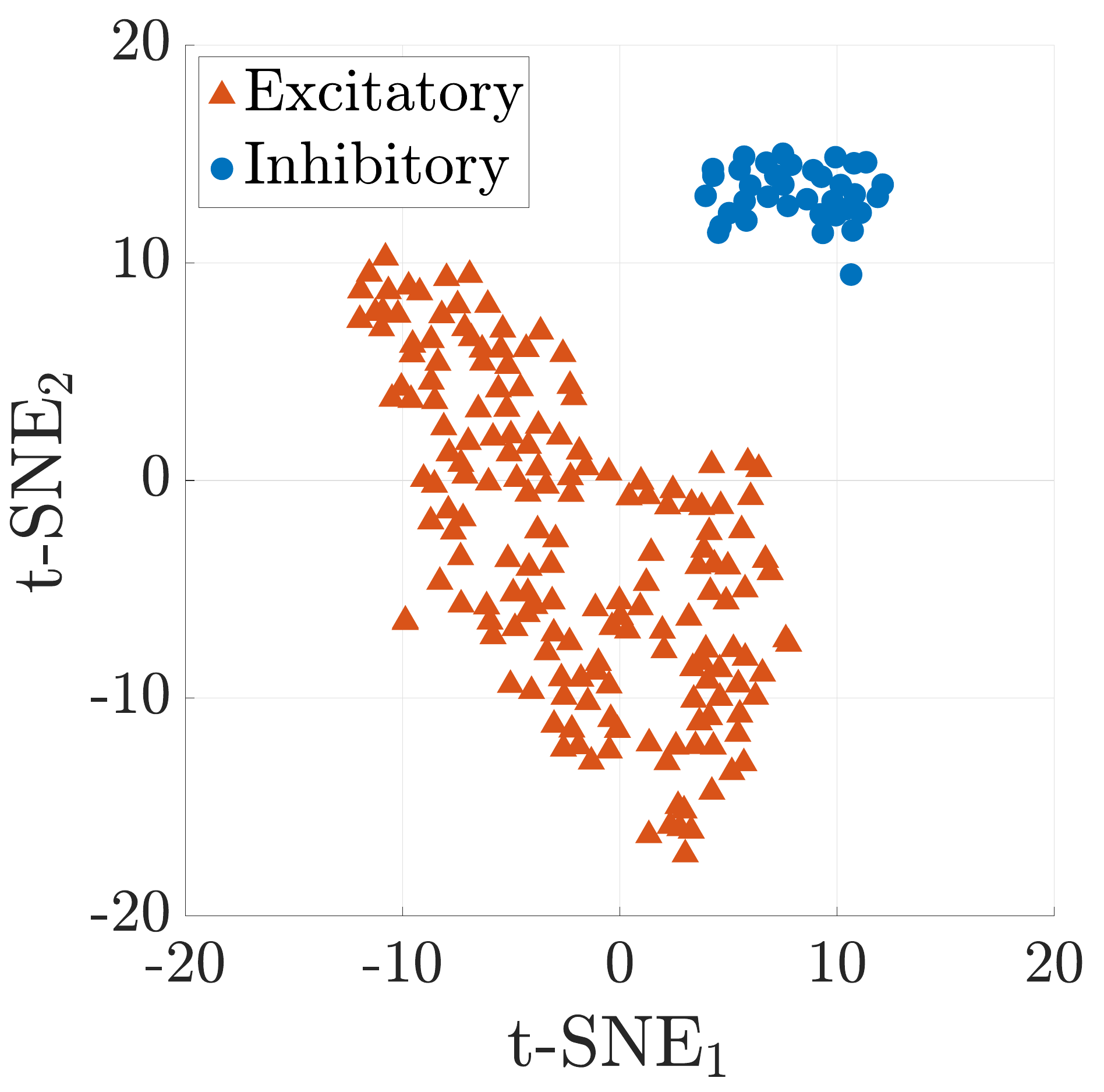}
    }
    \subfloat[]{
      \parbox{0.4\linewidth}{
        \vspace*{-124pt}
        \includegraphics[width=0.95\linewidth]{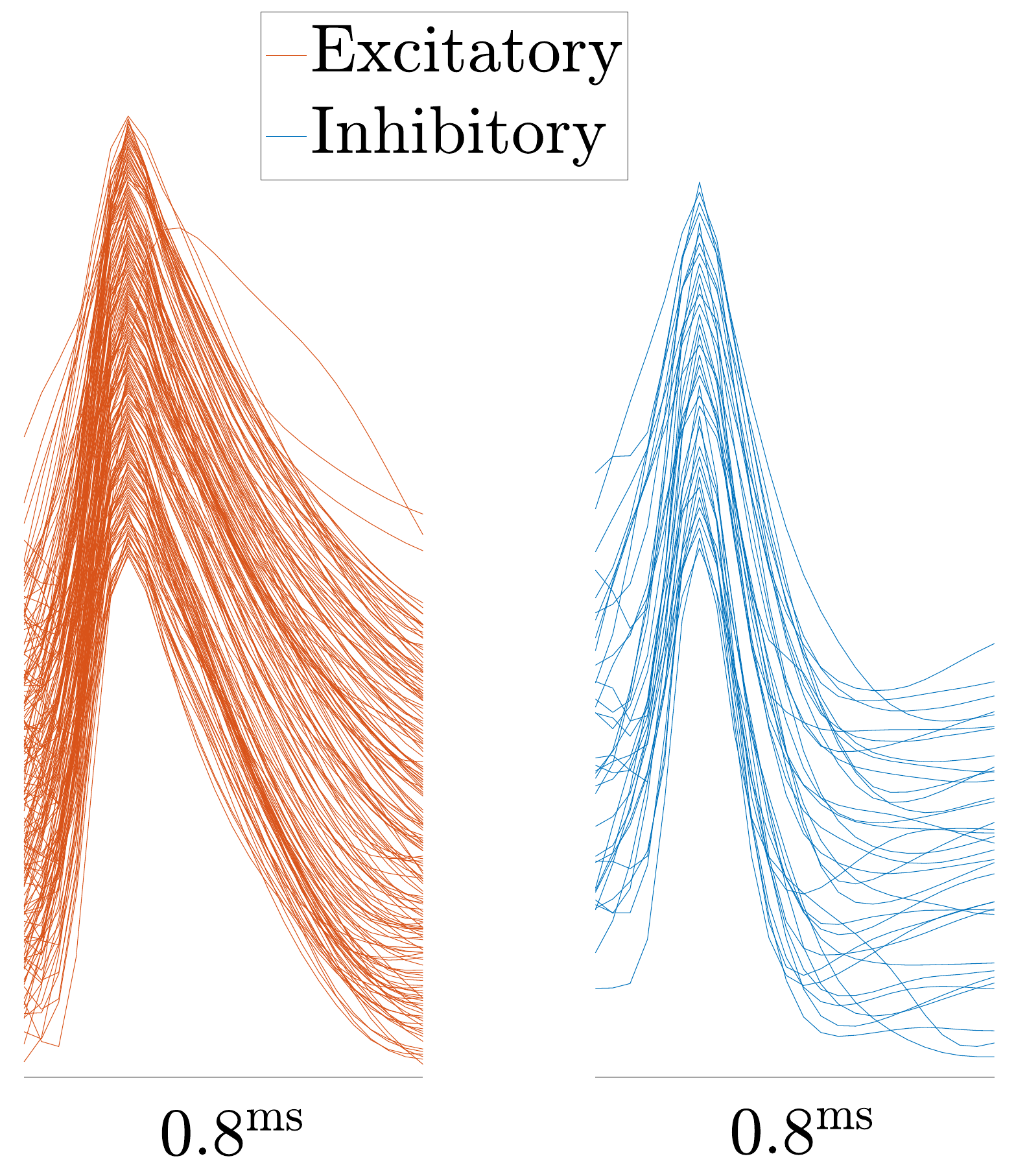}
      }}
    \caption{Excitatory/inhibitory classification of neurons. (a)
      Clusters of spike waveforms. For illustration, clusters are
      shown in the two-dimensional space arising from t-distributed
      stochastic neighbor embedding (t-SNE) dimensionality reduction.
      (b) The spike waveforms of clustered neurons. As expected, the
      inhibitory neurons have faster and narrower spikes.
    }\label{fig:ei}
    \vspace*{-1.5ex}
  \end{figure}

\item Second, we classify the PFC neurons based on their rule-encoding
  (RE) property. This classification was also done
  in~\cite{CCR-MRD:14}, so we briefly review the method for
  completeness. A neuron is said to have a RE property if its firing
  rate is significantly different during the LC and PD trials
  \emph{before the stimulus onset}. 
  In the absence of stimulus, any such difference is attributable to
  the animal's knowledge of the task rule (i.e., which upcoming
  stimulus it has to pay attention to in order to get the
  reward). Thus, it is standard to assess neurons' RE property during
  the \emph{hold period}, namely, the time interval between the
  initiation of each trial and the stimulus onset of that
  trial. Therefore for each PFC neuron, we calculate its mean firing
  rate during the hold period of each trial and then statistically
  compare the results for LC and PD trials ($p < 0.05$, randomization
  test). Among the $112$ neurons in PFC, $40$ encoded for LC while
  $44$ encoded for PD (the remaining PFC neurons with no RE property
  are discarded).

\item Finally, we classify the A1 neurons based on their evoked
  response (ER) property. In contrast to RE, a neuron has an ER
  property if its firing rate is significantly different in response
  to the white noise (LC stimulus) and warble (PD stimulus)
  \emph{after the stimulus onset}. %
  Since the white noise and warble are always presented
  simultaneously, it is not possible to make such a distinction based
  on normal trials. However, before each LC or PD block, the animal is
  only presented with the respective stimulus for a few \emph{cue
    trials} (which is how the animal realizes the rule change). Thus,
  for each A1 neuron, we compare its mean firing rate during the
  \emph{listening period} of each cue trial (namely, the interval
  between the stimulus onset and the time that the animal commits to a
  decision) and statistically compare the distribution of the results
  for LC and PD cue trials ($p < 0.05$, randomization test). Among the
  $99$ A1 neurons, $21$ had an ER for LC while another $21$ had an ER
  for PD (the remaining A1 neurons with no ER property are discarded
  from further analysis).
\end{enumerate}

\begin{remark}\longthmtitle{RE vs. ER detection}
  {\rm It is noteworthy that a smaller fraction of PFC and A1 neurons
    also have ER and RE properties, resp. However, it is expected 
    from systems neuroscience that these properties arise
    from the PFC-A1 interaction, as auditory and
    attention/decision making information disseminate from A1 and PFC,
    resp. This motivates our classification of A1 and PFC
    neurons based on ER and RE, resp., and their reciprocal
    connection in the proposed network structure below. Further, we
    note that our ER detection has a difference with respect
    to~\cite{CCR-MRD:14}. In~\cite{CCR-MRD:14}, the difference between
    the post-stimulus and pre-stimulus firing rates (the latter being
    RE) is used for ER detection, with the motivation of removing the
    portion of post-stimulus firing rate that is due to RE (and thus
    independent of stimulus). However, this relies on the strong
    assumption that the RE and ER responses superimpose linearly,
    which we found likely not to be true based on the statistical
    analysis of the present dataset, perhaps as RE may have driven
    neurons close to their maximum firing rate, leaving little room
    for \emph{additional} ER. We thus use the complete post-stimulus
    firing rate for ER detection, as above. \oprocend}
\end{remark}

As a result of the classifications described above, we group the
neurons into $8$ populations based on the PFC/A1,
excitatory/inhibitory, and LC/PD classifications. The firing rate of
each population (as a function of time) is then calculated as
follows. For each neuron and each trial, the interval $[-10, 10]$
(with time $0$ corresponding to stimulus onset) is decomposed into
$100^\text{ms}$-wide bins and the firing rate of each bin (spike count
divided by bin width) is assigned to the bin's center time. This time
series is then averaged over all trials with the same stimulus pair
and all the neurons within each population, and finally smoothed with
a Gaussian kernel with $1^\text{s}$ standard deviation. This results
in one firing rate time series for each neuron and each stimulus pair.

We limit our choice of stimulus pairs as follows. Recall that each of
LC and PD blocks contains 4 stimulus pairs (left-low, left-high,
right-low, right-high). 
In each block, these 4 pairs are divided into two \emph{go} and two
\emph{no-go} pairs. When the animal hears a go stimulus pair, his
correct response is to go to a nearby food port to receive his
reward. In no-go trials, the correct response is simply inaction
(action is punished by a delay before the animal can do the next
trial). Due to strong motor and reward-consumption artifacts in go
trials (cf.~\cite[Fig. S4]{CCR-MRD:14}), we limit our analysis here to
no-go trials. Further, we also discard the no-go stimulus pair that is
shared between LC and PD blocks, since the correct decision (no-go) is
independent of the block and thus does not require selective
attention. Hence, our analysis only involves one firing rate
time series for each neuronal population in each~block.

\subsection{Network Binary Structure}

We next describe our proposed network binary structure%
\footnote{We here make a distinction between the binary structure of
  the network, composed of only the connectivity pattern among nodes,
  and its full structure, that also includes the connection
  weights.}. In each of the two regions (PFC and A1), the $4$
populations are connected to each other according to the following
physiological properties
(see~\cite{PSG:95,AFTA-MJW-CDP:12,PS-GT-RL-EHB:98}
and~\cite{GKW-RA-BL-HWT-LIZ:08,HKK-SKA-JSI:17,PS-GT-RL-EHB:98} for
evidence of these properties in PFC and A1, resp.):
\begin{enumerate}
\item each excitatory population projects to (i.e., makes synapses on)
  the inhibitory population with the same LC/PD preference (RE in PFC
  or ER in A1);
\item neurons in each excitatory population project to each other
  (captured by the excitatory self-loops in Figure~\ref{fig:struct}).
\item each inhibitory population projects to the populations (both
  excitatory and inhibitory) with \emph{opposite} LC/PD preference
  (the so-called \emph{lateral inhibition} property);
\end{enumerate} 
While within-region connections are both excitatory and inhibitory,
between-region connections in the cortex (including PFC and A1) are
almost entirely excitatory, completing the binary structure shown in
Figure~\ref{fig:struct}.

\begin{figure}
  \begin{center} 
  \begin{tikzpicture}
  \node[scale=0.86] (all) {
    \begin{tikzpicture} 
      \node (fig) {\includegraphics[width=0.5\linewidth]{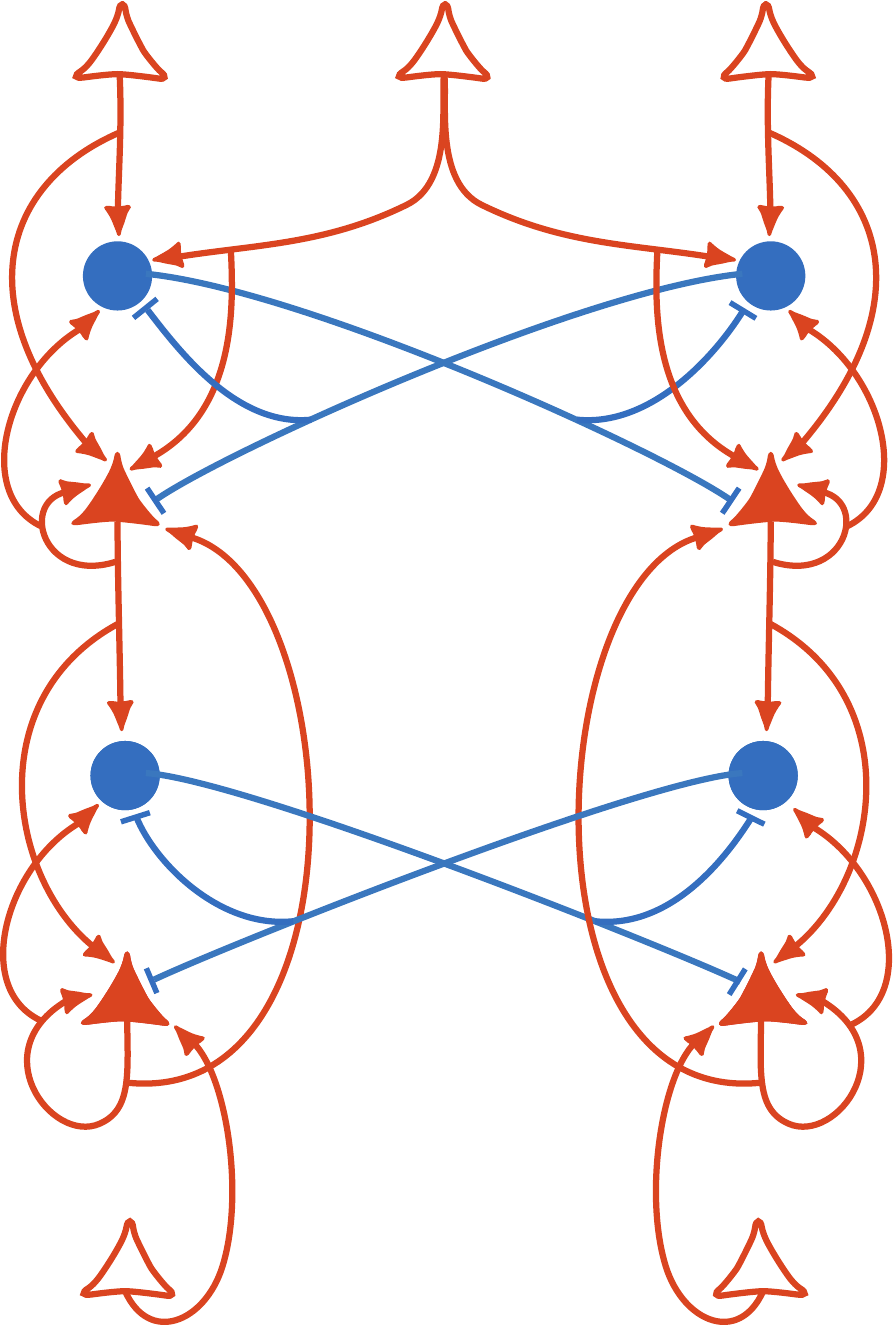}};
      \node[left of=fig, xshift=-40pt, yshift=-70pt] (mgn-r) {};
      \node[left of=mgn-r, xshift=-25pt] (mgn-l) {}; \draw[dashed,
      black] (mgn-l) to (mgn-r); \node[above of=mgn-l, yshift=45pt]
      (a1-r) {}; \node[above of=mgn-r, yshift=45pt] (a1-l) {};
      \draw[dashed, black] (a1-l) to (a1-r); \node[right of=fig,
      xshift=40pt, yshift=-70pt] (n4-l) {}; \node[right of=n4-l,
      xshift=25pt] (n4-r) {}; \draw[dashed, black] (n4-l) to (n4-r);
      \node[above of=n4-l, yshift=10pt] (n3-l) {}; \node[above
      of=n4-r, yshift=10pt] (n3-r) {}; \draw[dashed, black] (n3-l) to
      (n3-r); \node[above of=n3-l, yshift=40pt] (n2-l) {}; \node[above
      of=n3-r, yshift=40pt] (n2-r) {}; \draw[dashed, black] (n2-l) to
      (n2-r); %
      \node[above of=fig, xshift=-25pt, yshift=63pt] (lc-b) {}; %
      \node[above of=lc-b, yshift=5pt] (lc-t) {}; %
      \draw[dashed, black] (lc-b) to (lc-t); %
      \node[above of=fig, xshift=25pt, yshift=63pt] (pd-b) {}; %
      \node[above of=pd-b, yshift=5pt] (pd-t) {}; %
      \draw[dashed, black] (pd-b) to (pd-t); %
      \node[left of=fig, xshift=-67pt, yshift=-85pt, scale=0.9] (mgn) {Thalamus}; %
      \node[above of=mgn, yshift=25pt, scale=0.9] (a1) {A1}; %
      \node[above of=a1, yshift=57pt, scale=0.9] (pfc) {PFC}; %
      \node[right of=fig, xshift=67pt, yshift=-85pt, scale=0.9] (n4) {$\Nc_4$}; %
      \node[above of=n4, yshift=5pt, scale=0.9] (n3) {$\Nc_3$}; %
      \node[above of=n3, yshift=25pt, scale=0.9] (n2) {$\Nc_2$}; %
      \node[above of=n2, yshift=40pt, scale=0.9] (n1) {$\Nc_1$}; %
      \node[above of=fig, xshift=-46pt, yshift=80pt, scale=0.9] (lc) {LC}; %
      \node[right of=lc, xshift=17pt, scale=0.9] (t) {Time}; %
      \node[right of=t, xshift=17pt, scale=0.9] (pd) {PD}; %
      \node[right of=pd, xshift=45pt, yshift=0pt, draw, line
      width=0.5pt, inner sep=3pt, scale=0.7] (leg) {
	\begin{tikzpicture} 
          \node (e) {\includegraphics[width=10pt]{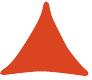}};
          \node[right of=e, xshift=35pt, yshift=-2pt, inner sep=0pt] {\parbox{110pt}{\flushleft Manifest Excitatory Node}};
          \node[below of=e, yshift=15pt] (i) {\includegraphics[width=8pt]{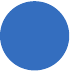}};
          \node[right of=i, xshift=35pt, yshift=3pt, inner sep=0pt] {\parbox{110pt}{\flushleft Manifest Inhibitory Node}};
          \node[below of=i, yshift=16pt] (u) {\includegraphics[width=10pt]{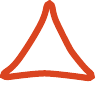}};
          \node[right of=u, xshift=5pt] {Input Node};
	\end{tikzpicture}
	};
    \end{tikzpicture}};
    \end{tikzpicture} 
  \end{center} 
  \caption{Proposed network binary structure. 
    The physiological region, hierarchical layer, and encoding
    properties of nodes are indicated on the left, right, and above
    the figure, resp.}\label{fig:struct}
  \vspace*{-1.5ex}
\end{figure}

\subsubsection*{Hierarchical Structure}
To apply the HSR framework to the network of Figure~\ref{fig:struct},
we still need to assign the nodes to hierarchical layers. This
assignment is in general arbitrary except for two critical
requirements, (i) the existence of timescale separation between layers
and (ii) the existence of both excitatory and inhibitory projections
from any layer to the layer below (to allow for simultaneous
inhibition and recruitment). The trivial choice here is to consider
each region as a layer, which also satisfies (i) (since PFC has slower
dynamics than A1) but not (ii) (since there would be no inhibitory
connection between regions). We thus propose an alternative 3-layer
choice, as shown in Figure~\ref{fig:struct}.%
\footnote{The bottom-most layer $\Nc_4$ represents ``external'' inputs
  from sub-cortical areas. Since we have no recordings from these
  areas, we do not consider any dynamics for $\Nc_4$ and accordingly
  do not include it in HSR analysis.}  This choice clearly satisfies
(ii), and we next show that it also satisfies (i).

\subsubsection*{Computation of Timescales}
To assess the intrinsic timescales of each population, we employ the
common method in neuroscience based on the decay rate of the
correlation
coefficient~\cite{JDM-AB-DJF-RR-JDW-XC-CP-TP-HS-DL-XW:14,RC-KK-MG-HK-XW:15}. In
brief, for each neuron $\ell$, we partition the time window
\emph{before} the stimulus onset%
\footnote{In general, the time interval used for timescale estimation
  should not include stimulus presentation in order to reduce the
  effects of external factors on the internal neuronal dynamics.}
into small bins ($200^\text{ms}$-wide here) and compute the smoothed
mean firing rate of this neuron during each bin and each trial. This
yields a set $\{r_{i, k}^\ell\}_{i, k, \ell}$, where $r_{i, k}^\ell$
denotes the mean firing rate of neuron $\ell$ in the $k$'th time bin
of trial $i$. The Pearson correlation coefficient between two time
bins $k_1$ and $k_2$ is estimated~as
\begin{align*}
  \rho^\ell_{k_1, k_2} = \frac{\sum_i (r_{i, k_1}^\ell - \bar
    r_{k_1}^\ell) (r_{i, k_2}^\ell - \bar r_{k_2}^\ell)}{\sqrt{\sum_i
      (r_{i, k_1}^\ell - \bar r_{k_1}^\ell)^2 \sum_i (r_{i, k_2}^\ell
      - \bar r_{k_2}^\ell)^2}} \in [-1, 1],
\end{align*}
where $\bar r_k^\ell$ is the average of $r_{i, k}^\ell$ across all the
trials for neuron $\ell$. Let $\rho_k^\ell$ be the average of
$\rho_{k_1, k_2}^\ell$ over all $k_1, k_2$ such that $|k_1 - k_2| = k$
and $\bar \rho_k^p$, for any population $p$, be the average of
$\rho_k^\ell$ for all the neurons $\ell$ in the population
$p$. Figure~\ref{fig:taus} shows this function for populations of
excitatory and inhibitory neurons in PFC and A1 (we do not split the
neurons based on their LC/PD preference because it is not relevant for
timescale separation).  Fitting $\bar \rho_k^p$ by an exponential
function of the form $A e^{-k/\tau}$ gives an estimate of the
intrinsic timescale $\tau$ of this population, which becomes exact for
spikes generated by a Poisson point process under certain regularity
conditions~\cite{JDM-AB-DJF-RR-JDW-XC-CP-TP-HS-DL-XW:14}. Here, we use
the range of $k$ values for which the decay of $\bar \rho_k^p$ is
approximately exponential for calculating the fit. As seen in
Figure~\ref{fig:taus}, there is a clear timescale separation between
the layer of A1 excitatory neurons, the layer of A1 inhibitory and PFC
excitatory neurons, and the layer of PFC inhibitory neurons,
satisfying the requirement (i) above.%
\footnote{Note that this method inherently underestimates the
  timescale separation between layers due to the mutual dynamical
  interactions between them.}

\begin{figure}
\centering
  \includegraphics[width=0.9\linewidth]{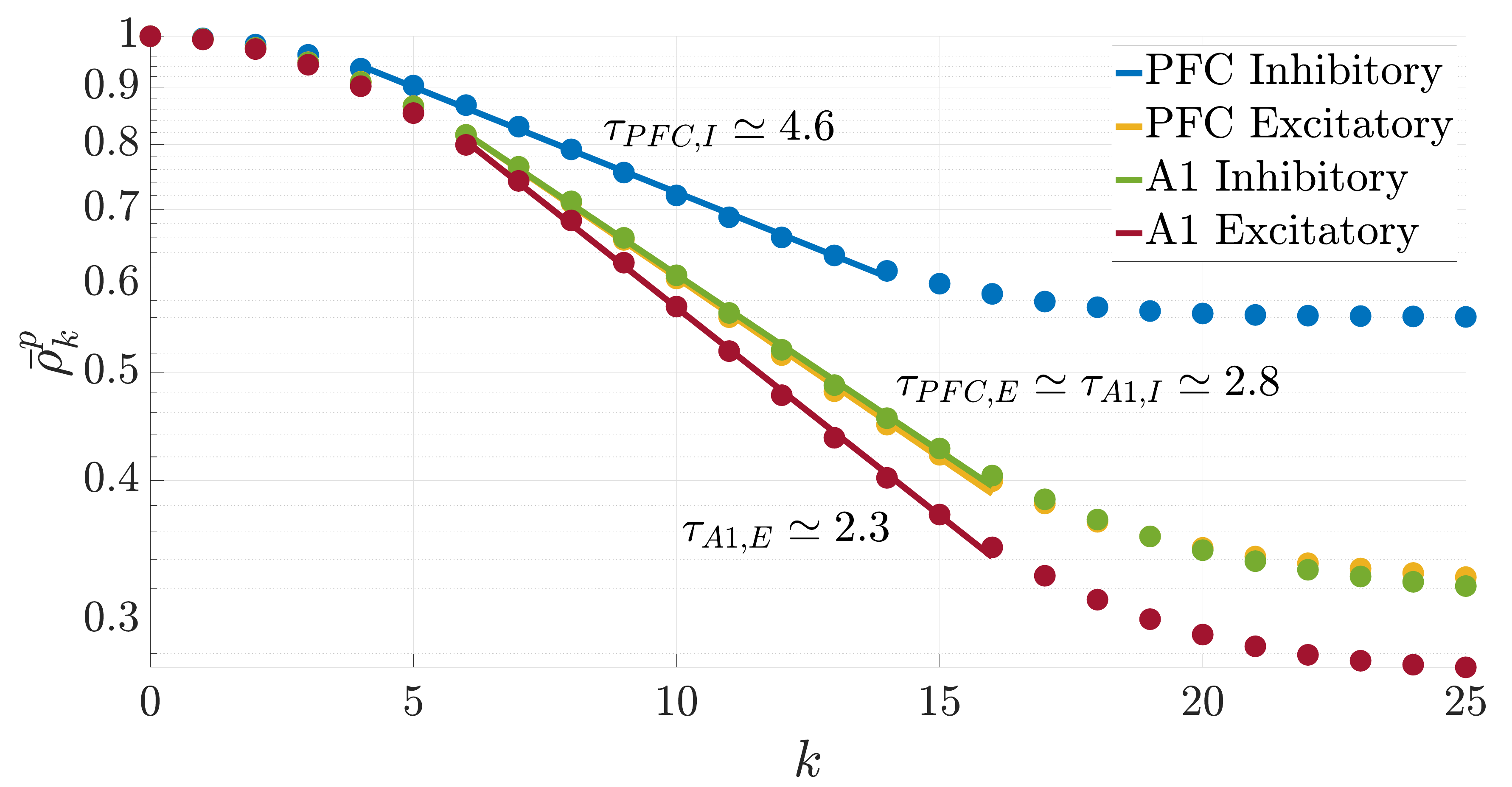}
  \caption{Timescale separation among the layers $\Nc_1$, $\Nc_2$, and
    $\Nc_3$ in Figure~\ref{fig:struct}. The circles illustrate the
    values of the average auto-correlation coefficient $\bar \rho_k^p$
    as a function of time lag $k$, whereas the lines represent the best
    exponential fit over the range of time lags where each $\bar
    \rho_k^p$ decays exponentially (note the logarithmic scale on the
    y-axis).}\label{fig:taus}
\vspace*{-1.5ex}
\end{figure} 

\subsubsection*{Exogenous Inputs and Latent Nodes}
The last step in specifying the binary structure of the network
involves the exogenous inputs to the prescribed neuronal populations
(nodes). Clearly, nodes at the bottom layer (layer $3$) receive
auditory inputs from subcortical areas which we represent as two input
signals $x_1^4$ and $x_2^4$ coming from layer $4$ and corresponding to
the white noise and warble, resp. Both these signals are
constructed by smoothing a square pulse that equals $1$ during
stimulus presentation and $0$ otherwise with the same Gaussian window
used for smoothing the firing rate time-series.

The choice of the inputs to the PFC populations is more intricate. PFC
is itself composed of a complex network of several regions, each
involved in some aspects of high-level cognitive functions. The RE
properties of the recorded PFC populations is only one outcome of such
complex PFC dynamics that also host the animal's overall understanding
of how the task works, his perception of time, etc. In order to
capture the effects of such unrecorded PFC dynamics, we consider 3
additional excitatory PFC populations, as follows. Two input
populations $x_3^1$ and $x_4^1$ simply encode the rule of each block%
\footnote{Note that this static response is different from, and much
  simpler than, the RE of the recorded PFC neurons, which is greatly
  dynamic.}%
:
\begin{align*}
  x_3^1 \equiv \begin{cases}
    1, & \text{if in LC block}, \\
    0, & \text{if in PD block},
  \end{cases} \quad x_4^1 \equiv \begin{cases}
    0, & \text{if in LC block}, \\
    1, & \text{if in PD block}.
  \end{cases}
\end{align*} 
Populations with such a sustained constant activity only as a function
of task parameters are indeed observed during GDSA in
PFC~\cite{NPB-MTH-EMD-RD:15}. The third additional PFC population
encodes the time relative to the stimulus onset, which is critical for
the functioning of the recorded PFC populations. Among the various
forms of encoding time, we consider a population $x_5^1$ with firing
rate
\begin{align*}
  x_5^1(t) = \begin{cases}
    |t_0| - t & t \in [t_0, 0), \\
    0 & t \in (0, t_f],
  \end{cases}
\end{align*}
where $[t_0, t_f] = [-7, 7]$ is the duration of each trial, since
populations with such activity patterns have been observed in
PFC~\cite{AM-HM-KS-YM-JT:09}.%
\footnote{Even though both~\cite{NPB-MTH-EMD-RD:15}
  and~\cite{AM-HM-KS-YM-JT:09} involve primates, populations with
  similar activity patterns are expected to exist in rodents.} %
Since these three populations have very slow dynamics but are
excitatory, following the same logic as before, we position them in
the layer 1 together with the recorded inhibitory PFC populations
$x_1^1$, $x_2^1$.

Finally, to capture the effects of the large populations of neurons
whose activity is not recorded, we consider one \emph{latent} node for
each of the $8$ \emph{manifest} nodes in the network%
\footnote{A node is  \emph{manifest} if its activity is recorded
  during the experiment and \emph{latent} otherwise.} %
with the same in- and out-neighbors as their respective manifest node
(latent nodes are not plotted in Figure~\ref{fig:struct} to avoid
cluttering the network structure). We let $\{x_{1, j}\}_{j = 6, 7}$,
$\{x_{2, j}\}_{j = 5}^8$, and $\{x_{3, j}\}_{j = 3, 4}$ denote these
nodes in $\Nc_1$, $\Nc_2$, and $\Nc_3$, resp.

\subsection{Identification of Network Parameters}

Having established the binary structure of the network, we next seek
to determine its unknown parameters $\Wbf^{i, j}$. While there are
physiological methods for measuring the synaptic weight between a pair
of neurons in vitro, they are not applicable in vivo and thus not
available for our dataset. Also, our nodes consist of several neurons,
making their aggregate synaptic weight an abstract
quantity. Therefore, we resort to system identification/machine
learning techniques to ``learn'' the structure of the network given
its input-output signals.  For this purpose, the choice of objective
function is crucial, for which we propose
\begin{align}\label{eq:f}
  f(z) &= f_\text{SSE}(z) + \gamma_1 f_\text{corr}(z) + \gamma_2
  f_\text{var}(z),
  \\
  \notag f_\text{SSE}(z) &= \sum_{\ell = 1}^2 \sum_{i = 1}^3 \sum_{j =
    1}^{n_{m,i}} \sum_k (\hat x_{i, j}(kT; \ell) - x_{i, j}(kT;
  \ell))^2,
  \\
  \notag f_\text{corr}(z) &= 1 - \frac{1}{2 n_m} \sum_{\ell = 1}^2
  \frac{1}{n_m} \sum_{i = 1}^3 \sum_{j = 1}^{n_{m,i}} \frac{1}{K - 1}
  \\
  \notag &\qquad \times \sum_{k = 1}^K \frac{(\hat x_{i, j}(kT; \ell) - \hat
    \mu_{i, j, \ell}) (x_{i, j}(kT; \ell) - \mu_{i, j, \ell})}{\hat
    \sigma_{i, j, \ell} \sigma_{i, j, \ell}},
  \\
  \notag f_\text{var}(z) &= \Big(\sum_{\ell = 1}^2 \sum_{i = 1}^3
  \sum_{j = 1}^{n_{m,i}} (\hat \sigma_{i, j, \ell} - \sigma_{i, j,
    \ell})^4 \Big)^{1/4},
\end{align}
where,
\begin{itemize}[wide]
\item[--] $z$ is the vector of all unknown network parameters
  consisting of not only the synaptic weights but also the time
  constants $\tau_i$, the background inputs $\cbf_i$, and the initial
  states $\xbf_i(0), i = 1, 2, 3$;
\item[--] $n_{m, i}$ is the number of manifest nodes in layer $i$ (so
  $n_{m, 1} = 2, n_{m, 2} = 4, n_{m, 3} = 2$) and $n_m = 8$ is the
  total number of manifest nodes;
\item[--] $x_{i, j}(t; \ell)$ is the measured state of $j$'th node in
  the $i$'th layer in response to the $\ell$'th stimulus at time $t$
  (where $\ell = 1$ indicates the LC block and $\ell = 2$ the PD
  block) and $\hat x_{i, j}(t; \ell)$ is its model estimate;
\item[--] $T = 0.1$ is the sampling time and $K$ is the total
    number of samples of each signal; and
\item[--] $\mu_{i, j, \ell}, \sigma_{i, j, \ell}, \hat \mu_{i, j,
    \ell}, \hat \sigma_{i, j, \ell}$ are the means and standard
  deviations of $x_{i, j}(\cdot; \ell)$ and $\hat x_{i, j}(\cdot;
  \ell)$, resp.
\end{itemize} 

The rationale behind~\eqref{eq:f} is as follows.  $f_\text{SSE}(z)$ is
the standard sum of squared error (SSE).  In HSR, an important
property of nodal state trajectories is the sign of their derivatives,
which \emph{transiently} indicate recruitment (positive derivative) or
inhibition (negative derivative). This is captured by the average
correlation coefficient $f_\text{corr}(z)$, which is added to
$f_\text{SSE}(z)$ to enforce similar recruitment and inhibition
patterns between measured states and their estimates. Nevertheless,
correlation coefficient between a pair of signals is invariant to the
amount of variation in them, requiring us to add the third term
$f_\text{var}(z)$. The use of $4$-norm in $f_\text{var}(z)$
particularly weights the nodes with large standard deviation
mismatches. Appropriate weights $\gamma_1 = 250$ and $\gamma_2 = 150$
were found via trial and error.

The objective function $f$ is highly nonconvex and we thus use the
\matlab{GlobalSearch} algorithm from the MATLAB Optimization Toolbox
to minimize it. Figure~\ref{fig:x} shows the manifest nodal states as
well as their best model estimates. In order to quantify the
similarity between these states and their estimates, we use the
standard $R^2$ measure given by
\begin{align*}
  R^2 = 1 - \frac{\sum_{\ell, i, j, k} (x_{i, j}(kT; \ell) - \hat
    x_{i, j}(kT; \ell))^2}{\sum_{\ell, i, j, k} (x_{i, j}(kT; \ell) -
    \mu_{i, j, \ell})^2} \simeq 93.6\%.
\end{align*}
This high value is indeed remarkable, especially given the small
network size and the limited availability of measurements in the
experiment ($2240$ data points, $175$ parameters).

\begin{figure}
\centering
  \includegraphics[width=0.9\linewidth]{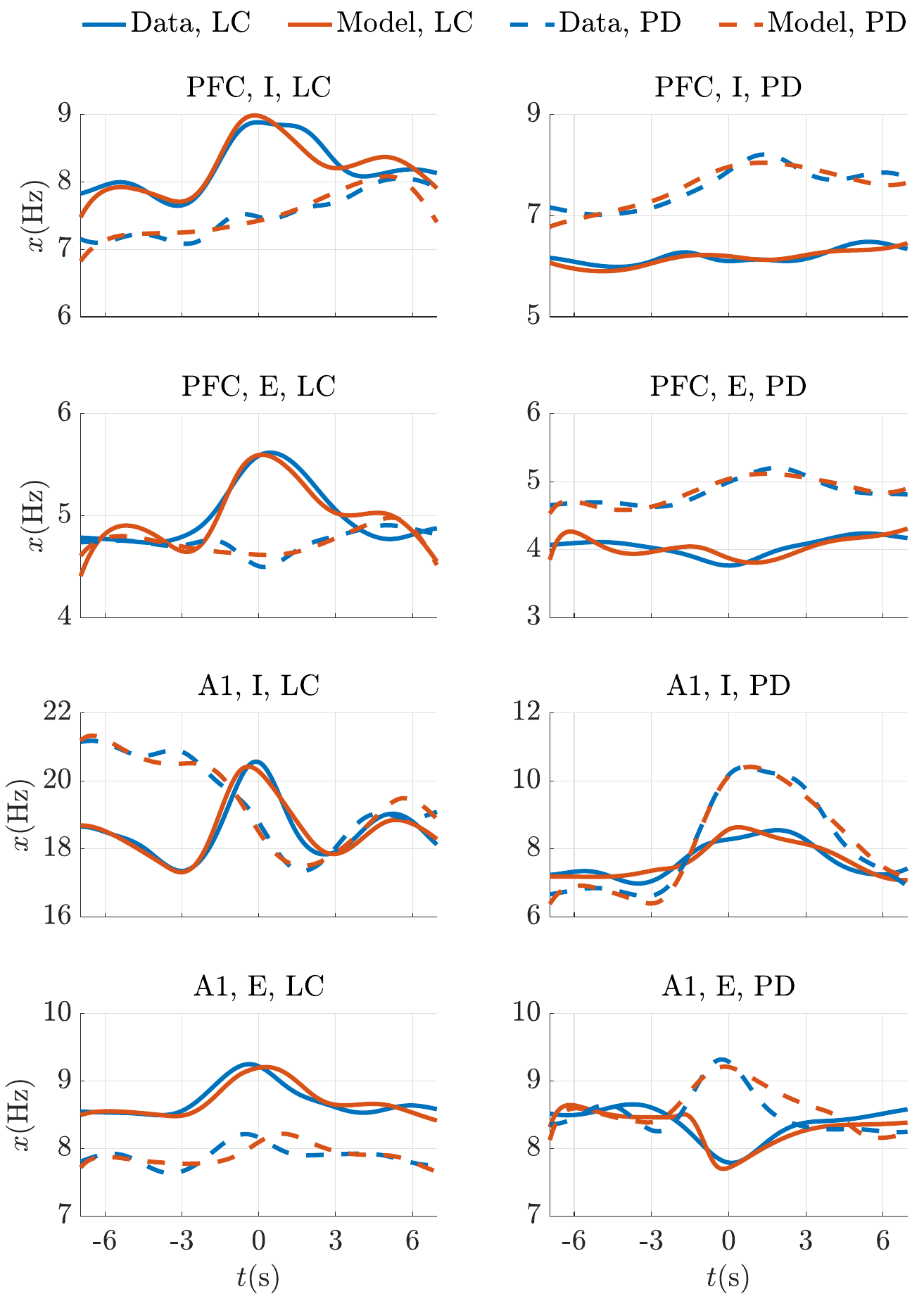}
  \caption{State trajectories of manifest nodes in the network of
    Figure~\ref{fig:struct} (blue: measured, red: model estimate). $t
    = 0$ indicates stimulus onset. Solid and dashed lines correspond
    to LC and PD blocks, resp. The description of each node is
    indicated above its corresponding panel.  The LC/PD in
    the legend refers to the trial rule, while the LC/PD above each
    panel refers to the preference of that particular
    node.}\label{fig:x}
\vspace*{-1.5ex}
\end{figure}

\subsection{Concurrence of the Identified Network with Analysis}

To conclude, we verify here whether the identified network structure
satisfies the requirements of the HSR framework in terms of timescale
separation and stability. Regarding the former, the identified time
constants are given by
\begin{align*}
  \tau_1 = 3.36, \qquad \tau_2 = 1.68, \qquad \tau_3 = 0.70,
\end{align*}
yielding an almost twofold separation of timescales conforming to
Figure~\ref{fig:taus}. Regarding stability, we have to consider the LC
and PD blocks separately (as the definition of task-relevant ($^\ssp$)
and task-irrelevant ($\ssm$) nodes changes according to the block).

In the LC block, the (manifest) LC nodes are task-relevant and the
(manifest) PD nodes are task-irrelevant. Therefore, under this
condition,
\begin{align*}
  W_{3, 3}^{\ssp\ssp} &= 0.01, \qquad \qquad \ \ \Wbf_{3, 2}^{\ssp\ssp}
  = \begin{bmatrix} 0.01 & 0 \end{bmatrix},
  \\
  \Wbf_{2, 2}^{\ssp\ssp} &= \begin{bmatrix} 0.83 & 0 \\ 0.76 &
    0 \end{bmatrix}, \qquad \Wbf_{2, 3}^{\ssp\ssp} =
   \begin{bmatrix} 0.04 \\ 0.58 \end{bmatrix}.
\end{align*}
It is then straightforward to see that
\begin{align*}
  h_3^\ssp(c_3^\ssp) = \begin{cases}
    0 &; \quad c_3^\ssp \le 0 \\
    c_3^\ssp/(1 - W_{3, 3}^{\ssp\ssp}) &; \quad c_3^\ssp \ge 0
  \end{cases} \Rightarrow \bar F_3^\ssp = \frac{1}{1 - W_{3,
      3}^{\ssp\ssp}}.
\end{align*} 
Therefore,
\begin{align*}
  &\rho(|W_{3, 3}^{\ssp\ssp}|) = 0.01 < 1,
  \\
  &\rho\big(|\Wbf_{2, 2}^{\ssp\ssp}| + |\Wbf_{2, 3}^{\ssp\ssp}| \bar
  F_3^\ssp |\Wbf_{3, 2}^{\ssp\ssp}|\big) = \rho\Big(\begin{bmatrix}
    0.83 & 0 \\ 0.77 & 0 \end{bmatrix}\Big) = 0.83 < 1,
\end{align*}
satisfying the sufficient conditions for GES
in~\eqref{eq:rec-cond-sim}. Similarly, in the PD block, we have
\begin{align*}
  &W_{3, 3}^{\ssp\ssp} = 0.01 < 1, \qquad \Wbf_{3, 2}^{\ssp\ssp}
  = \begin{bmatrix} 4.7 \times 10^{-3} & 0 \end{bmatrix},
  \\
  &\Wbf_{2, 2}^{\ssp\ssp} = \begin{bmatrix} 0.12 & 0 \\ 0.56 &
    0 \end{bmatrix}, \qquad \Wbf_{2, 3}^{\ssp\ssp} = \begin{bmatrix}
    0.39 \\ 0.02 \end{bmatrix},
  \\
  &\rho\big(|\Wbf_{2, 2}^{\ssp\ssp}| + |\Wbf_{2, 3}^{\ssp\ssp}| \bar
  F_3^\ssp |\Wbf_{3, 2}^{\ssp\ssp}|\big) = \rho\Big(\begin{bmatrix}
    0.12 & 0 \\ 0.56 & 0 \end{bmatrix}\Big) = 0.12 < 1,
\end{align*}
also satisfying the GES conditions of~\eqref{eq:rec-cond-sim}.

While this concurrence is promising, its robustness to the choice of
dataset and data processing steps is critical. A comprehensive
robustness analysis requires access to multiple datasets and
experimental re-design, which is beyond the scope of this case
study. However, we repeated our entire analysis with
Mann-Whitney-Wilcoxon rank-sum test (used originally
in~\cite{CCR-MRD:14}) and also with varying significance thresholds
$0.001 \le \alpha \le 0.05$ and observed that, despite the change in
the neural populations, our theoretical conditions remained satisfied.

Given the concurrence between the identified network structure and the
hypotheses of our results, Theorems~\ref{thm:sp-inhib}
and~\ref{thm:multi} provide strong analytical support to explain the
conclusions drawn in~\cite{CCR-MRD:14,CCR-MRD:14-crcns} from
experimental data and statistical analysis.  We believe HSR
constitutes a rigorous framework for the analysis of the
multiple-timescale network interactions underlying GDSA, complementing
the conventional statistical and computational analyses in
neuroscience.

\section{Conclusions and Future Work}

We have proposed hierarchical selective recruitment as a framework to
explain several fundamental components of goal-driven selective
attention.  HSR consists of an arbitrary number of neuronal
subnetworks that operate at different timescales and are arranged in a
hierarchy according to their intrinsic timescales.  In this paper, we
have resorted to control-theoretic tools to focus on the top-down
recruitment of the task-relevant nodes. We have derived conditions on
the structure of multi-layer networks guaranteeing the convergence of
the state of the task-relevant nodes in each layer towards their
reference trajectory determined by the layer above in the limit of
maximal timescale separation between the layers. In doing so, we have
characterized the piecewise affinity and global Lipschitzness
properties of the equilibrium maps and unveiled their key role in the
multiple-timescale dynamics of the network.  Combined with the results
of Part I, these contributions provide conditions for the simultaneous
GES of the state of task-irrelevant nodes of all layers to the origin
(inhibition) as well as the GES of the state of task-relevant nodes
towards an equilibrium that moves at a slower timescale as a function
of the state of the subnetwork at the layer above (recruitment).  To
demonstrate that applicability to brain networks, we have presented a
detailed case study of GDSA in rodents and showed that a network with
a binary structure based on HSR and parameters learned using a
carefully designed optimization procedure can achieve remarkable
accuracy in explaining the data while conforming to the theoretical
requirements of HSR.  Our technical treatment has also established a
novel converse Lyapunov theorem for continuous GES switched affine
systems with state-dependent switching.  Future work will include the
extension of this framework to selective inhibition using output
feedback and cases where the recruited subnetworks are asymptotically
stable towards more complex attractors such as limit cycles. Also of
paramount importance is the study of the robustness of network
trajectories as well as the theoretical conditions of HSR to network
parameters, disturbances, and experimental variations (inter-subject
variability, different tasks, measurement noise, etc.). Other topics
of relevance to the understanding of GDSA that we plan to explore are
the analysis of the information transfer along the hierarchy, the
controllability and observability of linear-threshold networks, and
the optimal sensor and actuator placement in hierarchical
interconnections of these networks.

\section*{Acknowledgments}
We would like to thank Dr. Erik J. Peterson for piquing our interest
with his questions on dimensionality control in brain networks and for
introducing us to linear-threshold modeling in neuroscience.  We are
also indebted to Drs. Michael R. DeWeese and Chris C. Rodgers for the
public release of their dataset~\cite{CCR-MRD:14-crcns} and their
subsequent discussion of its details.  This work was supported by NSF
Award CMMI-1826065 (EN and JC) and ARO Award W911NF-18-1-0213~(JC).

\setcounter{section}{0}
\renewcommand{\thesection}{Appendix \Alph{section}}
\section{A Converse Lyapunov Theorem for GES Switched-Affine Systems}\label{app:conv-Lyap}
\renewcommand{\thesection}{\Alph{section}}

The existence of a converse Lyapunov function for
asymptotically/exponentially stable switched linear systems has been
extensively studied for time-dependent switching. Early
works~\cite{APM-YSP:89,WPD-CFM:99} showed that if a switched linear
system is asymptotically (or, equivalently, exponentially) stable
under \emph{arbitrary switching}, then it admits a common Lyapunov
function. This was later extended to infinite-dimensional spaces
in~\cite{FMH-MS:11}. The limitations of these works, however, is the
strong requirement of stability under arbitrary
switching. \cite{FW:05-sicon} proved the existence of a Lyapunov
function under the weaker condition of exponential stability with
minimum dwell-time. Nevertheless, similar results are still missing
for state-dependent switching.  In this appendix, we prove a converse
Lyapunov theorem for continuous GES switched affine systems with
state-dependent switching that is used in both Parts I and II of this
work via~\cite[Lemma~A.2]{EN-JC:18-tacI}. The considered dynamics are
general and subsume the linear-threshold dynamics.

\begin{theorem}\longthmtitle{Converse Lyapunov theorem for GES
    switched affine systems}\label{thm:conv-lyap} 
  Consider the state-dependent switched affine system
  \begin{align}\label{eq:swaf}
    \tau \dot \xbf &= f(\xbf), && \xbf(0) = \xbf_0,
    \\
    \notag f(\xbf) &= \Abf_\lambda \xbf + \bbf_\lambda, \ &&\forall
    \xbf \in \Omega_\lambda = \setdef{\xbf \in D}{\Nbf_\lambda \xbf +
      \pbf_\lambda \le \zeros},
    \\
    \notag & &&\forall \lambda \in \Lambda,
  \end{align}
  where $\Lambda$ is a finite index set, $\Abf_\lambda$ is nonsingular
  for all $\lambda \in \Lambda$, $D = \bigcup_{\lambda \in \Lambda}
  \Omega_\lambda \subseteq \real^n$ is an (open) domain, and
  $\{\Omega_\lambda\}_{\lambda \in \Lambda}$ have mutually disjoint
  interiors. Assume that $f$ is continuous. If~\eqref{eq:swaf} is GES
  towards a unique equilibrium $\xbf^*$, then there exists a
  $C^{\infty}$-function $V: \realnonneg^n \to \real$ and positive
  constants $c_1, c_2, c_3, c_4$ such that for all $\xbf \in D$,
  \begin{subequations} \label{eq:conv-Lyap}
    \begin{align}
      c_1 \|\xbf - \xbf^*\|^2 \le V(\xbf) &\le c_2 \|\xbf -
      \xbf^*\|^2, \label{eq:conv-Lyap1}
      \\
      \frac{\partial V}{\partial \xbf} f &\le -c_3 \|\xbf -
      \xbf^*\|^2, \label{eq:conv-Lyap2}
      \\
      \Big\|\frac{\partial V}{\partial \xbf}\Big\| &\le c_4 \|\xbf -
      \xbf^*\|. \label{eq:conv-Lyap3}
    \end{align}
  \end{subequations}
\end{theorem}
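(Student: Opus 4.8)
The plan is to build a Massera-type Lyapunov function from the flow of~\eqref{eq:swaf} and then smooth it. Write $\phi(\cdot;\xbf)$ for the solution of~\eqref{eq:swaf} from $\xbf$. Since $f$ is continuous and piecewise affine with finitely many branches $\Abf_\lambda\xbf+\bbf_\lambda$, it is globally Lipschitz on $D$ with constant $L=\max_{\lambda\in\Lambda}\|\Abf_\lambda\|$; hence solutions are unique and globally defined in forward time, and $f(\xbf^*)=\zeros$. GES provides $k\ge1$ and $\alpha>0$ with $\|\phi(t;\xbf)-\xbf^*\|\le ke^{-\alpha t}\|\xbf-\xbf^*\|$ for all $t\ge0$, while Gronwall's inequality applied to $\frac{d}{dt}\|\phi(t;\xbf)-\xbf^*\|^2=\frac2\tau(\phi(t;\xbf)-\xbf^*)^{T}\big(f(\phi(t;\xbf))-f(\xbf^*)\big)\ge-\frac{2L}{\tau}\|\phi(t;\xbf)-\xbf^*\|^2$ yields the matching lower bound $\|\phi(t;\xbf)-\xbf^*\|\ge e^{-Lt/\tau}\|\xbf-\xbf^*\|$. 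Fixing a horizon $T>\tfrac1\alpha\ln k$, set
\begin{align*}
  V_0(\xbf)=\int_0^T\|\phi(t;\xbf)-\xbf^*\|^2\,dt.
\end{align*}

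I would then check that $V_0$ satisfies~\eqref{eq:conv-Lyap} up to smoothness. The two exponential bounds give $\tfrac{\tau}{2L}(1-e^{-2LT/\tau})\|\xbf-\xbf^*\|^2\le V_0(\xbf)\le\tfrac{k^2}{2\alpha}\|\xbf-\xbf^*\|^2$, which is~\eqref{eq:conv-Lyap1}. Since $V_0(\phi(s;\xbf))=\int_s^{s+T}\|\phi(u;\xbf)-\xbf^*\|^2\,du$ is $C^1$ in $s$, differentiating at $s=0$ and applying the semigroup identity and the chain rule along~\eqref{eq:swaf} gives, at every point where $V_0$ is differentiable, $\tfrac1\tau\frac{\partial V_0}{\partial\xbf}f(\xbf)=\|\phi(T;\xbf)-\xbf^*\|^2-\|\xbf-\xbf^*\|^2\le(k^2e^{-2\alpha T}-1)\|\xbf-\xbf^*\|^2<0$, i.e.~\eqref{eq:conv-Lyap2} with $c_3=\tau(1-k^2e^{-2\alpha T})$. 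Finally, from $\|\phi(t;\xbf)-\phi(t;\xbf')\|\le e^{Lt/\tau}\|\xbf-\xbf'\|$, the identity $|a^2-b^2|=(a+b)|a-b|$, and the upper exponential bound one obtains $|V_0(\xbf)-V_0(\xbf')|\le k\big(\int_0^Te^{(L/\tau-\alpha)t}dt\big)(2\|\xbf-\xbf^*\|+\|\xbf-\xbf'\|)\|\xbf-\xbf'\|$, so $V_0$ is locally Lipschitz with $\|\nabla V_0(\xbf)\|\le c_4\|\xbf-\xbf^*\|$ almost everywhere by Rademacher's theorem; truncating at the finite horizon $T$ is precisely what makes this integral converge without imposing any relation between $\alpha$ and $L$ (alternatively, integrate $\|\phi(t;\xbf)-\xbf^*\|^{2p}$ over $[0,\infty)$ for large $p$ and take a $p$-th root).

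The remaining and main difficulty is that $f$ is only Lipschitz, so $\phi(t;\cdot)$ and hence $V_0$ are generally merely Lipschitz, not $C^1$. I would smooth $V_0$ by the classical local-mollification/partition-of-unity construction: cover $D\setminus\{\xbf^*\}$ by a locally finite family $\{U_j\}$ of open sets, attach to each a constant mollification radius $\eta_j>0$ small enough that $\xbf-\eta_j\ybf\in D$ whenever $\xbf\in U_j$ and $|\ybf|\le1$, that $\eta_j\le\kappa\inf_{U_j}\|\xbf-\xbf^*\|$, and that $\eta_j$ is negligible relative to $\operatorname{diam}U_j$; take a $C^\infty$ partition of unity $\{\psi_j\}$ subordinate to $\{U_j\}$, let $\rho_\eta(\ybf)=\eta^{-n}\rho(\ybf/\eta)$ for a fixed radial bump $\rho\in C_c^\infty(\real^n)$ with $\int\rho=1$ supported in the unit ball, and set $V=\sum_j\psi_j\,(V_0*\rho_{\eta_j})$ on $D\setminus\{\xbf^*\}$ and $V(\xbf^*)=0$. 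Then $V\in C^\infty$, and: $\|\nabla(V_0*\rho_{\eta_j})\|\le c_4(1+\kappa)\|\xbf-\xbf^*\|$ because the derivative falls on $V_0$, while the cross terms are controlled via $\sum_j\nabla\psi_j=\zeros$ and $|(V_0*\rho_{\eta_j})-(V_0*\rho_{\eta_k})|=O(c_4\|\xbf-\xbf^*\|\max\{\eta_j,\eta_k\})$, giving~\eqref{eq:conv-Lyap3}; the quadratic bounds~\eqref{eq:conv-Lyap1} pass to $V$ with adjusted constants since $\eta_j\to0$ near $\xbf^*$; and~\eqref{eq:conv-Lyap2} is preserved because $\tfrac{\partial}{\partial\xbf}(V_0*\rho_{\eta_j})\,f(\xbf)$ differs from $\int\rho(\ybf)\,\nabla V_0(\xbf-\eta_j\ybf)^{T}f(\xbf-\eta_j\ybf)\,d\ybf\le-c_3\int\rho(\ybf)\|\xbf-\eta_j\ybf-\xbf^*\|^2\,d\ybf$ by $O(L\kappa\|\xbf-\xbf^*\|^2)$, and the partition-of-unity cross terms contribute $O(\|\xbf-\xbf^*\|^2)$ with a constant made small by the choice of the $\eta_j$, so a slightly smaller $c_3$ works once $\kappa$ and the $\eta_j$ are small. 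A final smooth cutoff produces the required function $V:\realnonneg^n\to\real$.

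I expect this last step to be where essentially all the real work lies: the estimates on $V_0$ follow mechanically from the two-sided exponential bounds and the Massera integral, whereas choosing the mollification scales $\eta_j$ simultaneously small relative to $\|\xbf-\xbf^*\|$, to $\partial D$, and to the sizes of the partition-of-unity cells — so that the strict decrease and the linear gradient bound both survive the smoothing — is the delicate, technically standard part of converse-Lyapunov arguments for nonsmooth vector fields.
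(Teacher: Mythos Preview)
Your approach is essentially the same as the paper's --- a finite-horizon Massera integral followed by smoothing --- and your estimates for the quadratic bounds, the decrease along trajectories, and the linear gradient growth are obtained in the same way. Two differences in execution are worth noting. First, the paper spends its step~(i) proving that the flow $\psi(t;\cdot)$ is $C^1$ off a measure-zero set by exploiting the piecewise-affine structure directly (implicit function theorem at transversal crossings of switching hyperplanes, then matching one-sided derivatives on the hyperplanes); this yields $\|\partial\psi/\partial\xbf_0\|\le e^{Lt}$ on a full-measure set, from which~\eqref{eq:conv-Lyap3} for $\hat V$ follows. You bypass this entirely via the Lipschitz estimate $|V_0(\xbf)-V_0(\xbf')|$ and Rademacher, which is shorter and does not use the affine structure at all; the paper's argument, on the other hand, gives strictly more information about the flow's regularity. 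Second, for the smoothing step the paper invokes Kurzweil's converse Lyapunov construction~\cite{JK:63} as a black box (checking only the Lipschitzness of $\hat V$ and the strict-decrease hypothesis, and noting that Kurzweil's parameters can be chosen to preserve the gradient bound), whereas you sketch the underlying mollification/partition-of-unity argument by hand. Your outline of that step is correct --- the key point that $\eta_j$ must be taken proportional to $\|\xbf-\xbf^*\|$ so that the cross terms $\sum_j(\nabla\psi_j\cdot f)[(V_0*\rho_{\eta_j})-V_0]$ scale like $\kappa\|\xbf-\xbf^*\|^2$ is exactly right --- but, as you acknowledge, this is where the bookkeeping lives, and a complete write-up of it is essentially a reproof of Kurzweil's theorem.
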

\begin{proof}
  We structure the proof in three steps: (i) showing that the
  solutions of~\eqref{eq:swaf} are continuously differentiable with
  respect to $\xbf_0$ \emph{along its trajectories}, (ii) construction
  of a (not necessarily smooth) Lyapunov-like function that
  satisfies~\eqref{eq:conv-Lyap} along the trajectories
  of~\eqref{eq:swaf}, and (iii) construction of $V$ from this
  Lyapunov-like function (smoothening).
  We only prove the result for
  $\xbf^* = \zeros$ as the general case can be reduced to it with the
  change of variables $\xbf \leftarrow \xbf - \xbf^*$.
  \begin{enumerate}[wide]
  \item Let $\psi(t; \xbf_0)$ denote the unique solution
    of~\eqref{eq:swaf} at time $t \in \real$ (note that we let $t <
    0$). In this step, we prove that $\psi$ is continuously
    differentiable with respect to $\xbf_0$ on $D$ \emph{if} $\xbf_0$
    moves along $\psi$. Precisely, that
    \begin{align}\label{eq:stepi}
      \frac{\partial}{\partial \tau} \psi(t; \psi(\tau; \xbf_0))
      \text{ exists and is continuous at } \tau = 0,
    \end{align}
    for all $\xbf_0 \in D$. First, assume that $\xbf_0 \notin H$,
    where $H \subset D$ is the union of all the switching
    hyperplanes.%
    \footnote{Recall that for each $\lambda$, each row of
      $\Nbf_\lambda \xbf + \pbf_\lambda = \zeros$ defines a switching
      hyperplane.}  Thus, $\xbf_0$ belongs to the interior of a
    switching region, say $\Omega_{\lambda_1}$. Let $\{\lambda_j\}_{j
      = 1}^{J}$, with $ J = J(t) \ge 1$, be the indices of the regions visited by
    $\psi(\tau; \xbf_0)$ during $\tau \in [0, t]$. With a slight abuse
    of notation, let $\Abf_j \triangleq \Abf_{\lambda_j}$ and $\bbf_j
    \triangleq \bbf_{\lambda_j}$, for $j = 1, \dots, J$.  Then,
    \begin{align}\label{eq:psi}
      &\psi(\tau; \xbf_0) =
      \\
      \notag &\!\!\begin{cases} e^{\Abf_1 \tau}(\xbf_0 + \Abf_1^{-1}
        \bbf_1) - \Abf_1^{-1} \bbf_1; &\tau \in [0, t_1],
        \\
        e^{\Abf_2 (\tau - t_1)} (\psi(t_1; \xbf_0) + \Abf_2^{-1}
        \bbf_2) - \Abf_2^{-1} \bbf_2; &\tau \in [t_1, t_2],
        \\
        \quad \vdots
        \\
        e^{\Abf_J (\tau - t_{J\!-\!1})} (\psi(t_{J\!-\!1}; \xbf_0)
        \!+\! \Abf_J^{-1} \bbf_J) \!-\! \Abf_J^{-1} \bbf_J; \!\! &\tau
        \in [t_{J\!-\!1}, t],
      \end{cases}
    \end{align}
    where $t_j = t_j(\xbf_0)$ is the time at which $\psi(\tau;
    \xbf_0)$ crosses the boundary between $\Omega_{\lambda_j}$ and
    $\Omega_{\lambda_{j+1}}$. This expression for $\psi$ is valid for
    all $\xbf$ near $\xbf_0$ that undergo the same sequence of
    switches. To be precise, let $S \subset D$ be the set of
    points lying at the intersection of two or more switching
    hyperplanes and
    \begin{align*}
      S_{(-\infty, 0]} = \setdef{\xbf \in D}{\exists t \in [0, \infty)
        \quad \text{s.t.} \quad \psi(t; \xbf) \in S}.
    \end{align*}
    In words, $S_{(-\infty, 0]}$ is the set of all points that, when
    evolving according to~\eqref{eq:swaf}, will pass through $S$ at
    some point in time. Since $S$ is composed of a finite number of
    affine manifolds of dimensions $n - 2$ or smaller, $S_{(-\infty,
      0]}$ is in turn the union of a finite number of manifolds of
    dimensions $n - 1$ or smaller, and thus has Lebesgue measure zero.

    If $\xbf_0 \notin S_{(-\infty, 0]}$, then it follows from the
    continuity of $\psi$ with respect to $\xbf_0$ on $D$, see
    e.g.,~\cite[Thm 3.5]{HKK:02}, that~\eqref{eq:psi} is valid over a
    sufficiently small neighborhood of $\xbf_0$.  Clearly,
    $\frac{\partial \psi}{\partial \xbf_0}$ then exists and is
    continuous if and only if $t_j$'s are continuously differentiable
    with respect to $\xbf_0$. Consider $t_1$ and let $\nbf^T \xbf + p
    = 0$ be the corresponding switching surface, where $\nbf^T$ is
    equal to some row of $\Nbf_{\lambda_1}$ and equal to minus some
    row of $\Nbf_{\lambda_2}$. $t_1$ is the (smallest) solution to
    \begin{align}\label{eq:t1}
      \nbf^T \big(e^{\Abf_1 \tau}(\xbf_0 + \Abf_1^{-1} \bbf_1) -
      \Abf_1^{-1} \bbf_1\big) + p = 0, \quad \tau \ge 0.
    \end{align}
    The derivative of the lefthand side of~\eqref{eq:t1} with respect
    to $\tau$ equals $\nbf^T f(\psi(t_1; \xbf_0))$, which is nonzero
    if and only if the curve of $\psi$ is not tangent to $\nbf^T \xbf
    + p = 0$. If so, then the continuous differentiability of $t_1$
    with respect to $\xbf_0$ follows from the implicit function
    theorem~\cite{SGK-HRP:02b}. Otherwise, it is not difficult to show
    that $\psi(t; \xbf_0)$ remains in $\Omega_{\lambda_1}$ after
    $t_1$%
    \footnote{This is a general fact about the solutions of linear
      systems and can be shown using the series expansion of the
      matrix exponential.}, %
    contradicting the fact that $t_1$ is a switching time. The same
    argument guarantees that $t_j, j = 2, \dots, J$ are also
    continuously differentiable with respect to $\xbf_0$, and so is
    $\psi(t; \xbf_0)$.

    Before moving on to the case when $\xbf_0 \in S_{(-\infty, 0]}$,
    we analyze the case where still $\xbf_0 \notin S_{(-\infty, 0]}$
    but $\xbf_0 \in H$, i.e., $\xbf_0$ belongs to a switching
    hyperplane, say $\nbf^T \xbf + p = 0$ between $\Omega_{\lambda_1}$
    from $\Omega_{\lambda_2}$, as above. For simplicity, assume $t$ is
    small enough such that $\psi(\tau; \xbf_0)$ remains within
    $\Omega_{\lambda_2}$ for all $\tau \in [0, t]$.%
    \footnote{Note that if $t$ is larger, then subsequent switches to
      $\Omega_{\lambda_j}, j \ge 3$ are similar to the case above
      (where $\xbf_0$ was not on a switching hyperplane) and thus do
      not violate continuous differentiability of $\psi$ with respect
      to $\xbf_0$.}  Let $\xbf$ belong to a sufficiently small
    neighborhood of $\xbf_0$ such that for $\tau \in [0, t]$,
    \begin{align}\label{eq:psix}
      &\psi(\tau; \xbf) =
      \\
      \notag &\!\!\begin{cases} e^{\Abf_2 \tau}(\xbf + \Abf_2^{-1}
        \bbf_2) - \Abf_2^{-1} \bbf_2; &\xbf \in \Omega_{\lambda_2},
        \\
        e^{\Abf_1 \tau}(\xbf + \Abf_1^{-1} \bbf_1) - \Abf_1^{-1}
        \bbf_1; &\xbf \in \Omega_{\lambda_1}, \tau \le t_1,
        \\
        e^{\Abf_2 (\tau - t_1)}(\psi(t_1; \xbf) + \Abf_2^{-1} \bbf_2)
        - \Abf_2^{-1} \bbf_2; \!\!&\xbf \in \Omega_{\lambda_1}, \tau
        \ge t_1,
      \end{cases}
    \end{align}
    where $t_1 = t_1(\xbf)$ is now the solution to $\nbf^T \psi(t_1;
    \xbf) + p = 0$. It is not difficult to show that for $\xbf \in
    \Omega_{\lambda_1}$,
    \begin{align*}
      &\frac{\partial \psi(t; \xbf)}{\partial x_i} = e^{\Abf_2 t}
      \Big[e^{-\Abf_2 t_1} e^{\Abf_1 t_1} e_i + \frac{\partial
        t_1}{\partial x_i}
      \\
      &\qquad \times \Big(\!-\!\Abf_2 e^{-\Abf_2 t_1} e^{\Abf_1 t_1} (\xbf +
      \Abf_1^{-1} \bbf_1) + e^{-\Abf_2 t_1} \Abf_1 e^{\Abf_1 t_1} 
      \\
      &\qquad \times (\xbf + \Abf_1^{-1} \bbf_1)+ \Abf_2 e^{-\Abf_2 t_1} (\Abf_2^{-1} \bbf_2 -
      \Abf_1^{-1} \bbf_1)\Big)\Big],
    \end{align*}
    where $e_i$ is the $i$'th column of $\Ibf_n$. Taking the limit of
    this expression as $\xbf \to \xbf_0$ and using the facts that
    $\lim_{\xbf \to \xbf_0} t_1 = 0$ and $\Abf_1 \xbf_0 + \bbf_1 =
    \Abf_2 \xbf_0 + \bbf_2$, we get
    \begin{align*}
      &\lim_{\ \xbf \stackrel{\Omega_{\lambda_1}}{\to} \xbf_0}
      \!\!\!\! \frac{\partial \psi(t; \xbf)}{\partial x_i} = e^{\Abf_2
        t} e_i, \qquad \forall i \in \until{n},
      \\
      &\Rightarrow \!\!\! \lim_{\ \xbf
        \stackrel{\Omega_{\lambda_1}}{\to} \xbf_0} \!\!\!\!
      \frac{\partial \psi(t; \xbf)}{\partial \xbf} = e^{\Abf_2 t} =
      \!\!\! \lim_{\ \xbf \stackrel{\Omega_{\lambda_2}}{\to} \xbf_0}
      \!\!\!\! \frac{\partial \psi(t; \xbf)}{\partial \xbf},
    \end{align*}
    where the second equality follows directly
    from~\eqref{eq:psix}. Therefore, $\psi(t; \xbf_0)$ is continuously
    differentiable with respect to $\xbf_0$ on the entire $D \setminus
    S_{(-\infty, 0]}$.

    Finally, if $\xbf_0 \in S_{(-\infty, 0]}$, the same expression
    as~\eqref{eq:psi} or~\eqref{eq:psix} (depending on whether $\xbf_0
    \in H$ or not) holds for $\xbf_0$ and also for all $\xbf$ within a
    sufficiently small neighborhood of it \emph{that lie on the same
      system trajectory as $\xbf_0$}. This curve can be parameterized
    in many ways, one of which is given by $\psi(\tau;
    \xbf_0)$. Together with the analysis of the case $\xbf_0 \notin
    S_{(-\infty, 0]}$ above, this proves that \eqref{eq:stepi} exists
    and is continuous at $\tau_0$\footnote{We have indeed proved a
      slightly stronger result than~\eqref{eq:stepi} for $\xbf_0
      \notin S_{(-\infty, 0]}$, which we use in step (ii) below.}.

  \item In this step we introduce a function $\hat V$ that may not be
    smooth but satisfies properties similar
    to~\eqref{eq:conv-Lyap}. Let
    \begin{align*}
      \hat V(\xbf) \triangleq \int_0^\delta \|\psi(t; \xbf)\|^2 d t,
      \qquad \forall \xbf \in D,
    \end{align*}
    where $\delta$ is a constant to be chosen. It is straightforward
    to show that $f$ is globally Lipschitz. Using this and the GES
    of~\eqref{eq:swaf},
    the same argument as in~\cite[Thm 4.14]{HKK:02} shows that
    \begin{align}\label{eq:stepii12}
      2 c_1 \|\xbf\|^2 \le \hat V(\xbf) &\le \frac{2}{3} c_2
      \|\xbf\|^2,
    \end{align}
    for some $c_1, c_2 > 0$. Further, let
    \begin{align*}
      D_{\psi \circ \psi}(t; \tau; \xbf) \triangleq
      \frac{\partial}{\partial \tau} \psi(t; \psi(\tau; \xbf)), \qquad
      t, \tau \in \real, \xbf \in D.
    \end{align*}
    By the definition of $\psi$, we have the identity $ \psi(t; \psi(s
    - t; \xbf)) = \psi(s, \xbf)$, $t, s \in \real, \xbf \in D$.
    Taking $\frac{d}{d t}$ of both sides, we get $ \psi_t(t; \psi(s -
    t; \xbf)) - D_{\psi \circ \psi}(t; s - t; \xbf) = 0$, where
    $\psi_t(t; \xbf) = \frac{\partial \psi(t; \xbf)}{\partial
      t}$. Setting $s = t + \tau$, $ D_{\psi \circ \psi}(t; \tau;
    \xbf) = \psi_t(t; \psi(\tau; \xbf))$.  For the parallel
    of~\eqref{eq:conv-Lyap2}, we then have
    \begin{align*}
      \frac{d}{d \tau} \hat V(\psi(\tau; \xbf)) &= \int_0^\delta 2
      \psi(t; \psi(\tau; \xbf))^T D_{\psi \circ \psi}(t; \tau; \xbf) d
      t
      \\
      & =
      \int_0^\delta 2 \psi(t;
      \psi(\tau; \xbf))^T \psi_t(t; \psi(\tau; \xbf)) d t
      \\
      &= \int_0^\delta \frac{\partial}{\partial t} \|\psi(t;
      \psi(\tau; \xbf))\|^2 d t
      \\
      &= \|\psi(\delta; \psi(\tau; \xbf))\|^2 - \|\psi(\tau;
      \xbf)\|^2.
    \end{align*}
    Thus
    \begin{align}\label{eq:stepii3}
      \frac{d}{d \tau} \hat V(\psi(\tau; \xbf))\Big|_{\tau =
        0} \!\!= \|\psi(\delta; \xbf)\|^2 - \|\xbf\|^2 \le -2 c_3
      \|\xbf\|^2,
    \end{align}
    where the last inequality holds, as shown in~\cite[Thm
    4.14]{HKK:02}, for an appropriate choice of $\delta$ and $c_3 =
    \frac{1}{4}$. Finally, for the parallel of~\eqref{eq:conv-Lyap3},
    recall from step (i) that $\frac{\partial}{\partial \xbf} \psi(t;
    \xbf)$ exists and is continuous on $D \setminus S_{(-\infty,
      0]}$. Therefore, from~\eqref{eq:swaf}, we have
    \begin{align*}
      \frac{\partial}{\partial t} \frac{\partial \psi(t;
        \xbf)}{\partial \xbf} \!=\! \frac{\partial f}{\partial
        \xbf}(\psi(t; \xbf)) \frac{\partial \psi(t; \xbf)}{\partial
        \xbf}, \quad \left.\frac{\partial \psi(t; \xbf)}{\partial
          \xbf}\right|_{t = 0} = \Ibf_n,
    \end{align*}
    on $D \setminus (S_{(-\infty, 0]} \cup H)$. Using the global
    Lipschitzness of $f$ and the fact that $D \setminus S_{(-\infty,
      0]}$ is invariant under~\eqref{eq:swaf}, we have $
    \Big\|\frac{\partial \psi(t; \xbf)}{\partial \xbf}\Big\| \le e^{L
      t}$, for all $x \in D \setminus S_{(-\infty, 0]}$, where $L$ is
    the Lipschitz constant of $f$. The same argument as in~\cite[Thm
    4.14]{HKK:02} then yields
    \begin{align}\label{eq:stepii4}
      \Big\|\frac{\partial \hat V}{\partial \xbf}\Big\| \le
      \frac{2}{3} c_4 \|\xbf\|, \qquad \forall x \in D \setminus
      S_{(-\infty, 0]},
    \end{align}
    for some $c_4 > 0$.

  \item In this step, we follow~\cite[Thm 3 \& 4]{JK:63} to construct
    $V$ as an smooth approximation to $\hat V$ and show that it
    satisfies~\eqref{eq:conv-Lyap}. Since $f$ is globally Lipschitz,
    $\psi(t; \xbf)$ is Lipschitz in $\xbf$ (see, e.g.,~\cite[Ch
    5]{PH:82}) and so is $\hat V$. This, together
    with~\eqref{eq:stepii3}, satisfies all the assumptions
    of~\cite[Thm 4]{JK:63}, which in turn guarantees the existence of
    an infinitely smooth $V$ such that
    \begin{subequations}
      \begin{align}
        |V(\xbf) - \hat V(\xbf)| &< \frac{1}{2} \hat V(\xbf), \qquad
        \forall \xbf \in D,
        \\
        \frac{\partial V}{\partial \xbf} f(\xbf) &< -c_3
        \|\xbf\|^2, \label{eq:Kurzweil2}
      \end{align}
    \end{subequations}
    for all $\xbf \in D$. Equation~\eqref{eq:conv-Lyap1} follows
    immediately from~\eqref{eq:Kurzweil2} and~\eqref{eq:stepii12}. To
    prove~\eqref{eq:conv-Lyap3}, we note that the same construction of
    $V$ as in~\cite[Thm 3 \& 4]{JK:63} satisfies
    \begin{align*}
      \Big\|\frac{\partial V}{\partial \xbf} - \frac{\partial \hat
        V}{\partial \xbf}\Big\| &< \frac{1}{2} \Big\|\frac{\partial
        \hat V}{\partial \xbf}\Big\|, \qquad \forall \xbf \in D
      \setminus S_{(-\infty, 0]},
    \end{align*}
    if the constants $\xi_{i, k}$ and $\zeta_{i, k}, i, k = \dots, -2,
    0, 2, \dots$ (and consequently the corresponding $\bar r_{i, k},
    i, k = \dots, -2, 0, 2, \dots$) are chosen sufficiently
    small. This, together with~\eqref{eq:stepii4},
    guarantees~\eqref{eq:conv-Lyap3}, completing the proof.
  \end{enumerate}
  \vspace*{-15pt}
\end{proof}

\renewcommand{\thesection}{Appendix \Alph{section}}
\section{Additional Proofs}\label{app:pf}
\renewcommand{\thesection}{\Alph{section}}

\begin{proof}[Proof of Lemma~\ref{lem:affinity}]
  Pick $\cbf' \in \real^{n'}$ and let $\xbf^*$ be the unique solution
  of~\eqref{eq:gen-eq}. Since $\bigcup_{\lambda \in \Lambda}
  \Psi_\lambda = \real^n$, let  $\lambda \in \Lambda$ with
  \begin{align}\label{eq:Psi-lambda}
    \Wbf_3 \xbf^* + \bar \cbf \in \Psi_\lambda.
  \end{align}
  If $\Wbf_3 \xbf^* + \bar \cbf$ lies on the boundary of more than one
  $\Psi_\lambda$, pick one arbitrarily. Therefore, $\xbf^*$ satisfies
  \begin{align*}
    \xbf^* &= [(\Wbf_1 + \Wbf_2 \Fbf_\lambda \Wbf_3) \xbf^* + \Wbf_2
    (\Fbf_\lambda \bar \cbf + \fbf_\lambda) + \cbf']_\zeros^\mbf.
  \end{align*}
  From~\eqref{eq:h-pa}, it follows that $h'$ has the
  form~\eqref{eq:h'} with $\lambda' \triangleq (\lambda, \sigmab)$ and
  $\Lambda' = \Lambda \times \zls^{n'}$. The quantities
  $\Fbf'_{\lambda'}, \fbf'_{\lambda'}, \Gbf'_{\lambda'},
  \gbf'_{\lambda'}$ also have the same form as in~\eqref{eq:h-pa}
  except that here
  \begin{align*}
    \Wbf &= \Wbf_1 + \Wbf_2 \Fbf_\lambda \Wbf_3,
    \\
    \fbf'_{\lambda'} &= (\Ibf - \Sigmab^\ell \Wbf)^{-1} \Sigmab^\s \mbf + (\Ibf - \Sigmab^\ell \Wbf)^{-1} \Sigmab^\ell \Wbf_2 (\Fbf_\lambda \bar \cbf + \fbf_\lambda).
  \end{align*} 
  The proof is complete noting that $\bigcup_{\lambda' \in
    \Lambda'} \Psi'_{\lambda'} = \real^{n'}$ since any $\cbf' \in
  \real^{n'}$ must be in at least one $\Psi'_{\lambda'}$ by
  construction.
\end{proof}

\hspace{0.75em} \emph{Proof of Lemma~\ref{lem:h-lip-gen}:}
  Pick any $\cbf, \hat \cbf \in \real^n$. Since all the sets
  $\Psi_\lambda$ are convex, the line segment $\gamma \triangleq
  \setdefb{\big(\theta, (1 - \theta) \cbf + \theta
    \hat \cbf\big)}{\theta \in [0, 1]}$ joining $\cbf$ and $\hat \cbf$ can
  be broken into $k \le |\Lambda| < \infty$ pieces such that $\gamma =
  \bigcup_{i = 1}^k \gamma_i, \gamma_i \triangleq
  \setdefb{\big(\theta, (1 - \theta) \cbf + \theta
    \hat \cbf\big)}{\theta \in [\theta_{i - 1}, \theta_i]}, \theta_0 = 0,
  \theta_k = 1$ and each $\gamma_i \subset \Psi_{\lambda_i}$ for some
  $\lambda_i \in \Lambda$. Let $\cbf_i \triangleq (1 - \theta_i)
  \cbf + \theta_i \hat \cbf$. Then,
  \begin{align*}
    &\|h(\cbf) - h(\hat \cbf)\| = \Big\|\sum_{i = 1}^k
      \big(h(\cbf_{i-1}) - h(\cbf_i)\big) \Big\|
    \\
    &\le \sum_{i = 1}^k \|h(\cbf_{i-1}) - h(\cbf_i)\| = \sum_{i = 1}^k
    \|\Fbf_{\lambda_i}(\cbf_{i-1} - \cbf_i)\|
    \\
    &\le \Big[\!\max_{\lambda \in \Lambda}\|\Fbf_\lambda\| \Big]
    \sum_{i = 1}^k \|\cbf_{i-1} - \cbf_i\| = \Big[\!\max_{\lambda \in
      \Lambda}\|\Fbf_\lambda\| \Big] \|\cbf - \hat \cbf\|. \quad \QED
  \end{align*} 

\vspace*{-3ex}

\begin{IEEEbiography}[{\includegraphics[width=1in, height=1.25in, clip, keepaspectratio]{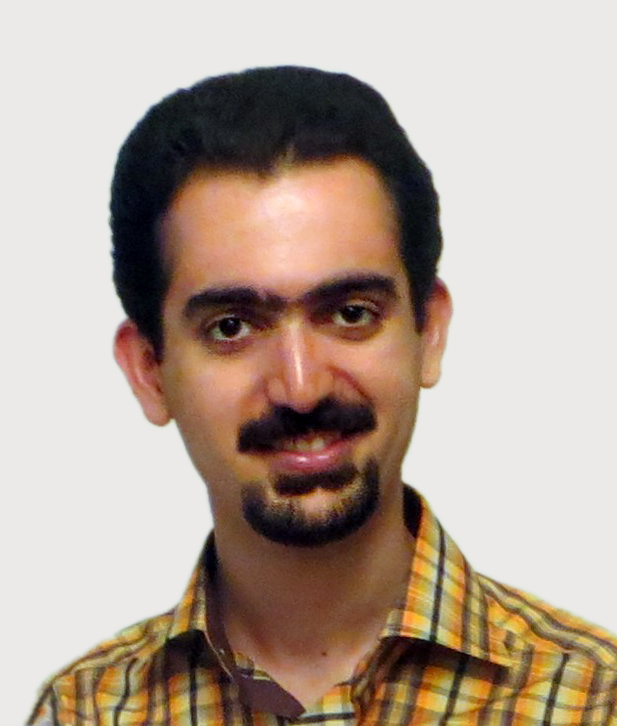}}]{Erfan Nozari}
  received his B.Sc. degree in Electrical Engineering-Control in 2013
  from Isfahan University of Technology, Iran and Ph.D. in Mechanical
  Engineering and Cognitive Science in 2019 from University of
  California San Diego. He is currently a postdoctoral researcher at
  the University of Pennsylvania Department of Electrical and Systems
  Engineering. He has been the (co)recipient of the 2019 IEEE
  Transactions on Control of Network Systems Outstanding Paper Award,
  the Best Student Paper Award from the 57th IEEE Conference on
  Decision and Control, the Best Student Paper Award from the 2018
  American Control Conference, and the Mechanical and Aerospace
  Engineering Distinguished Fellowship Award from the University of
  California San Diego. His research interests include dynamical
  systems and control theory and its applications in computational and
  theoretical neuroscience and complex network systems.
\end{IEEEbiography}

\vspace*{-3ex}

\begin{IEEEbiography}
  [{\includegraphics[width=1in,height=1.25in,clip,keepaspectratio]{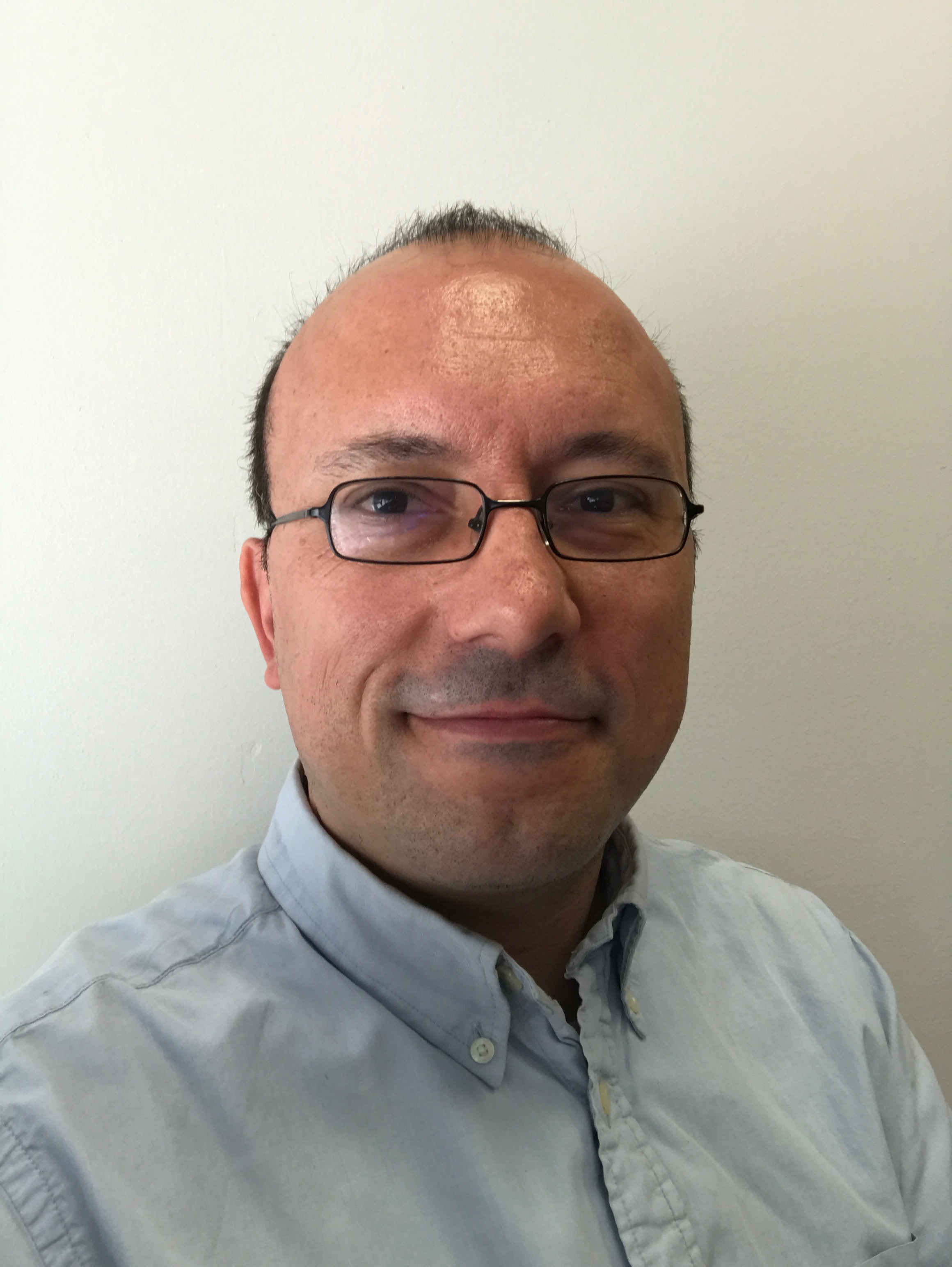}}]
  {Jorge Cort\'es} (M'02-SM'06-F'14) received the Licenciatura degree
  in mathematics from Universidad de Zaragoza, Spain, in 1997, and the
  Ph.D. degree in engineering mathematics from Universidad Carlos III
  de Madrid, Spain, in 2001. He held postdoctoral positions with the
  University of Twente, The Netherlands, and the University of
  Illinois at Urbana-Champaign, USA. He was an Assistant Professor
  with the Department of Applied Mathematics and Statistics,
  University of California, Santa Cruz, USA, from 2004 to 2007. He is
  currently a Professor in the Department of Mechanical and Aerospace
  Engineering, University of California, San Diego, USA. He is the
  author of Geometric, Control and Numerical Aspects of Nonholonomic
  Systems (Springer-Verlag, 2002) and co-author (together with
  F. Bullo and S. Mart{\'\i}nez) of Distributed Control of Robotic
  Networks (Princeton University Press, 2009).  His current research
  interests include distributed control and optimization, network
  science, resource-aware control, decision making under uncertainty,
  and distributed coordination in power networks, robotics, and
  transportation.
\end{IEEEbiography}

\end{document}